\theoremstyle{plain}			% use "default" font
\newtheorem{thm}{Theorem}[section]
\newtheorem{lemma}[thm]{Lemma}
\newtheorem{prop}[thm]{Proposition}
\newtheorem{cor}[thm]{Corollary}
\theoremstyle{definition}		% use "definition-style" font for the rest
\newtheorem{defn}[thm]{Definition}
\newtheorem{examp}[thm]{Example}
\def\row#1#2{{#1}_1,\ldots ,{#1}_{#2}}
\begin{document}

\title{{\Large \bf{Nonconvergent Electoral Equilibria under Scoring Rules: Beyond Plurality
%\footnote{We also acknowledge Konstantin Tchernov who made important contributions at the initial stages of this paper.}
}}}
%about the article that should go on the front page should be
%placed here. General acknowledgments should be placed at the end of the article.}

\author{Dodge Cahan, John McCabe-Dansted and Arkadii Slinko}
\date{}
\maketitle
\thispagestyle{empty}

\begin{abstract}
We use Hotelling's spatial model of competition to investigate the position-taking behaviour of political candidates under a class of electoral systems known as scoring rules. In a scoring rule election, voters rank all the candidates running for office, following which the candidates are assigned points according to a vector of nonincreasing scores. Convergent Nash equilibria in which all candidates adopt the same policy were characterised by Cox \cite{cox1}. Here, we investigate nonconvergent equilibria, where candidates adopt divergent policies. We identify a number of classes of scoring rules exhibiting a range of different equilibrium properties. For some of these, nonconvergent equilibria do not exist. For others, nonconvergent equilibria in which candidates cluster at positions spread across the issue space are observed. In particular, we prove that the class of convex rules does not have Nash equilibria (convergent or nonconvergent) with the exception of some derivatives of Borda rule.  Finally, we examine the special cases of four-, five- and six- candidate elections. In the former two cases, we provide a complete characterisation of nonconvergent equilibria.
\end{abstract}

\section{Introduction}
\label{intro}

Hotelling's \cite{hotelling} spatial model of competition, first introduced in 1929, has had a large and varied influence on a  number of fields. It has been applied not only in the original context of firms selecting geographic locations along ``Main Street" so as to maximise their share of the market, but also to that of producers deciding on how much variety to incorporate into their products \cite{chamberlin}. Downs \cite{downs} has  adapted it with minor modifications to model an election: in particular, the ideological position-taking behaviour of political candidates in their effort to win votes. 

The model in its simplest form features a number of candidates (firms) adopting positions on a one-dimensional manifold, usually taken to be the interval $[0,1]$, along which the voters'  ideal positions (consumers) are distributed. The preferences of the voters over the candidates are then determined by the distance between their own ideal positions and those advocated by the candidates. The candidates adopt positions so as to maximise their share of the vote (market).\footnote{See Stigler's \cite{stigler} argumentation for this assumption and a discussion of it in \cite{denzaukatsslutsky}.} In the economic interpretation of this model where firms compete for market share, competition on price is excluded and customers buy from the closest firm to minimise transportation costs. 

In the majority of the literature, the voters only have one vote, which they allocate to their favourite candidate. That is, the electoral system is taken to be plurality rule. The voters' second, third and other preferences do not come into play. Economically, this is akin to saying customers only patronise the nearest firm, with the distribution of the more distant firms being irrelevant to them. 

In many situations, however, preferences other than first do indeed matter. Far from all elections are held under plurality rule: electoral systems, both in use and theoretical, are diverse. So too are the incentives that candidates are faced with under different electoral systems---what is the optimal strategy under one system may be subpar under a different one.  One very general class of electoral systems that take into account more than just the voters' single most preferred candidate are the scoring rules. 

In an election held under a scoring rule, each voter submits a preference ranking of all $m$ candidates running for office, whereby $s_i$ points are assigned to the candidate in the $i$th position on the voter's ranking. Thus, we specify a scoring rule by an $m$-vector of real nonnegative numbers, $s=(s_1,\ldots,s_m)$, with  $s_1\geq \cdots \geq s_m\geq 0$ and $s_1>s_m$. Well-known examples of scoring rules include plurality, Borda's rule and antiplurality, given by score vectors  $s=(1,0,\ldots,0)$, $s=(m-1,m-2,\ldots,0)$ and $s=(1,\ldots,1,0)$, respectively.  Different score vectors may define the same scoring rule. In particular, an affine transformation of the score vector $s'=(\alpha s_1+\beta, \ldots, \alpha s_m+\beta)$, where $\alpha>0$ and $\beta$ are real numbers, defines the same rule as $s$.\par

In the election---according to Stigler's thesis---the candidates adopt positions with the aim of maximising the total number of points received from all the electorate and hence, one would expect, the equilibrium strategies will depend on the particular scoring rule in use. This dependence is the focus of our investigation.

This is not the only possible interpretation of the scoring rule. Another plausible view, due to Cox \cite{cox2}, is that the coordinates of the score vector represent probabilities:  if we normalise the score vector $s$ so that the sum of its coordinates is 1, $s_i$ can now be seen as the likelihood that a voter votes for the $i$th nearest candidate. Indeed, ideological proximity is not the only factor at the ballot box: a single issue out of many may put a voter off a candidate who, on the whole, is of a similar ideological bent; or, it could simply be down to personal charisma, experience, prejudice or any number of other similar nonpolicy reasons. In the economic interpretation, it is also natural to assume that consumers patronise more distant firms with some probability---occasionally, it turns out to be more convenient to purchase from a more distant firm simply because one happens to be passing through the vicinity, for example. The probability implied by such an interpretation is of an ordinal nature---there is no dependency on absolute distance, as occurs in most of the extant literature on probabilistic voting (see, e.g., \cite{coughlin,duggan}).

As mentioned, most of the literature considers the case of plurality elections. Exceptions are Cox \cite{cox1} and Myerson~\cite{myerson1}, who consider a number of alternative electoral systems, including scoring rules. Cox, in particular, provides a valuable characterisation of convergent Nash equilibria (CNE), that is, equilibria in which all candidates adopt the same policy position (see Theorem \ref{CNE} of this paper for the precise formulation). He identifies three classes to which a scoring rule may belong: ``best-rewarding" rules which never possess CNE; ``worst-punishing" rules which, to the contrary, do allow CNE in which all candidates adopt a position located within a certain interval; and, an intermediate case between the two, for which a unique CNE exists at the median voter's ideal position. 

As one infers from the nomenclature, under each of these three classes of scoring rules candidates are faced with different incentives. Myerson~\cite[p.~677]{myerson1}---who introduced the terminology we are using\footnote{ In \cite{cox2}, Cox refers to ``first-place rewarding", ``intermediate" and ``last-place punishing" rules.}---explains the competing incentives as follows:
under a best-rewarding rule, ``the candidate gains more from moving up in the preferences of voters who currently rank her near the top"; on the contrary, under a worst-punishing rule ``the candidate gains more from moving up in the preferences of voters who currently rank her near the bottom."

Cox stopped short of investigating the existence of nonconvergent Nash equilibria (NCNE) for general scoring rules, i.e., those equilibria in which the positions adopted by the candidates are not all the same. For plurality, however, the situation is well-studied and a full characterisation of NCNE is given by Eaton and Lipsey \cite{eatonlipsey} and Denzau et al. \cite{denzaukatsslutsky}. The nonexistence of NCNE for antiplurality is also known, as Cox \cite[p.~93]{cox1} points out. For general scoring rules, however, the picture is unclear, and little work has been done in this direction. Cox conjectures \cite[p.~93]{cox1} that ``nonconvergent equilibria are at best rare for [worst-punishing] scoring functions, but that they are fairly common for [best-rewarding] scoring functions." We look to fill this gap in the literature.

Since a general characterisation of NCNE appears to be intractable, our approach is to consider the problem in large classes of rules satisfying various additional conditions. For several broad classes of scoring rules, we find that NCNE do not exist at all or seldom exist. For rules having scores that are  convex (decrease more rapidly at the top end of the score vector than at the bottom end), NCNE are impossible except for some derivatives of Borda rule (Theorem \ref{convexscores2}).  For rules whose scores are concave (decrease more rapidly at the bottom end of the score vector than at the top end) we do not have such a perfect answer. Our results for this class of rules are applicable to a larger class of weakly concave rules: we prove that if an NCNE exists, then it should be highly asymmetric (Theorem~\ref{WPnoNCNE2}) and we showed that such asymmetric equilibria for some weakly concave rules do indeed exist (Example~\ref{8-4NCNE}). The existence of NCNE for rules with concave scores remain an intriguing open question. 
%FINDME
%
We also look at a class of rules that are highly best-rewarding, that is, the reward for being ranked first by voters is particularly large relative to other rankings. For these rules as well, no NCNE exist (Theorem \ref{highlyBR}).

On the other hand, two other classes investigated, amongst others, do allow NCNE and we can calculate many of them. We identify a class of rules allowing NCNE with multiple candidates clustered more or less evenly along the issue space (Theorem \ref{multipositional}).  We also characterize NCNE in which there are two symmetrically located clusters with the same number of candidates (Theorem \ref{bipositional}).

In the special cases of four- and five-candidate elections, things are simpler and we are able to provide a complete characterisation of the rules allowing NCNE (Theorems \ref{4cand} and \ref{5cand2}).  We also examine six-candidate elections (Theorem \ref{6cand2}). %The cases investigated and their equilibrium properties are briefly summarised in Table 1.

The rest of this paper is organised as follows. First, we introduce notation and the main assumptions of our model in Section~\ref{themodel}. Then, in Section~\ref{preliminaries}, we derive some preliminary results before moving on to investigate rules that do not allow NCNE in Section \ref{ruleswithnoNCNE}. Next, in Section~\ref{ruleswithNCNE}, we look at rules that do admit NCNE. In Section~\ref{specialcases}, we turn our attention to the special cases of four-, five- and six-candidate elections. In Section~\ref{comres} we briefly outline the algorithm we used for finding NCNE computationally. Finally, we briefly review the related literature in Section~\ref{relatedliterature}, and Section \ref{conclusion} concludes.

\section{The model}
\label{themodel}

In our model there is a continuum of voters, assumed to have ideal positions uniformly distributed\footnote{The voters' ideal positions need not actually be uniformly distributed. As has been pointed out in other papers such as \cite{aragonesxefteris}, it is enough that the candidates believe the distribution to be uniform or are working under this simplifying assumption.} on the interval $[0,1]$, the issue space, on which candidates adopt positions. Since the two- and three-candidate cases are well-known---in the first we have the classical median voter result, and in the latter case no NCNE exist (see the end of this section)---we assume there are $m\geq 4$ candidates. Candidate $i$'s position is denoted $x_i$ and a strategy profile $x=(x_1,\ldots,x_m)\in [0,1]^m$ specifies the positions adopted by all the candidates. For a given strategy profile $x$, denote by $x^1,\ldots,x^q$ the \textit{distinct} positions that appear in $x$, labelled so that $x^1<\cdots <x^q$. In an NCNE, we will always have $q\geq 2$. Let $n_i\in \mathbb{N}$ denote the number of candidates adopting position $x^i$. Then we have $\sum_{i=1}^qn_i=m$. 
Given a strategy profile $x$, we  will often use the equivalent notation
$$
x=((x^1,n_1),\ldots,(x^q,n_q)).
$$
We also use the notation $[n]=\{1,\ldots,n\}$ and if $I=[a,b]\subseteq [0,1]$ is an interval with end points $a<b$, its length is denoted $\ell(I)=b-a$ and this is the measure of voters in the interval, since the distribution is uniform. 

We assume the voters are not strategic and have single-peaked, symmetric utility functions. Hence, a voter ranks the candidates according to the distance between her own personal ideal position and the positions adopted by the candidates. If  $n_i\geq 2$, that is, two or more candidates adopt position $x^i$, then all voters will be indifferent between these candidates. However, the scoring rule requires that they submit a strict ranking. In this case, a voter separates these candidates in her ranking by means of a fair lottery. Thus, for example, if $x^i$ is the position nearest to a given voter's ideal position, the voter allocates the first $n_i$ places in her ranking randomly, with each candidate at $x^i$ having a probability of $1/n_i$ of being assigned any one of these places.\footnote{Alternatively we could allow indifferences on voters' ballots and modify the scoring rule accordingly.} Note that we assume voters can only ever be indifferent between candidates if they occupy the same position. %, since the measure of voters at the median point between two occupied positions is zero. 

Candidate $i$'s score, $v_i(x)$, is the total number of points received on integrating across all voters. The candidates are assumed to maximise $v_i(x)$---that is, they are score (share) maximisers. Information is assumed to be complete and no candidate has an advantage---they all adopt positions simultaneously.

We hope that the following example will aid the reader in understanding how scores are calculated for a given configuration of candidate positions. 

\begin{examp}\label{calculationexample} Consider a three-candidate election contested between candidates $i$, $j$ and $k$ and held under a scoring rule $s=(s_1,s_2,s_3)$. Consider the profile $x=((x^1,2),(x^2,1))$, where $x_i=x_j=x^1$ and $x_k=x^2$. This situation is illustrated in Figure \ref{examplepicture}. The voters in the interval $I_1=[0,(x^1+x^2)/2]$ are indifferent between candidates $i$ and $j$, but prefer them both to $k$. The voters in the interval $I_2=[(x^1+x^2)/2,1]$ have $k$ as their unique favourite candidate, and they are indifferent between $i$ and $j$. Then $i$ and $j$ both receive the score
$$v_i(x)=v_j(x)=\left(\frac{s_1+s_2}{2}\right)\ell(I_1)+\left(\frac{s_2+s_3}{2}\right)\ell(I_2).$$
The first term comes from the fact that all voters in $I_1$ flip a coin to decide whether to rank $i$ first and $j$ second, or the other way around. Hence, $i$ receives $s_1$ from half these voters and $s_2$ from the other half. The second term follows from similar considerations with respect to the voters in $I_2$.

\begin{figure*}
\centerline{
\mbox{\includegraphics[width=5.00in,height=1.50in]{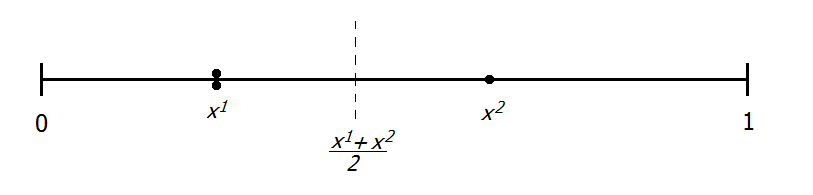}}
}
\caption{ The situation in Example \ref{calculationexample}}
\label{examplepicture}
\end{figure*}

The score garnered by $k$ is
$$v_k(x)=s_1\ell(I_2)+s_3\ell(I_1).$$
No ``sharing" of votes occurs here, since no voters are indifferent between $k$ and another candidate.

Note that $x$ is not an NCNE since profitable deviations are possible. Candidate $k$, for example, can move inwards towards the other two candidates to gain a larger score. 
\end{examp}

\iffalse
\begin{examp}
The workings of a positional scoring rule can be illustrated on the following example. Consider a profile profile $x=((x^1,2), (x^2,1), (x^3,2))$ shown below.
\begin{center}
\includegraphics[width=11.0cm, height=4.0cm]{picture9talk.png}
\end{center}
The score of a candidate~1 positioned at $x^1$ in the given profile $x$ would be
\[
v_1(x)=\frac{s_1+s_2}{2}\frac{x_1+x_2}{2}+\frac{s_2+s_3}{2}\frac{x_3-x_2}{2}+\frac{s_4+s_5}{2}\left(1-\frac{x_1+x_3}{2}\right),
\]
where the first summand represents the contribution to the score of voters from the interval $[0, \frac{x^1+x^2}{2}]$, the second summand represents the contribution to the score of voters from the interval $[\frac{x^1+x^2}{2}, \frac{x^1+x^3}{2}]$, and the third summand represents the contribution to candidate~1's score from the remaining voters.
\end{examp}
\fi

Often we will consider what happens to a candidate's score on making incremental deviations from her current position, or deviations to points infinitesimally close to other occupied positions. For example, given an initial profile $x=(x_1,\ldots,x_i,\ldots,x_m)$, if we write that $x'$ is the profile where candidate $i$ moves to $x_i+\epsilon$, what we mean is that $x'=(x_1,\ldots,x_i+\epsilon,\ldots,x_m)$, where $\epsilon>0$ is taken to be small enough that there are no occupied positions between $x_i$ and $x_i+\epsilon$; moreover, when $\epsilon$ is small enough, for all $j,k\in [q]$ we have $|x_i-x_j|>|x_i-x_k|$ if and only if $|x_i+\epsilon-x_j|>|x_i+\epsilon-x_k|$. This means that after this move only candidates previously at the same location as $x_i$ change their positions in the voters' rankings. As $\epsilon$ tends to zero, the measure of voters between $x_i$ and $x_i+\epsilon$ tends to zero.

Given a scoring rule $s=(s_1,\ldots,s_m)$, sometimes we will consider \textit{subrules} of $s$. A subrule of $s$ is a vector $s'=(s_i,s_{i+1},\ldots,s_{i+j})$ where $i,j\geq 1$ and $i+j\leq m$. Thus, if $s_i>s_{i+j}$, a subrule is itself a scoring rule corresponding to an election with $j+1$ candidates. If $s_i=\cdots =s_{i+j}$, then $s'$ does not define a scoring rule and we say $s'$ is a \textit{constant subrule} of $s$. An important parameter of the rule will be the average score denoted $\bar{s}=\frac{1}{m}\sum_{i=1}^ms_i$. \par\medskip

%We also use the notation $[q]=\{1,\ldots,q\}$ and if $[a,b]\subseteq [0,1]$ is an interval with end points $a<b$, its length is denoted $\ell([a,b])=b-a$. 

Our equilibrium concept is the standard Nash equilibrium in pure strategies. To define it we introduce the following standard notation. Let $x^*=(x_1^*,\ldots,x_{m}^*)$ be a strategy profile. Then by $(x_i,x_{-i}^*)= (x_1^*,\ldots,x_{i-1}^*,x_i,x_{i+1}^*,\ldots,x_m^*)$ we mean the profile where the $i$th candidate adopts position $x_i$ instead of $x_i^*$, whereas all other candidates do not change their strategies.

\begin{defn}\label{Nasheq} 
A strategy profile $x^*=(x_1^*,\ldots,x_m^*)$ is in Nash equilibrium if for all $i\in [m]$ we have
$$
v_i(x^*)\geq v_i(x_i,x_{-i}^*), 
$$
for all $x_i \in [0,1]$. 
A Nash equilibrium (NE) $x$ is said to be a \textit{convergent Nash equilibrium} (CNE) if all candidates adopt the same position, i.e.,  $x=((x^1,m))$. 
If, in a Nash equilibrium, at least two candidates adopt distinct positions, we say it is a \textit{nonconvergent Nash equilibrium} (NCNE).
\end{defn}

\begin{defn}
If a strategy profile 
$$
x=((x^1,n_1),\ldots,(x^q,n_q))
$$
is an NE, then we will say that $(\row nq)$ is its {\em type}.  For example, the type of a CNE is $(m)$.
\end{defn}

Before we begin presenting our results, we restate Cox \cite{cox1} characterisation of CNE for arbitrary scoring rules.

\begin{thm}[Cox \cite{cox1}]\label{CNE} Given a scoring rule $s$, the profile $x=((x^1,m))$ is a CNE if and only if 
$$c(s,m)\leq x^1\leq 1-c(s,m),$$
where $\displaystyle c(s,m)=\frac{s_1-\bar{s}}{s_1-s_m}.$
 \end{thm}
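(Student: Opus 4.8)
The plan is to prove Cox's characterisation of convergent Nash equilibria by analysing the marginal incentive of a single candidate to deviate from a fully convergent profile $x=((x^1,m))$. First I would set up the configuration: all $m$ candidates sit at the common position $x^1$, so every voter is indifferent among all of them and, via the fair lottery, assigns each candidate an expected score equal to the average $\bar{s}=\frac{1}{m}\sum_{i=1}^m s_i$; hence $v_i(x^*)=\bar{s}$ for every $i$. The key is to compute the score obtained by a single deviating candidate who moves a small distance $\epsilon$ away from $x^1$, say to the right, and to determine when such a deviation is profitable.

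The main computation I would carry out is the following. Suppose candidate $i$ moves to $x^1+\epsilon$ with $\epsilon>0$ small. Voters split into two groups relative to the cut point $x^1+\epsilon/2$: those to the left (measure roughly $x^1$) now strictly prefer the $m-1$ stationary candidates to the deviator, who is ranked last and receives $s_m$ from them; those to the right (measure roughly $1-x^1$) now strictly prefer the deviator, who is ranked first and receives $s_1$ from them. Taking the limit as $\epsilon\to 0^+$, the deviator's score tends to $s_1(1-x^1)+s_m x^1$. For the convergent profile to be an equilibrium against rightward moves we therefore need $\bar{s}\geq s_1(1-x^1)+s_m x^1$, which rearranges to $x^1\geq \frac{s_1-\bar{s}}{s_1-s_m}=c(s,m)$. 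By the symmetric argument for a leftward deviation (swapping the roles of left and right), the no-deviation condition becomes $x^1\leq 1-c(s,m)$. Combining the two inequalities gives exactly the claimed interval $c(s,m)\leq x^1\leq 1-c(s,m)$.

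I would then argue that these local (infinitesimal) deviations are in fact the binding ones, so that checking them suffices for full Nash equilibrium. The point is that a deviation to any position $y\neq x^1$ yields the deviator a score of the form $s_1\cdot(\text{measure of voters closer to }y)+s_m\cdot(\text{measure closer to }x^1)$, since the deviator is either strictly first or strictly last for each voter, and the stationary mass is concentrated at the single point $x^1$; the most favourable such deviation is achieved in the limit as $y\to x^1$, because moving $y$ further away only shrinks the measure of voters ranking the deviator first. Thus the infinitesimal deviation dominates all finite ones, and the two inequalities above are both necessary and sufficient.

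The main obstacle, and the step requiring the most care, is the limiting argument at $\epsilon\to 0^+$: one must handle the measure-zero set of indifferent voters correctly and justify that the relevant supremum of deviation payoffs equals the one-sided limit $s_1(1-x^1)+s_m x^1$ rather than the value $\bar{s}$ attained exactly at $x^1$. In particular I would note the discontinuity in the deviator's payoff as $\epsilon$ crosses $0$ (from $\bar{s}$ at $\epsilon=0$ to the limiting value just computed), and argue that Nash equilibrium requires $\bar{s}$ to weakly exceed this right-hand limit. Once this one-sided comparison is pinned down, the rest is the elementary algebraic rearrangement into the stated bounds, together with the observation that $c(s,m)\leq 1-c(s,m)$ is equivalent to $c(s,m)\leq 1/2$, which is guaranteed whenever a CNE exists.
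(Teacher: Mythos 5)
Your proof is correct, but note that the paper itself does not prove this statement: it is quoted verbatim from Cox (1987) as background (the paper only proves results about nonconvergent equilibria), so there is no in-paper proof to compare against. Your argument is the standard and complete one, and it is fully consistent with the machinery the paper does develop: your observation that a deviator's payoff at $y\neq x^1$ equals $s_1\cdot(\text{measure closer to }y)+s_m\cdot(\text{measure closer to }x^1)$, and is monotone in the distance $|y-x^1|$, is exactly the content of the paper's Proposition~\ref{insidetheinterval1} (linearity with slopes $\pm(s_1-s_m)/2$ in the end intervals), so the one-sided limits are indeed the binding deviations. You also handle the delicate point correctly: since the supremum of deviation payoffs is approached but never attained, the equilibrium condition is the weak inequality $\bar{s}\geq s_1(1-x^1)+s_m x^1$ (and its mirror image), which rearranges precisely to $c(s,m)\leq x^1\leq 1-c(s,m)$; this also automatically rules out $x^1\in\{0,1\}$ because $c(s,m)>0$.
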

 
For the inequality in Theorem~\ref{CNE} to be satisfied for some $x^1$, it must be that $c(s,m)\leq 1/2$. The number $c(s,m)$ encodes important information about the competing incentives characteristic of a given scoring rule: in particular, it measures the first-to-average drop in the value of the points relative to the first-to-last drop. Motivated by the above theorem, Cox defined a \textit{best-rewarding rule}\footnote{Terminology of R. Myerson \cite{myerson1}.} to be a rule with $c(s,m)> 1/2$. Here, the incentive for candidates to receive first place in a voter's ranking outweighs the incentive to receive only an average one. On the other hand, if $c(s,m)< 1/2$ the rule is said to be \textit{worst-punishing}. For such rules, receiving a first-place ranking is not all that better than an average ranking---the main thing is to avoid being ranked last. Rules satisfying  $c(s,m)=1/2$ are called \textit{intermediate}. Hence, Cox's result says that a rule has CNE if and only if it is worst-punishing or intermediate, and the possible equilibrium positions are in some interval of sufficiently centrist points. 

The parameter $c(s,m)$ attains its maximum value of $c(s,m)=1-1/m$ for plurality rule, given by $s=(1,0,\ldots,0)$. Its minimum value of $c(s,m)=1/m$, on the other hand, occurs for the  antiplurality (veto) rule, given by $s=(1,\ldots,1,0)$. It is closely related to Myerson's \cite{myerson1} ``Cox threshold", which in our notation is $\bar{s}$.\par\medskip

In addition, Cox also observed that in any NCNE the most extreme positions, $x^1$ and $x^q$, must be occupied by at least two candidates and, hence, in the case of a three-candidate election, no NCNE exist.

\section{Preliminaries}
\label{preliminaries}

Suppose in a profile $x$ there are $m$ candidates $1,2,\ldots, m$ in positions $x_1,\ldots, x_m$ occupying locations $x^1,\ldots,x^q$ and suppose candidate $i$ is not alone at the location she occupies. We will now investigate how the score of candidate $i$  changes when $i$ changes  position from $x_i$ to $t\in [0,1]$ while the positions of all candidates except $i$ are fixed. That is we are interested in the function $v_i(t,x_{-i})$.

\begin{prop}
\label{insidetheinterval1}
In the intervals $(0,x^1)$ and $(x^q,1)$  the function $v_i(t,x_{-i})$ is  linear  with slopes $(s_1-s_m)/2$ and $-(s_1-s_m)/2$, respectively.
\end{prop}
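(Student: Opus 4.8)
The plan is to compute $v_i(t,x_{-i})$ directly as an explicit piecewise-affine function of $t$ on each interval and read off the slope. I would treat $(0,x^1)$ in detail and recover $(x^q,1)$ by a reflection argument. Throughout, since $t\in(0,x^1)$ places candidate $i$ strictly to the left of every rival (all rivals sit at positions $\geq x^1>t$), candidate $i$ is the unique leftmost candidate and is never tied with any rival except on a set of voters of measure zero. The hypothesis that $i$ is not alone at her original location guarantees that after the move the rivals still occupy $x^1,\dots,x^q$, so the configuration of rivals is unaffected.

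First I would describe candidate $i$'s rank as a function of the voter's position. For a voter at $y$, a rival at position $p$ is strictly closer to $y$ than $i$ precisely when $y$ lies to the right of the midpoint $(p+t)/2$. Writing $p_1\le\cdots\le p_{m-1}$ for the rivals' positions and $\mu_k=(p_k+t)/2$ for the corresponding midpoints, candidate $i$'s rank at $y$ is $1$ plus the number of midpoints to the left of $y$. Because every rival is strictly to the right of $t$, one checks that $0<\mu_1\le\cdots\le\mu_{m-1}<1$ on the whole interval (here $\mu_{m-1}=(x^q+t)/2<1$ uses $x^q\le 1$ and $t<1$). Hence $[0,1]$ splits into the rank-$1$ interval $(0,\mu_1)$, the rank-$m$ interval $(\mu_{m-1},1)$, and intermediate rank-$(k+1)$ intervals $(\mu_k,\mu_{k+1})$, giving
\[
v_i(t,x_{-i})=s_1\mu_1+\sum_{k=1}^{m-2}s_{k+1}(\mu_{k+1}-\mu_k)+s_m(1-\mu_{m-1}).
\]
Coincident midpoints (rivals sharing a location) merely collapse some intermediate intervals to length zero and cause no difficulty.

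The key observation is that the gaps $\mu_{k+1}-\mu_k=(p_{k+1}-p_k)/2$ are constant in $t$, so as $t$ ranges over $(0,x^1)$ the midpoints translate rigidly without reordering: the combinatorial type is fixed and $v_i$ is a genuine affine function there, not merely piecewise affine. Since each $\mu_k$ has derivative $1/2$ in $t$, every intermediate term contributes $0$ to $dv_i/dt$ (its two endpoints move at the same rate), while the leftmost term contributes $s_1/2$ and the rightmost term contributes $-s_m/2$, yielding slope $(s_1-s_m)/2$. For the interval $(x^q,1)$ I would invoke the reflection $y\mapsto 1-y$: by symmetry of the uniform distribution and of the voters' utilities, $v_i(t,x_{-i})$ equals candidate $i$'s score in the reflected profile at position $1-t\in(0,1-x^q)$, which lies to the left of all reflected rivals; the slope there is $(s_1-s_m)/2$, so the slope in the original coordinate is its negative, $-(s_1-s_m)/2$.

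The only point demanding care is establishing genuine global affineness on each open interval rather than mere local linearity, i.e.\ confirming that the ordering of the midpoints, and hence the rank assigned to each sub-interval, never changes as $t$ moves. This is exactly what the constancy of the gaps $\mu_{k+1}-\mu_k$ delivers, and it is the crux of the argument; everything else is routine bookkeeping, with the mild additional check that the extreme intervals $(0,\mu_1)$ and $(\mu_{m-1},1)$ retain positive length throughout so that the $s_1$ and $s_m$ terms are actually present.
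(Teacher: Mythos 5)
Your proof is correct and follows essentially the same route as the paper's: you partition $[0,1]$ by the midpoints between $t$ and the rivals' positions, observe that the interior intervals have $t$-independent lengths so only the two extreme intervals' contributions vary (giving slope $(s_1-s_m)/2$), and handle $(x^q,1)$ by symmetry. The only cosmetic difference is that you index midpoints by individual rivals rather than by distinct occupied positions and write out the affine formula explicitly, which changes nothing of substance.
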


\begin{proof}
Let $t\in [0,x^1)$. Let us split the issue space into intervals
\[
I_1=\left[0,\frac{t+x^1}{2}\right), \ldots, I_k=\left[\frac{t+x^{k-1}}{2},\frac{t+x^k}{2}\right),\ldots, I_{q+1}= \left[\frac{t+x^q}{2},1\right]. 
\]
The voters in $I_k$ rank candidate $i$ in the same position on their ballots. For $1\ne k\ne q+1$ the length of $I_k$ does not depend on $t$ (and hence the contribution to $i$'s score from voters in $I_k$). So when $t$ changes, only the contributions to $i$'s score from the voters in the two end intervals change. The sum of these is
\[
s_1\left(\frac{t+x^1}{2}\right) + s_m\left(1-\frac{t+x^q}{2}\right),
\]
so the result follows. The second statement follows from the first due to an obvious symmetry.
\end{proof}

\begin{prop}
\label{insidetheinterval2}
Suppose that apart from $i$ there are $j$ candidates in positions $x^1,\ldots, x^{l-1}$ and $k$ candidates in positions $x^l,\ldots, x^{q}$, where $j+k=m-1$. Then
in the interval $(x^{l-1},x^l)$   the function $v_i(t,x_{-i})$ is  linear  with slope $(s_{j+1}-s_{k+1})/2$. 
\end{prop}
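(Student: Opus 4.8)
The plan is to follow the same strategy as in the proof of Proposition~\ref{insidetheinterval1}: fix the positions $x_{-i}$ of all candidates other than $i$, move $i$ to a point $t\in(x^{l-1},x^l)$, partition the issue space into subintervals on each of which candidate $i$ occupies a fixed rank on every voter's ballot, and then track which of these interval lengths depend on $t$. Since $t$ lies strictly between two occupied locations, candidate $i$ is alone at $t$, so no vote-sharing affects her own score and her rank on a given ballot equals one plus the number of candidates strictly closer to that voter. The $j$ candidates at $x^1,\ldots,x^{l-1}$ lie to the left of $t$ and the $k$ candidates at $x^l,\ldots,x^q$ lie to its right.

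First I would record the breakpoints, which are the midpoints $\mu_p=(t+x^p)/2$ for $p\in[q]$; because $x^1<\cdots<x^q$ these satisfy $\mu_1<\cdots<\mu_q$, and $\mu_p<t$ precisely when $p\le l-1$. I would then partition $[0,1]$ as
$$
J_0=[0,\mu_1),\qquad J_p=[\mu_p,\mu_{p+1})\ \ (1\le p\le q-1),\qquad J_q=[\mu_q,1],
$$
on each of which the rank of $i$ is constant, since indifferences occur only on the measure-zero set of midpoints and may be ignored. Next I would pin down the ranks on the two end intervals: for $y\in J_0$ every left candidate is closer to $y$ than $i$ and every right candidate is farther, so $i$ is ranked $(j+1)$-st; symmetrically, for $y\in J_q$ every right candidate is closer, so $i$ is ranked $(k+1)$-st.

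The step that makes the result fall out is the behaviour of the interval lengths as $t$ varies within $(x^{l-1},x^l)$. Each interior interval $J_p$ with $1\le p\le q-1$ has length $\mu_{p+1}-\mu_p=(x^{p+1}-x^p)/2$, which is independent of $t$, and the rank assigned on $J_p$ is a fixed integer determined only by the (unchanged) left/right configuration of the other candidates. Hence the contribution of all interior intervals to $v_i(t,x_{-i})$ is constant in $t$, and only the two end intervals carry $t$-dependence, with $\ell(J_0)=(t+x^1)/2$ and $\ell(J_q)=1-(t+x^q)/2$. Collecting terms gives
$$
v_i(t,x_{-i})=s_{j+1}\,\frac{t+x^1}{2}+s_{k+1}\left(1-\frac{t+x^q}{2}\right)+\text{const},
$$
so differentiating in $t$ yields slope $(s_{j+1}-s_{k+1})/2$, as claimed.

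The main thing to get right—more a point of care than a genuine obstacle—is to verify that as $t$ moves the partition merely shifts rigidly in its interior while only the two outermost lengths change, and that the rank values on the interior intervals are truly $t$-independent. This is exactly the mechanism already exploited in Proposition~\ref{insidetheinterval1}; the only new feature is the bookkeeping of the starting rank $j+1$ at the left end, which is what replaces the $s_1$ appearing there. I would also remark that the argument tacitly assumes $2\le l\le q$, so that $j\ge 1$ and $k\ge 1$; the two extreme intervals are precisely the content of Proposition~\ref{insidetheinterval1}, recovered by formally setting $j=0$ or $k=0$.
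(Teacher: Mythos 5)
Your proof is correct, and it is structured differently from the paper's, though both rest on the same underlying mechanism. You generalise the midpoint-partition argument of Proposition~\ref{insidetheinterval1} directly: partition all of $[0,1]$ by the midpoints $(t+x^p)/2$, check that every interior piece has $t$-independent length \emph{and} $t$-independent rank, and conclude that the only $t$-dependence sits in the two outermost intervals, where $i$ is ranked $(j+1)$-st and $(k+1)$-st. The paper instead splits the issue space at $t$ itself and applies Proposition~\ref{insidetheinterval1} twice, to the subrule $(s_1,\ldots,s_{j+1})$ on $[0,t]$ and to $(s_1,\ldots,s_{k+1})$ on $[t,1]$. The paper's reduction is shorter and reuses existing machinery, but it cuts the first-place region around $t$ into two pieces whose lengths $(t-x^{l-1})/2$ and $(x^l-t)/2$ each vary with $t$, so it needs the extra observation that the two $s_1$ terms sum to the constant $s_1(x^l-x^{l-1})/2$; in your decomposition that region is the single interval $[(t+x^{l-1})/2,\,(t+x^l)/2)$ of manifestly constant length, so no cancellation is needed and the $t$-dependence is isolated at the two ends from the outset. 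Your explicit identification of the end-interval ranks as $j+1$ and $k+1$ also makes transparent why exactly those scores appear in the slope, and your closing remark that Proposition~\ref{insidetheinterval1} is the degenerate case $j=0$ or $k=0$ is a sound consistency check.
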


\begin{proof}
Suppose candidate $i$ is currently at $t$. Let us split the issue space into intervals $[0,t]$ and $[t,1]$ (so that candidate $i$ belongs to both of them) and apply the previous proposition to rules $(\row s{j+1})$ and $(\row s{k+1})$. Then the nonconstant contributions to the score $v_i(t,x_{-i})$ from both intervals will be 
\[
\left[s_1\left(\frac{x^l-t}{2}\right) + s_{k+1}\left(1-\frac{t+x^q}{2}\right)\right] + \left[s_1\left(\frac{t-x^{l-1}}{2}\right) + s_{j+1}\left(\frac{t+x^1}{2}\right)\right]\
\]
\[
=\frac{(s_{j+1}-s_{k+1})}{2}t+\text{const}.,
\]
which proves the proposition.
\end{proof}

We also note that at points $x^1,\ldots, x^q$ the function $v_i(t,x_{-i})$ is in general discontinuous and all three values
\[
v_i(x^{j-},x_{-i}) =\lim_{t\to x^{j-}}v(t,x_{-i}),\quad v_i(x^j,x_{-i}) ,\quad v_i(x^{j+},x_{-i}) =\lim_{t\to x^{j+}}v(t,x_{-i}),
\]
where the first and the third are one-sided limits, may be different.\par\medskip

We will now derive some general properties of NCNE which will come in useful later. The following lemma places a lower bound on the number of candidates that may occupy an extreme left or an extreme right position. This is a generalisation of Cox's \cite[p.~93]{cox1} observation that there can be no less than two candidates at extreme positions.

\begin{lemma}\label{generalprops1} 
Given a scoring rule $s$, let $1\leq k\leq m-1$ be such that $s_1=\cdots =s_k>s_{k+1}$. Then a necessary condition for a profile $x$ to be in NCNE is $\min(n_1,n_q)>k$.
\end{lemma}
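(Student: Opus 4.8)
The plan is to show that if fewer than $k+1$ candidates occupy the leftmost position $x^1$, then a candidate there has a profitable deviation, contradicting the NE condition; by the obvious left-right symmetry the same argument applies to $x^q$. So suppose for contradiction that $n_1 \leq k$, and let candidate $i$ be one of the $n_1$ candidates sitting at $x^1$. The idea is to compare the score candidate $i$ currently earns at $x^1$ against the score she would earn by deviating slightly to the \emph{left}, into the interval $(0,x^1)$, and show the deviation strictly helps.

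First I would compute the one-sided behaviour of $v_i(t,x_{-i})$ as $t$ approaches $x^1$ from the left using Proposition~\ref{insidetheinterval1}: on $(0,x^1)$ the function is linear with slope $(s_1-s_m)/2 > 0$, so moving left \emph{decreases} the score, which is the wrong direction. Hence the profitable deviation must instead be a move to the \emph{right}, and the real content is a comparison of the value $v_i(x^1,x_{-i})$ that $i$ gets from sharing position $x^1$ with the limiting value $v_i(x^{1+},x_{-i})$ just to the right of $x^1$. The key point is the tie-breaking: at $x^1$ there are $n_1 \leq k$ candidates tied for the leftmost block, and because $s_1=\cdots=s_k$, every one of these $n_1$ candidates is ranked among the top $n_1 \leq k$ positions by the voters to the left of the first midpoint, and each such voter awards score $s_1$ regardless of the lottery outcome; so candidate $i$ already receives the \emph{maximum} available score $s_1$ from the leftmost voters, and the symmetric lottery gives her no disadvantage there. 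The deviation to $x^1+\epsilon$ must therefore break even or gain on the boundary voters but strictly gain elsewhere. I would make this precise by writing out $v_i(x^1,x_{-i})$ and $v_i(x^{1+},x_{-i})$ as sums over the intervals $I_k$, isolating the terms that differ.

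The main obstacle is the careful accounting of the discontinuity at $x^1$ noted after Proposition~\ref{insidetheinterval2}. When $i$ moves from exactly $x^1$ (sharing with $n_1-1$ others) to $x^1+\epsilon$ (now alone, with the remaining $n_1-1$ candidates strictly to her left), candidate $i$'s rank in the voters' ballots changes: for the leftmost voters she drops from being tied in the top block to being ranked $n_1$-th, while she simultaneously improves relative to candidates farther right. I would argue that since $n_1 \leq k$, the drop on the left costs nothing in score---she was getting $s_1$ from those voters by symmetry, and being ranked $n_1 \leq k$-th still yields $s_1$---whereas she strictly improves her position against the candidates at $x^2,\ldots,x^q$ for voters in a positive-measure interval to the right of the first midpoint. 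Comparing to the slope result of Proposition~\ref{insidetheinterval2} applied just to the right of $x^1$ (with $j=n_1-1$ candidates to the left and $k'=m-n_1$ to the right), the relevant slope governs whether the interior move pays, but the decisive gain is the discrete jump at the boundary coming from no longer splitting the top scores via the lottery.

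To close the argument I would verify the inequality is strict: the gain comes from voters immediately to the right of $x^1$ who, after the deviation, rank $i$ ahead of at least one candidate she was previously tied with or behind, converting some fraction of a lower score into a higher one, and this contributes a strictly positive amount over a set of positive measure while the left-hand voters contribute no loss precisely because $s_1=\cdots=s_k$ and $n_1\leq k$. This contradicts $v_i(x^*)\geq v_i(x_i,x_{-i}^*)$, forcing $n_1 \geq k+1$, i.e. $n_1 > k$; the identical argument with the reflection $t \mapsto 1-t$ yields $n_q > k$, establishing $\min(n_1,n_q) > k$ as required.
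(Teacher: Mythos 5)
Your overall strategy coincides with the paper's: assume $n_1\leq k$, move the candidate to the right, and use $s_1=\cdots=s_k$ together with $n_1\leq k$ to show the left-hand voters inflict no loss. However, there is a genuine gap in where you locate the strict gain. You claim the decisive gain is ``the discrete jump at the boundary coming from no longer splitting the top scores via the lottery,'' attributed to ``voters immediately to the right of $x^1$.'' But those voters (the ones between $x^1$ and the first midpoint $(x^1+x^2)/2$) were already awarding the deviator exactly $s_1$ before the move --- the lottery there is over the scores $s_1,\ldots,s_{n_1}$, which are all equal since $n_1\leq k$ --- and they still award $s_1$ afterwards. Un-splitting equal scores gains nothing, by precisely the same observation you correctly make for the voters to the left of $x^1$. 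The only possible sources of a \emph{jump} are the more distant intervals $I_j$, where the deviator moves from sharing the block $s_{k_j},\ldots,s_{k_j+n_1-1}$ to receiving $s_{k_j}$ outright; this jump is strictly positive only if some such block is nonconstant. It can happen that every block is constant: take $s=(1,1,0,0,0,0)$ (so $k=2$) and the profile $((x^1,2),(x^2,4))$, where candidate $1$'s blocks are $\{s_1,s_2\}=\{1,1\}$ and $\{s_5,s_6\}=\{0,0\}$. Then $v_1(x^{1+},x_{-1})=v_1(x)$ exactly, and your argument produces no contradiction with the equilibrium condition.

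The paper's proof closes exactly this hole with a case split. If $s_{k_j}>s_{k_j+n_1-1}$ for some $j$, the infinitesimal move gives a strict jump, and Proposition~\ref{insidetheinterval2} (linearity on $(x^1,x^2)$) converts it into a profitable \emph{real} move. If instead $s_{k_j}=s_{k_j+n_1-1}$ for all $j$, the jump is zero, and one must analyse a genuine move to $t\in(x^1,x^2)$: all interior intervals keep their lengths, while the leftmost interval, worth $s_1$ per voter, expands by $(t-x^1)/2$ at the expense of the rightmost interval, worth $s_{m-n_1+1}=s_m$ per voter (equality holds because the last block is constant in this case). The net gain $(s_1-s_m)(t-x^1)/2>0$ accrues from voters near the \emph{far} midpoints whose block assignment shifts, not from voters adjacent to $x^1$. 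Without this second case your proof does not cover rules such as $k$-approval, which are among the principal applications of the lemma (e.g.\ Corollary~\ref{generalprops2}), so the gap is substantive rather than cosmetic.
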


\begin{proof}
Suppose $n_1\leq k$ and candidate 1 is located at $x^1$. Let us split the issue space into intervals
\[
I_1=\left[0,\frac{x^1+x^2}{2}\right), \ldots, I_j=\left[\frac{x^1+x^{j}}{2},\frac{x^1+x^{j+1}}{2}\right),\ldots, I_{q}= \left[\frac{x^1+x^q}{2},1\right]. 
\]
 At $x$ the contribution to candidate 1's score $v_1(x)$ from the interval $I_1=[0,(x^1+x^2)/2]$ is 
$$
\left(\frac{1}{n_1}\sum_{i=1}^{n_1}s_i\right)\ell(I_1)=s_1\ell(I_1),
$$ 
and the contribution from the interval $I_j$ is 
$$
\left(\frac{1}{n_1}\!\!\sum_{i=k_j}^{k_j+n_1-1}\!\!\!\!\! s_i\right)\ell(J_j),
$$
%$$
%\left(\frac{1}{n_1}\!\sum_{i=m-n_1+1}^m\!\!\!\!\! s_i\right)\ell(I_2),
%$$
for some $k_j$, since candidate 1 is tied in the rankings of all voters in $I_j$.
If 1 moves infinitesimally to the right, then these contributions to $v_1(x^{1+},x_{-1})$ become 
$s_1\ell(I_1)$ and $s_{k_j}\ell(I_j)$, respectively. Indeed, $1$ is still ranked at worst $k$th by voters in $I_1$ and hence loses nothing. Also, $1$ rises in all other voters' rankings. If $s_{k_j}>s_{k_j+n_1-1}$ for at least one $j$,  this move is strictly beneficial. This infinitesimal ``move'' is not a real move. However, if it is strictly beneficial, then by Proposition~\ref{insidetheinterval2} we may conclude that a sufficiently small move to the right will also be beneficial. %\footnote{Later we will often omit this step.}  
If  $s_{k_j}=s_{k_j+n_1-1}$ for all $j$, then we consider the move by candidate 1 to the right, to a position $t\in (x^1,x^2)$. The intervals where voters rank candidate 1 similarly will then be
\[
I_1'=\left[0,\frac{t+x^2}{2}\right), \ldots, I_j'=\left[\frac{t+x^{j}}{2},\frac{t+x^{j+1}}{2}\right),\ldots, I_{q}'= \left[\frac{t+x^q}{2},1\right]. 
\]
We see that candidate 1 has increased the length of the first interval, from which she receives $s_1$, at the expense of the far-right interval, from which she receives $s_{m-n_1+1}=s_m<s_1$, while keeping the lengths of all other intervals unchanged. So this move is beneficial.

So for NCNE we must have $n_1> k$. Similarly, $n_q> k$.
\end{proof} 

This already allows us to make the following observation which rules out NCNE for a class of scoring rules.

\begin{cor}\label{generalprops2} 
If $s$ is a scoring rule such that $s_1=\cdots = s_k>s_{k+1}$ for some $k\geq \lfloor m/2 \rfloor$, then $s$ allows no NCNE.
\end{cor}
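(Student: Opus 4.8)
The plan is to leverage Lemma~\ref{generalprops1} directly, since the corollary is essentially a pigeonhole consequence of it. Recall that Lemma~\ref{generalprops1} asserts that in any NCNE we must have both $n_1>k$ and $n_q>k$, where $k$ is the length of the initial run of equal top scores. So I would argue by contradiction: suppose $s$ is a scoring rule with $s_1=\cdots=s_k>s_{k+1}$ for some $k\geq\lfloor m/2\rfloor$, and suppose toward a contradiction that $x$ is an NCNE.

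Since $x$ is an NCNE, by definition it has at least two distinct occupied positions, so $q\geq 2$ and the leftmost and rightmost positions $x^1$ and $x^q$ are genuinely distinct. Applying Lemma~\ref{generalprops1}, the two extreme positions must each carry strictly more than $k$ candidates, i.e.\ $n_1>k$ and $n_q>k$. Because $x^1\neq x^q$, these two clusters are disjoint, so the total candidate count satisfies
$$
m=\sum_{i=1}^q n_i\geq n_1+n_q\geq (k+1)+(k+1)=2k+2.
$$
This gives $k\leq \tfrac{m}{2}-1<\lfloor m/2\rfloor$, contradicting the hypothesis $k\geq\lfloor m/2\rfloor$. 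Hence no NCNE can exist.

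The only subtlety I would be careful about is the integer arithmetic, since $m$ may be odd. For even $m$ the bound $m\geq 2k+2$ forces $k\leq m/2-1$, and for odd $m$ it forces $k\leq (m-2)/2<\lfloor m/2\rfloor$ as well; in both cases the conclusion $k<\lfloor m/2\rfloor$ holds, so the hypothesis $k\geq\lfloor m/2\rfloor$ is violated. Thus there is really no obstacle here beyond correctly invoking the lemma and checking the floor inequality; the entire content of the corollary is that two disjoint clusters each of size exceeding $\lfloor m/2\rfloor$ cannot fit among only $m$ candidates. I would present it as a short three-line contradiction argument.
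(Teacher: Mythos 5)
Your proof is correct and follows essentially the same route as the paper: both invoke Lemma~\ref{generalprops1} to force $n_1>k$ and $n_q>k$ at the two distinct extreme positions, and then derive a pigeonhole contradiction with $n_1+n_q\leq m$. The paper phrases the contradiction as $n_1+n_q>m$ directly, while you rearrange it into a bound $k<\lfloor m/2\rfloor$ and check the parity cases, but this is the same argument.
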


\begin{proof} By Lemma \ref{generalprops1}, $\min(n_1,n_q)>k\geq \lfloor m/2\rfloor$. Hence, $n_1+n_q>m$, a contradiction.  \end{proof}

%For example, $k$-approval voting rule, which is given by $s=(\underbrace{1,\ldots,1}_{k},0,\ldots,0)$, satisfies the condition of this corollary, when $k\geq \lfloor m/2 \rfloor$, hence has no NCNE for such $k$. 

The rules specified in Corollary~\ref{generalprops2} are usually worst-punishing. However, there are some that are slightly best-rewarding, as in the following example. This shows that there exist best-rewarding rules which, unlike plurality, do not allow NCNE. By Theorem \ref{CNE}, they have no CNE either, thus they have no Nash equilibria whatsoever.

\begin{examp} If $m$ is odd, consider $k$-approval with $k=(m-1)/2$. That is, $s=(1,\ldots,1,0,\ldots,0)$, where the first $k$ positions are ones. Then 
$$c(s,m)=1-\frac{1}{m}\left(\frac{m-1}{2}\right)=\frac{1}{2}+\frac{1}{2m}>\frac{1}{2}.$$
So the rule is best-rewarding but by Corollary \ref{generalprops2} it has no NCNE. 
Note that starting with this rule we can produce a family of best-rewarding rules that have no NCNE. Clearly, the scores $s_{k+1},\ldots,s_m$ could have been, instead of zeros, arbitrary scores with the property that $$\frac{1}{m}\sum_{i=k+1}^m\!\! s_i<\frac{1}{2m}$$ and the rule would still have been best-rewarding, but without NCNE.\end{examp}

The following lemma says that in an NCNE no candidates may occupy the most extreme positions on the issue space, namely 0 or 1. This will allow us to always assume that it is possible for a candidate to make a move into the end intervals $[0,x^1)$ and $(x^q,1]$.

\begin{lemma}\label{noextremepositions} Let $s$ be a scoring rule. In an NCNE $x$ no candidate may adopt the most extreme positions. That is, $0<x^1$ and $x^q<1$.\end{lemma}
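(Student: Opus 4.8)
The plan is to argue by contradiction, exploiting that $0$ is a \emph{boundary} of the issue space, so that every voter lies (weakly) to the right of any candidate sitting there. Suppose $x$ is an NCNE with $x^1=0$, and let $k$ be as in Lemma~\ref{generalprops1}, so that $s_1=\cdots=s_k>s_{k+1}$. By Lemma~\ref{generalprops1} we have $n_1>k\ge 1$; in particular $n_1\ge 2$ and, crucially, $s_{n_1}\le s_{k+1}<s_1$. Since $q\ge 2$, the interval $(x^1,x^2)=(0,x^2)$ is nonempty. Fix one candidate $i$ located at $0$ and study how $v_i(t,x_{-i})$ behaves as $i$ slides rightward into $(0,x^2)$; I want to show the rightward move is profitable, contradicting equilibrium.

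The geometry to exploit is that, at $t=0$, candidate $i$ is tied with the other $n_1-1$ candidates at $0$, and these $n_1$ candidates form a single contiguous tied block in every voter's ranking. For a voter at position $p$, candidate $i$ currently collects the block average $\tfrac{1}{n_1}(s_1+\cdots+s_{n_1})$ of that block's scores if the block is the top one, and more generally the average over the block's $n_1$ consecutive ranks. When $i$ moves an infinitesimal amount to the right, the paper's conventions on small moves guarantee that $i$ only reorders \emph{within} its former block; and because all voters lie to the right, $i$ becomes strictly closer to (almost) every voter than the $n_1-1$ candidates left behind at $0$, so $i$ rises to the \emph{top} of its block. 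Hence for a.e. voter $i$'s contribution changes from the block average to the block maximum, a weak gain. The one vanishing sliver $[0,t/2)$ of voters for whom $i$ instead drops to the bottom of its block has measure $t/2\to 0$ and is negligible in the limit.

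The gain is made \emph{strict} by the leftmost voters: on $I_1=[0,x^2/2)$, of positive measure $x^2/2$, the cluster at $0$ is the closest group and occupies the top $n_1$ ranks, so $i$ goes from $\tfrac{1}{n_1}(s_1+\cdots+s_{n_1})$ to $s_1$, and $s_1>s_{n_1}$ (from $n_1>k$) forces $s_1>\tfrac{1}{n_1}(s_1+\cdots+s_{n_1})$. Integrating, the nonnegative per-voter changes together with the strictly positive contribution from $I_1$ give $\lim_{t\to 0^+}v_i(t,x_{-i})>v_i(0,x_{-i})$. Since $v_i(\cdot,x_{-i})$ is linear (hence right-continuous) on $(0,x^2)$ by Proposition~\ref{insidetheinterval2}, some genuine $t>0$ satisfies $v_i(t,x_{-i})>v_i(0,x_{-i})$, a profitable deviation contradicting the NCNE property; thus $x^1>0$. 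Applying the reflection $p\mapsto 1-p$ yields $x^q<1$. The main obstacle I expect is bookkeeping the jump discontinuity at $t=0$: one must verify that breaking the tie improves $i$'s rank weakly for almost every voter (controlling the vanishing sliver where it does not) and that the absence of voters to the left of $0$ is precisely what makes this gain one-signed, in contrast to an interior cluster where the two sides can offset.
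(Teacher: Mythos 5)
Your proof is correct and follows essentially the same route as the paper's: by contradiction, invoke Lemma~\ref{generalprops1} to get $n_1>k$ and hence $s_1>s_{n_1}$, show the rightward infinitesimal move strictly gains on $[0,x^2/2]$ (block average to $s_1$) and weakly gains elsewhere, then use Proposition~\ref{insidetheinterval2} to turn the limiting gain into an actual profitable deviation, with the symmetric argument giving $x^q<1$. Your explicit treatment of the vanishing sliver $[0,t/2)$ is a minor elaboration of bookkeeping the paper leaves implicit in its one-sided limit.
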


\begin{proof} Suppose $x^1=0$. Let $1\leq k\leq m-1$ be such that $s_1=\cdots =s_k>s_{k+1}$. Then by Lemma \ref{generalprops1} we have $n_1>k$.  Candidate 1's score is
$$v_1(x)= \left(\frac{1}{n_1}\sum_{i=1}^{n_1}s_i\right)\frac{x^2}{2}+ S,$$
where $S$ is the contribution to $v_1(x)$  from voters in the interval $I=[x^2/2, 1]$. Now suppose candidate 1 moves infinitesimally to the right. Then, in the limit, 1's score is
$$ 
v_1(x^{1+},x_{-1})=s_1\frac{x^2}{2}+S', 
$$
where $S'$ is the new contribution to 1's score from the interval $I$. Since 1 has moved up in the ranking of all the voters in $I$, $S'\geq S$.
Also, since $s_1=s_k> s_{n_1}$ we have 
$$ 
v_1(x^{1+},x_{-1})=s_1\frac{x^2}{2}+S' > \left(\frac{1}{n_1}\sum_{i=1}^{n_1}s_i\right)\frac{x^2}{2}+ S'\geq \left(\frac{1}{n_1}\sum_{i=1}^{n_1}s_i\right)\frac{x^2}{2}+ S=v_1(x).
$$
By Proposition~\ref{insidetheinterval2}, $v_1(x^{1+},x_{-1})>v_1(x)$. Hence, candidate~1 benefits from moving to the right, and hence $x$ is not in NCNE. Thus, $x^1>0$. Similarly, $x^q<1$.\end{proof}

The next lemma places an upper bound on the length of the interval between two occupied positions, or between the boundary of the issue space and the nearest occupied position. From this, we will be able to derive a lower bound on the number of occupied positions for a given scoring rule.

\begin{lemma}\label{upperboundintervals} Given a scoring rule $s$, if $x$ is in NCNE, then both of the following conditions must be satisfied:
\begin{enumerate}
\item[(a)] $x^1\leq 1-c(s,m)$ and $1-x^q\leq 1-c(s,m)$;
\item[(b)] $x^i-x^{i-1} \leq 2(1-c(s,m))$ for any $i$ such that $2\leq i\leq q$.
\end{enumerate}
\end{lemma}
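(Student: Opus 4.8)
The plan is to combine a single conservation identity with a uniform lower bound on the payoff a candidate can secure by relocating into any empty interval. First I would record the invariant that the candidates' scores always sum to $m\bar{s}$: at each voter the $m$ candidates collect the scores $s_1,\dots,s_m$ in some order (and the fair lottery at a tied cluster only permutes which candidate receives which score, leaving the total unchanged), so integrating over the unit mass of voters gives $\sum_{r=1}^m v_r(x)=m\bar{s}$. Consequently the minimum-score candidate satisfies $v_{\min}(x)\le\bar{s}$. Since $x$ is an NCNE, no candidate can profitably deviate, so for every candidate $r$ and every $t$ we have $v_r(x)\ge v_r(t,x_{-r})$; I will contradict a too-long interval by showing that the deviation payoff into it exceeds $\bar{s}$.

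Next I would bound the payoff of moving some candidate $r$ to a point $t$ inside an empty interval. The geometric fact I rely on is that a candidate placed at $t$ is ranked \emph{first} precisely by the voters lying closer to $t$ than to any neighbouring occupied cluster, and is ranked no worse than last (hence earns at least $s_m$) by all remaining voters. For the boundary interval $(0,x^1)$ there is no cluster to the left, so the first-place block is $[0,(t+x^1)/2)$, of measure $(t+x^1)/2$; letting $t\uparrow x^1$ and invoking Proposition~\ref{insidetheinterval1} (the payoff is linear in $t$ with positive slope, so its supremum over the interval is approached at the right endpoint) yields
\[
v_r(x)\ \ge\ \sup_{t\in(0,x^1)}v_r(t,x_{-r})\ \ge\ s_1\,x^1+s_m\,(1-x^1).
\]
For an interior gap $(x^{i-1},x^i)$ of length $L=x^i-x^{i-1}$, the block of voters ranking $r$ first is the central subinterval of measure $L/2$, independently of where $t$ lies in the gap, so $v_r(x)\ge v_r(t,x_{-r})\ge s_1\,(L/2)+s_m\,(1-L/2)$. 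Both estimates hold for every candidate $r$, since each is free to move anywhere in $[0,1]$.

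Finally I would feed these into the identity. Taking $r$ to be the minimum-score candidate, the left-boundary estimate gives $\bar{s}\ge v_{\min}(x)\ge s_1x^1+s_m(1-x^1)$, which rearranges through the identity $1-c(s,m)=(\bar{s}-s_m)/(s_1-s_m)$ into $x^1\le 1-c(s,m)$; the mirror-image deviation into $(x^q,1)$ gives $1-x^q\le 1-c(s,m)$, establishing (a). The interior estimate gives $\bar{s}\ge s_1(L/2)+s_m(1-L/2)$, i.e.\ $L\le 2(1-c(s,m))$, establishing (b). In this way (a) and (b) are two specialisations of one computation, differing only in whether the deviation point is flanked on one side (first-place mass $x^1$ or $1-x^q$) or on both (first-place mass $L/2$).

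The step I expect to need the most care is keeping the deviation bound valid when the candidate who relocates is taken from a cluster bordering the interval under study --- unavoidable, for example, in a two-cluster profile where every candidate sits at an endpoint of the single interior gap. Removing such a candidate thins her former cluster, but this can only \emph{enlarge} the first-place block, so the stated inequality, being a lower bound, is preserved. I also want to confirm that the supremum over the open interval is genuinely realised in the limit --- which is exactly the content of the linearity Propositions~\ref{insidetheinterval1} and~\ref{insidetheinterval2} --- and that, as noted, the tie-breaking lottery leaves the conservation identity $\sum_r v_r(x)=m\bar{s}$ intact.
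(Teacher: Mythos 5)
Your proof is correct and follows essentially the same route as the paper's: both arguments take a candidate whose score is at most the average, $v_j(x)\le\bar{s}$, and compare this with the deviation payoff $s_1\cdot(\text{first-place mass})+s_m\cdot(\text{remaining mass})$ available in an end interval or interior gap, which forces exactly the stated length bounds via $1-c(s,m)=(\bar{s}-s_m)/(s_1-s_m)$. The only (immaterial) difference is how the deviator who borders the gap is handled: the paper excludes this case by noting that an unpaired candidate flanking an over-long gap would already earn more than $\bar{s}$, while you observe that vacating such a position only enlarges the deviator's first-place block so the lower bound survives---both resolutions are valid.
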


\begin{proof} 
Since $s_m<\bar{s}<s_1$, there exists a real number $\alpha$   with the property that $0<\alpha <1$ and
$\alpha s_1+(1-\alpha)s_m=\bar{s}$. Rearranging this equation, one verifies that $\alpha=1-c(s,m)$.

At any profile $x$, there will be at least one candidate $j$ who garners a total score $v_j(x)\leq \bar{s}$. Hence, if $I$ is an end interval, namely $[0,x^1]$ or $[x^q,1]$, then we must have $\ell(I)\leq \alpha$. To show this we assume without loss of generality that $\ell(I)>\alpha$ for $I=[0,x^1]$ and we show that candidate
$j$ would be able to make a profitable move to $x^1-\epsilon$  for some very small $\epsilon$. By Proposition~\ref{insidetheinterval1},  this would be proved if we could show that  
\[
v_j(x^{1-},x_{-j})>v_j(x).
\]
The idea is that by locating incrementally to the left from $x^1$ candidate $j$ captures $s_1$ from all the voters in $I$ and at the very worst $s_m$ from all other voters. So
$$
v_j(x^{1-},x_{-j})\geq s_1\ell(I)+s_m(1-\ell(I))>\alpha s_1+(1-\alpha)s_m=\bar{s}\geq v_j(x).
$$

Similarly, suppose $x^i-x^{i-1}> 2\alpha$ for some $2\leq i \leq q$. Note that we can assume candidate $j$ is not an unpaired candidate located at $x^i$ or $x^{i-1}$, since then $j$ already receives a score greater than $\alpha s_1+(1-\alpha)s_m$, which contradicts that $j$'s score is no more than $\bar{s}$. Hence, we may assume that position $x^i$ and $x^{i-1}$ remains occupied when $j$ moves. Now, if $j$ moves to any point $t$ in the interval $I=(x^{i-1},x^i)$, we have
$$
v_j(t,x_{-j})\geq s_1\left(\frac{x^i-x^{i-1}}{2}\right)+s_m\left(1-\frac{x^i-x^{i-1}}{2}\right)>\alpha s_1+(1-\alpha)s_m\geq v_j(x).
$$
Hence, for $x$ to be in NCNE it must be that $x^i-x^{i-1}\leq 2\alpha$. 
\end{proof}

Condition (a) of the previous lemma generalises Cox's \cite[p.~88]{cox1} argument that in a plurality election the most extreme candidates on either side are located outside the interval $(1/m,1-1/m)$. 

\begin{cor}\label{upperboundintervalscorollary} In an NCNE under a scoring rule $s$ with $c(s,m)>1/2$, the most extreme candidates on either side are located outside the interval $(1-c(s,m),c(s,m))$. \end{cor}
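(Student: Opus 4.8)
The plan is to read the corollary off directly from Lemma~\ref{upperboundintervals}(a), which already bounds the extreme occupied positions; the only work is to reinterpret those two inequalities geometrically once we know that $c(s,m)>1/2$.

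First I would record that the hypothesis $c(s,m)>1/2$ guarantees $1-c(s,m)<\tfrac12<c(s,m)$, so that $(1-c(s,m),c(s,m))$ is a genuine nonempty open interval, symmetric about the median $\tfrac12$. This is precisely what makes the statement meaningful: for a worst-punishing or intermediate rule the interval would be empty or degenerate, whereas the best-rewarding condition forces it to be a nontrivial band of centrist positions from which the extreme candidates are then excluded.

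Next I would invoke Lemma~\ref{upperboundintervals}(a). Its first inequality, $x^1\leq 1-c(s,m)$, says the leftmost occupied position sits at or to the left of the left endpoint, so $x^1\notin(1-c(s,m),c(s,m))$. The second inequality, $1-x^q\leq 1-c(s,m)$, rearranges to $x^q\geq c(s,m)$, placing the rightmost occupied position at or to the right of the right endpoint, so likewise $x^q\notin(1-c(s,m),c(s,m))$. Since the extreme candidates on the left and right occupy exactly the positions $x^1$ and $x^q$, this yields the claim.

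Because every step is an immediate consequence of the already-established lemma, there is no genuine obstacle here: the corollary is essentially a restatement of Lemma~\ref{upperboundintervals}(a) specialised to the best-rewarding case. The only point requiring any care is the sign flip when converting the bound on $1-x^q$ into the equivalent bound $x^q\geq c(s,m)$, together with the preliminary observation that $c(s,m)>1/2$ is what makes the excluded interval nondegenerate.
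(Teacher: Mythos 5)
Your proposal is correct and follows exactly the paper's own route: the corollary is read off directly from Lemma~\ref{upperboundintervals}(a), with the rearrangement $1-x^q\leq 1-c(s,m)\iff x^q\geq c(s,m)$ and the observation that $c(s,m)>1/2$ is needed only to make the excluded interval nonempty. Nothing is missing.
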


\begin{proof} This is simply condition (a) of Lemma \ref{upperboundintervals}. We need $c(s,m)>1/2$ for the result to be meaningful.\end{proof}

Note that the interval $(1-c(s,m),c(s,m))$ reaches its maximal width under plurality rule.

\begin{cor}\label{lowerboundq} 
Given a scoring rule $s$, in a Nash equilibrium the number of occupied positions $q$ satisfies $$q\geq \Big\lceil \frac{1}{2(1-c(s,m))}\Big\rceil.$$
\end{cor}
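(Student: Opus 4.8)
The plan is to combine the two conditions of Lemma~\ref{upperboundintervals} into a single covering argument for the interval $[0,1]$. The key observation is that the occupied positions $x^1<\cdots<x^q$ together with the two endpoints $0$ and $1$ partition $[0,1]$ into $q+1$ gaps: the left end gap $[0,x^1]$, the right end gap $[x^q,1]$, and the $q-1$ interior gaps $[x^{i-1},x^i]$ for $2\leq i\leq q$. Since these gaps have total length $1$, I would write
$$
1 = x^1 + (1-x^q) + \sum_{i=2}^{q}\bigl(x^i-x^{i-1}\bigr).
$$
Now I apply the upper bounds from Lemma~\ref{upperboundintervals}: condition (a) bounds each of the two end gaps by $1-c(s,m)$, and condition (b) bounds each of the $q-1$ interior gaps by $2(1-c(s,m))$.

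Substituting these bounds gives
$$
1 \leq 2\bigl(1-c(s,m)\bigr) + (q-1)\cdot 2\bigl(1-c(s,m)\bigr) = 2q\bigl(1-c(s,m)\bigr).
$$
Rearranging yields $q \geq \frac{1}{2(1-c(s,m))}$, and since $q$ is an integer I may take the ceiling, which is exactly the claimed bound. I should note that the statement is phrased for a Nash equilibrium in general, but the two extreme-gap bounds in Lemma~\ref{upperboundintervals} are stated for NCNE; for a CNE we have $q=1$, and one checks the inequality holds trivially there because a CNE requires $c(s,m)\leq 1/2$ by Theorem~\ref{CNE}, forcing $\frac{1}{2(1-c(s,m))}\leq 1$, so the ceiling is $1$. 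Thus the bound is vacuous in the convergent case and the substance is entirely in the NCNE case, where $q\geq 2$.

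I do not anticipate a genuine obstacle here; this is essentially an arithmetic consequence of the lemma. The one point requiring mild care is making sure the end-gap and interior-gap bounds are applied to the correct number of terms ($2$ end gaps, $q-1$ interior gaps) so that the coefficients combine to $2q$ rather than something off by one. The telescoping identity for the total length makes this bookkeeping transparent, so the main step is simply recognising that Lemma~\ref{upperboundintervals} already does all the work and the corollary is its immediate global consequence.
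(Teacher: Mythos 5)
Your proposal is correct and follows essentially the same route as the paper: partition $[0,1]$ into the two end gaps and the $q-1$ interior gaps, apply the bounds of Lemma~\ref{upperboundintervals} to obtain $1\leq 2q(1-c(s,m))$, rearrange, and take the ceiling. Your handling of the convergent case (via Theorem~\ref{CNE} forcing $c(s,m)\leq 1/2$, so the bound is trivially $1$) matches the paper's opening case distinction, which is phrased as $c(s,m)\leq 1/2$ versus $c(s,m)>1/2$ but rests on the same observation.
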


\begin{proof} If $c(s,m)\leq 1/2$ it is clearly true as the right-hand side is equal to 1. If $c(s,m)>1/2$, only NCNE can exist with $q\geq 2$ occupied positions. Then the issue space can be partitioned into two end intervals, $[0,x^1)$ and $[x^q,1]$, together with $q-1$ intervals of the form $[x^{i-1},x^i)$ for $2\leq i\leq q$. So, using Lemma \ref{upperboundintervals}, we see that
$$ 1=x^1+(1-x^q)+\sum_{i=2}^q(x^i-x^{i-1})\leq 2q(1-c(s,m)),$$
whence the result. We round up since $q$ is an integer.\end{proof}

Thus, the amount of dispersion observed is increasing in $c(s,m)$. When $c(s,m)$ increases above $3/4$, the number of occupied positions required for NCNE increases to at least three. When $c(s,m)$ exceeds $5/6$, the number of occupied positions must be at least four. The maximum value of $c(s,m)=1-1/m$ is attained for plurality, so there must be at least $m/2$ occupied positions. 
We will return to consider these bounds in Section \ref{highlyBRsection} where, with the help of a few more lemmas, we will find that for most rules that are sufficiently best-rewarding, there are no NCNE at all.

\begin{lemma}
\label{45cand2} 
Suppose at profile $x$ candidate $i$ is at $x^l$ and $n_l=2$. Then $v_i(x^{l-},x_{-i})+v_i(x^{l+},x_{-i})=2v_i(x)$. In particular, when $x$ is in NCNE, $v_i(x^{l-},x_{-i})=v_i(x^{l+},x_{-i})=v_i(x)$.
\end{lemma}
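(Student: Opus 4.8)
The plan is to exploit the symmetry inherent in breaking a two-way tie. Since $n_l=2$, candidate $i$ occupies $x^l$ together with exactly one other candidate, call her $i'$. Fix a voter with ideal point $y$ (outside the measure-zero set of points equidistant between two occupied positions), and let $p(y)$ denote the number of candidates lying strictly closer to $y$ than $x^l$. At the profile $x$ the two candidates at $x^l$ are tied in this voter's ranking and, by the fair-lottery convention, split the two consecutive slots $p(y)+1$ and $p(y)+2$; hence candidate $i$ contributes in expectation $\tfrac{1}{2}(s_{p(y)+1}+s_{p(y)+2})$ to this voter's ballot. Everything reduces to tracking how this contribution changes when the tie is broken.

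Next I would compute the two one-sided limits voter by voter. Invoking the incremental-deviation convention, choose $\epsilon>0$ small enough that moving $i$ to $x^l\pm\epsilon$ changes no voter's ranking except by breaking the tie between $i$ and $i'$; in particular $p(y)$ is unchanged for every voter outside a transition region whose measure tends to $0$. In the left limit $v_i(x^{l-},x_{-i})$, candidate $i$ becomes the closer of $\{i,i'\}$ for voters with $y<x^l$, gaining $\tfrac{1}{2}(s_{p(y)+1}-s_{p(y)+2})$ relative to the tied value, and the farther for voters with $y>x^l$, losing the same amount; in the right limit the roles are exactly reversed. Writing $\Delta(y)=\tfrac{1}{2}(s_{p(y)+1}-s_{p(y)+2})\ge 0$, the per-voter change relative to $v_i(x)$ is $+\Delta(y)$ for one limit and $-\Delta(y)$ for the other, so the two changes cancel pointwise. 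Integrating over $y\in[0,1]$ gives $\big(v_i(x^{l-},x_{-i})-v_i(x)\big)+\big(v_i(x^{l+},x_{-i})-v_i(x)\big)=0$, which is the claimed identity $v_i(x^{l-},x_{-i})+v_i(x^{l+},x_{-i})=2v_i(x)$.

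For the final assertion, suppose $x$ is in NCNE. By Definition~\ref{Nasheq} we have $v_i(x)\ge v_i(t,x_{-i})$ for every $t\in[0,1]$; letting $t\to x^{l-}$ and $t\to x^{l+}$ and passing to the limit preserves the weak inequality, so $v_i(x^{l-},x_{-i})\le v_i(x)$ and $v_i(x^{l+},x_{-i})\le v_i(x)$. Two numbers that are each at most $v_i(x)$ yet sum to $2v_i(x)$ must both equal $v_i(x)$, giving $v_i(x^{l-},x_{-i})=v_i(x^{l+},x_{-i})=v_i(x)$.

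The only delicate point, and the step I would write out most carefully, is the pointwise bookkeeping of which score slot $i$ receives once the tie is broken: everything hinges on the two tied candidates occupying the \emph{consecutive} slots $p(y)+1,p(y)+2$, so that moving left versus right merely swaps which of $s_{p(y)+1},s_{p(y)+2}$ accrues to $i$. This is precisely where $n_l=2$ is essential; for $n_l\ge 3$ candidate $i$ would instead swap between the top and the bottom of a longer block, and the resulting gain and loss would no longer be negatives of one another, so the symmetric cancellation would fail.
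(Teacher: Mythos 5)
Your proof is correct and follows essentially the same argument as the paper's: both rest on the observation that the two tied candidates at $x^l$ occupy consecutive rank slots in every voter's ballot, so the left and right limits swap which of the two consecutive scores candidate $i$ receives, and the gains and losses cancel when the two limits are added. The only difference is presentational---you do the bookkeeping pointwise in the voter $y$ via $p(y)$ and integrate, while the paper sums over the finitely many intervals on which $p(y)$ is constant---and your handling of the final NCNE assertion (one-sided limits of $v_i(t,x_{-i})\le v_i(x)$ forcing equality) matches the paper's exactly.
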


\begin{proof} Again, the issue space can be divided into subintervals of voters who all rank $i$ in the same position.
The immediate interval around $x^l$ is $I_l=I_l^L\cup I_l^R$, 
where: $I_l^L=[(x^{l-1}+x^l)/2,x^l]$ if $l>1$ or $I_l^L=[0,x^1]$ if $l=1$; and, $I_l^R=[x^l,(x^{l}+x^{l+1})/2]$ if $l<q$ or $I_l^R=[x^l,1]$ if $l=q$. 
The contribution to $v_i(x)$ from the interval $I_l$ is
$$
 \frac{s_1+s_2}{2}(\ell(I_l^L)+\ell(I_l^R)).
$$
The contribution to $v_i(x^{l-},x_{-i})$ from this interval is then
$ s_1\ell(I_l^L)+s_2\ell(I_l^R)$ and the contribution to $v_i(x^{l+},x_{-i})$ is
$ s_2\ell(I_l^L)+s_1\ell(I_l^R).$

The contribution to $v_i(x)$ from any interval $J$  to the left of $I_l$, consisting of voters who all rank $i$ similarly, is
$$ \frac{s_t+s_{t+1}}{2}\ell(J) $$ for some $2\leq t \leq m-1$. The contribution to $v_i(x^{l-},x_{-i})$ is 
$s_t \ell(J)$ since when candidate $i$ moves infinitesimally to the left she rises one place in the rankings of these voters. The contribution to $v_i(x^{l+},x_{-i})$ is $s_{t+1}\ell(J)$, since this move causes $i$ to fall one place in these voters' rankings.

In the same way, the contribution to $v_i(x)$ from any interval $J'$ to the right of $I_l$, consisting of voters who all rank $i$ identically, is
$$
 \frac{s_t+s_{t+1}}{2}\ell(J') 
$$ 
 for some $2\leq t \leq m-1$ while the contribution to $v_i(x^{l-},x_{-i})$ is 
$s_{t+1} \ell(J')$ and the contribution to $v_i(x^{l+},x_{-i})$ is $s_{t}\ell(J')$. 

Hence, $v_i(x^{l-},x_{-i})+v_i(x^{l+},x_{-i})=2v_i(x)$ since for any subinterval $I_l$, $J$ or $J'$ the sum of the contributions to $v_i(x^{l-},x_{-i})$ and $v_i(x^{l+},x_{-i})$ is twice the contribution to $v_i(x)$ from the same subinterval.
For $x$ to be in NCNE we need both $v_i(x^{l-},x_{-i})\leq v_i(x)$ and $v_i(x^{l+},x_{-i})\leq v_i(x)$. This is only possible when $v_i(x^{l-},x_{-i})=v_i(x^{l+},x_{-i})=v_i(x)$.
\end{proof}

\begin{lemma}\label{45cand3} 
If $n_1=2$ or $n_q=2$ then a necessary condition for NCNE is $s_2=s_{m-1}$.
\end{lemma}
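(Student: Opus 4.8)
The plan is to take one of the two candidates sitting at an extreme, doubly occupied position and show that unless $s_2=s_{m-1}$ this candidate has a profitable infinitesimal deviation, contradicting that $x$ is an NCNE. By the obvious left--right symmetry the cases $n_1=2$ and $n_q=2$ are interchangeable and lead to the same conclusion, so it suffices to treat $n_1=2$. Let candidate $i$ be one of the two candidates located at $x^1$, and consider moving $i$ slightly to the right into the interval $(x^1,x^2)$ while the other candidate remains at $x^1$.

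The argument rests on two earlier results. First, because $n_1=2$, Lemma~\ref{45cand2} applies at the leftmost location and gives $v_i(x^{1+},x_{-i})=v_i(x)$; this pins down the right-hand limit of $i$'s payoff at $x^1$ and is exactly where the hypothesis $n_1=2$ is used. Second, I would compute the slope of $v_i(\cdot,x_{-i})$ on $(x^1,x^2)$. When $i$ moves to $t\in(x^1,x^2)$, apart from $i$ there are $j=1$ candidate at $x^1$ and $k=m-2$ candidates at $x^2,\ldots,x^q$, so Proposition~\ref{insidetheinterval2} (with $l=2$) shows $v_i(t,x_{-i})$ is linear on $(x^1,x^2)$ with slope $(s_{j+1}-s_{k+1})/2=(s_2-s_{m-1})/2$. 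Combining the two facts, for $t$ slightly larger than $x^1$ we have $v_i(t,x_{-i})=v_i(x)+\tfrac{1}{2}(s_2-s_{m-1})(t-x^1)$.

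To finish, note that since the scores are nonincreasing we have $s_2\ge s_{m-1}$, so this slope is nonnegative. If it were strictly positive, then moving $i$ a little to the right would strictly increase $v_i$ above $v_i(x)$, contradicting that $x$ is an NCNE. Hence $s_2-s_{m-1}\le 0$, which together with $s_2\ge s_{m-1}$ forces $s_2=s_{m-1}$, as required.

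I do not expect a genuine obstacle in this argument; the one subtlety is that, as remarked after Proposition~\ref{insidetheinterval2}, $v_i(\cdot,x_{-i})$ is in general discontinuous at the occupied points, so one cannot simply combine the equilibrium value $v_i(x)$ with the slope on $(x^1,x^2)$ without first controlling the right-hand limit at $x^1$. Lemma~\ref{45cand2} supplies precisely that control, and is the crux of the proof.
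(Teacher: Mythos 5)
Your proof is correct and follows essentially the same route as the paper's: both use Lemma~\ref{45cand2} to get $v_i(x^{1+},x_{-i})=v_i(x)$ and Proposition~\ref{insidetheinterval2} to identify the slope $(s_2-s_{m-1})/2$ of $v_i(\cdot,x_{-i})$ on $(x^1,x^2)$, then conclude that the NCNE condition forces this slope to be nonpositive, which together with $s_2\geq s_{m-1}$ gives equality. Your closing remark about why the right-hand limit must be controlled before invoking the slope is exactly the role Lemma~\ref{45cand2} plays in the paper's argument as well.
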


\begin{proof} Let $n_1=2$. By Lemma \ref{45cand2} we have $v_1(x^{1+},x_{-1})=v_1(x)$. Hence, if  1 moves to a position $t \in (x^1,x^2)$ then for NCNE we need $v_1(t,x_{-1})\leq v_1(x)=v_1(x^{1+},x_{-1})$. 
Hence, the slope of the linear function $v_1(t,x_{-1})$ is nonpositive. By Proposition~\ref{insidetheinterval2} we then have $s_2-s_{m-1}\le 0$, which can happen only if $s_2=s_{m-1}$.
\end{proof}

Propositions~\ref{insidetheinterval1} and~\ref{insidetheinterval2} allow us to conclude that unpaired candidates are actually quite rare. In particular, if $m$ is even and $s_{m/2}\ne s_{m/2+1}$, there can be none whatsoever. If $m$ is odd and $s_{(m-1)/2}\neq s_{(m+3)/2}$ then the only candidate that could possibly be unpaired is the median candidate.
Lemmas \ref{45cand2} and \ref{45cand3} tell us the only rules that allow paired candidates at the end positions are the rules of the form $s=(s_1,s_2,\ldots,s_{2},s_m)$.\par\medskip

The consequences for elections with small number of candidates are as follows: if $m=4$, since the only possible profile for NCNE is the one with two distinct positions occupied by two candidates apiece, we must have $s_2=s_3$; for $m=5$,  all possible partitions of the candidates (2-1-2 and 3-2) involve end positions occupied by exactly two candidates, hence, for NCNE we need $s_2=s_3=s_4$. For $m=6$, too, we can conclude that the partitions 2-1-1-2, 2-2-2 and 2-4 are possibilities only for rules satisfying $s_2=s_3=s_4 =s_5$ (such as plurality, for which the first two of these three partitions allow NCNE, see \cite{eatonlipsey} or \cite{denzaukatsslutsky}). This leaves only the partition 3-3 as a possible NCNE for rules which are not of this kind. We will elaborate on these results in section \ref{specialcases}. \par\medskip

\section{Rules with almost no NCNE}
\label{ruleswithnoNCNE}

In this section we will identify three quite broad classes of scoring rules for which NCNE do not exist or do not exist with a few well-defined exceptions.  

The first class is all scoring rules with convex scores. These are best-rewarding rules and hence do not allow CNE. We show that such rules do not have NCNE (in fact, they have no NE whatsoever) with the exception of some derivatives of Borda rule. 
The second class consists of rules that satisfy a certain condition on the speed with which the scores are decreasing; we call such rules weakly concave. Rules that have concave scores or symmetric scores (we will explain what this means later) belong to this class. These, in contrast, are are worst-punishing or intermediate, hence, allow CNE by Theorem \ref{CNE}.  We show that weakly concave rules with a mild additional condition do not have NCNE.  We show that if a weakly concave rule has an NCNE, then it is highly nonsymmetric. We give an example of such NCNE. We do not know, however, whether or not rules with concave scores may have NCNE. We leave this question open.
The third class consists of rules that are highly best-rewarding.

\subsection{Convex scores}

We say that the rule $s=(\row sm)$ is \textit{convex} if
\begin{equation}
\label{convexscores}
 s_1-s_2\geq s_2-s_3\geq \ldots \geq s_{m-1}-s_m. 
 \end{equation}
 We note that as soon as $s_i=s_{i+1}$ for some $i$, all the subsequent scores must also be equal for convexity to be satisfied.
 %
 %Consider a scoring rule $s$ and let $1\leq n< m$ be such that
%$$s_n>s_{n+1}=\cdots=s_m.$$ Suppose that $s$ satisfies the property
%That is, the scores are convex with the last $m-n$ scores all equal. In particular, \eqref{convexscores} implies
%$$s_1>s_2>\cdots>s_n>s_{n+1}=s_m,$$
%since as soon as $s_i=s_{i+1}$ for some $i$, all the consecutive scores must also be equal for concavity to be satisfied.
%
We aim to show that such rules, with one class of possible exceptions, have no NCNE; moreover they have no Nash equilibria at all. 
Firstly we show that a convex scoring rule $s$ is either best-rewarding or intermediate. In fact, we show a bit more.

\begin{prop}\label{convexscores1} 
Let $s$ be a scoring rule. Then $s$ is convex  if and only if every nonconstant $m'$-candidate subrule $s'$, where $2\leq m'\leq m$, has  $c(s',m')\geq 1/2$.
\end{prop}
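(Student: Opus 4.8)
The plan is to first translate the threshold condition $c(s',m')\ge 1/2$ into a statement about averages. Writing a subrule as $s'=(s_i,\dots,s_{i+j})$ with $m'=j+1$, and recalling $c(s',m')=(s'_1-\bar{s'})/(s'_1-s'_{m'})$ with strictly positive denominator for a nonconstant $s'$, the inequality $c(s',m')\ge 1/2$ rearranges to $\bar{s'}\le (s_i+s_{i+j})/2$. In words, the condition says that the average score over any contiguous block is at most the average of that block's two endpoint scores. This is exactly a discrete Hermite--Hadamard inequality, so the proposition asserts that convexity of $s$ is equivalent to this endpoint-average bound holding on every nonconstant block.

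For the implication that the block condition forces convexity, I would only need blocks of length three. Fix $t$ and consider the subrule $(s_t,s_{t+1},s_{t+2})$. If it is constant then $s_t+s_{t+2}=2s_{t+1}$ trivially, giving the convexity inequality at $t$ with equality. If it is nonconstant, the endpoint-average bound reads $(s_t+s_{t+1}+s_{t+2})/3\le (s_t+s_{t+2})/2$, which on clearing denominators becomes $2s_{t+1}\le s_t+s_{t+2}$, i.e. $s_t-s_{t+1}\ge s_{t+1}-s_{t+2}$. Letting $t$ range over all admissible indices shows the consecutive differences $s_t-s_{t+1}$ are nonincreasing, which is precisely \eqref{convexscores}.

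The substantive direction is the converse: assuming $s$ convex, I must verify the endpoint-average bound on every nonconstant block. Since the consecutive differences of any contiguous block $a_k:=s_{i+k}$, $0\le k\le n:=j$, form a contiguous subsequence of the nonincreasing difference sequence of $s$, each block is itself convex, so it suffices to prove the discrete Hermite--Hadamard inequality $\frac{1}{n+1}\sum_{k=0}^n a_k\le \frac{a_0+a_n}{2}$ for an arbitrary convex sequence. The key step, which I expect to be the main obstacle, is the chord estimate $a_k\le \frac{(n-k)a_0+k a_n}{n}$ for $0\le k\le n$: subtracting the linear interpolant $\ell_k:=\frac{(n-k)a_0+k a_n}{n}$ yields a convex sequence $a_k-\ell_k$ vanishing at $k=0$ and $k=n$, and a convex sequence that is zero at both endpoints is nonpositive throughout (a discrete maximum principle, provable by a short induction or by noting the difference sequence changes sign at most once). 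Summing the chord estimate over $k$ and using $\sum_{k=0}^n(n-k)=\sum_{k=0}^n k=\binom{n+1}{2}$ gives $\sum_{k=0}^n a_k\le \frac{(n+1)(a_0+a_n)}{2}$; rearranging recovers $\bar{s'}\le (s'_1+s'_{m'})/2$, equivalently $c(s',m')\ge 1/2$, which completes the proof.
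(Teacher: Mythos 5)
Your proof is correct, and your converse direction (three-element blocks force the difference inequalities, with constant blocks handled trivially) is exactly the paper's. Where you genuinely diverge is the forward direction. The paper, after making the same reduction you do (every nonconstant block of a convex rule is convex, so it suffices to prove $s_1+s_m\geq 2\bar{s}$ for a convex rule), argues by pairing symmetric entries: telescoping $s_i-s_{i+1}\geq s_{m-i}-s_{m-i+1}$ gives the chain $s_1+s_m\geq s_2+s_{m-1}\geq s_3+s_{m-2}\geq \cdots$, and summing these pairs bounds $m\bar{s}$ by $\tfrac{m}{2}(s_1+s_m)$, with even and odd $m$ treated separately (in the odd case the lone middle term $2s_{(m+1)/2}$ is handled on its own). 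You instead prove the pointwise chord estimate $a_k\leq \frac{(n-k)a_0+ka_n}{n}$ via the discrete maximum principle and sum over $k$. Both routes are elementary and complete; yours establishes a pointwise stronger fact---each score lies below the linear interpolant---of which the paper's pairing inequality $s_k+s_{m-k+1}\leq s_1+s_m$ is the symmetrized consequence (since $\ell_k+\ell_{n-k}=a_0+a_n$), and it avoids the even/odd case split, at the cost of the extra maximum-principle lemma. A minor point in your favour: you explicitly deal with constant three-element blocks in the converse, where $c(s',3)$ is undefined and the hypothesis is silent; the paper glosses over this (harmless, since the convexity inequality there holds with equality).
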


\begin{proof}
Suppose $s$ satisfies \eqref{convexscores}. It suffices to show the rule itself is best-rewarding or intermediate, since any nonconstant subrule also satisfies \eqref{convexscores}. We have, for any $1\leq i\leq \lfloor m/2 \rfloor$, 
$$ s_i-s_{i+1}\geq s_{i+1}-s_{i+2}\geq \ldots \geq s_{m-i}-s_{m-i+1}. $$ 
In particular, all we need is
\begin{equation}
\label{convexscores3} 
s_i-s_{i+1} \geq s_{m-i}-s_{m-i+1}.
\end{equation}  
Suppose $m$ is even. Equation \eqref{convexscores3} implies
$$ s_1+s_m\geq s_2+s_{m-1}\geq \cdots \geq s_{m/2}+s_{m/2+1}.$$
Then
\begin{align*}\bar{s}= \frac{1}{m}\sum_{i=1}^ms_i&=\frac{(s_1+s_m)+(s_2+s_{m-1})+\cdots+(s_{m/2}+s_{m/2+1})}{m}  \\ &\leq \frac{m/2}{m}(s_1+s_m) = \frac{1}{2}(s_1+s_m).\end{align*}
Suppose $m$ is odd. Then  \eqref{convexscores3} implies
$$ s_1+s_m\geq s_2+s_{m-1}\geq \cdots \geq s_{(m-1)/2}+s_{(m-1)/2+2} \geq 2s_{(m+1)/2}.$$
Then, letting $k=(m-1)/2$, we have
\begin{align*}\bar{s}= \frac{1}{m}\sum_{i=1}^ms_i&=\frac{s_1+\cdots +s_k+s_{k+1}+\cdots+s_m }{m} \\
&=\frac{(s_1+s_m)+\cdots +(s_{k}+s_{k+2})+s_{k+1}}{m} \\
&\leq \frac{k(s_1+s_m)+s_{k+1}}{m}\\
&\leq \left(1-\frac{1}{2m}\right)(s_1+s_m)+\frac{1}{2m}(s_1+s_m) =\frac{1}{2}(s_1+s_m). \end{align*}
So in both cases $s_1+s_m\geq 2\bar{s}$, which is equivalent to $c(s,m)\geq 1/2$.

Conversely, suppose every nonconstant subrule $s'$ is best-rewarding or intermediate. Then any 3-candidate subrule $s'=(s_i,s_{i+1},s_{i+2})$ has $c(s',3)\geq 1/2$, which is equivalent to $s_i-s_{i+1}\geq s_{i+1}-s_{i+2}$, so \eqref{convexscores} is satisfied. 
\end{proof}

\begin{prop}
\label{convexscores2'} 
Let $s$ be a convex scoring rule. Then both of the following conditions:
\begin{enumerate}
\item[(a)]  all inequalities in \eqref{convexscores} are equalities,
\item[(b)] $s$ satisfies  $ s_1+s_{m}=\frac{2}{m}\sum_{i=1}^{m}s_i$,
\end{enumerate}
are equivalent to $s$ being a Borda rule.\footnote{Borda rule was defined in the introduction.}
\end{prop}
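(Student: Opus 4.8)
The plan is to prove the three-way equivalence by first disposing of the trivial link between (a) and the Borda rule, and then treating the genuine content, namely the equivalence of (b) with (a). Write $d_j=s_j-s_{j+1}\ge 0$ for the consecutive gaps, so that convexity \eqref{convexscores} reads $d_1\ge d_2\ge\cdots\ge d_{m-1}$. Condition (a) says that all the $d_j$ coincide with a common value $d$; since $s$ is a scoring rule we have $s_1>s_m$ and hence $d>0$, so $s_i=s_m+(m-i)d$ is an arithmetic progression, i.e.\ an affine image of $(m-1,\dots,1,0)$, which is exactly the Borda rule. The converse is immediate, so (a) $\iff$ Borda needs no further work.

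For (a) $\Rightarrow$ (b) I would observe that the mean of an arithmetic progression equals the average of its endpoints: if all $d_j=d$ then $\bar s=\frac1m\sum_i s_i=\frac12(s_1+s_m)$, which rearranges to (b). The one substantive implication is (b) $\Rightarrow$ (a). Here I would use the elementary telescoping identity
\[
\sum_{i=1}^m s_i-\frac m2\,(s_1+s_m)=\sum_{j=1}^{m-1}\Bigl(j-\frac m2\Bigr)d_j,
\]
obtained by substituting $s_i=s_1-\sum_{j<i}d_j$ and $s_m=s_1-\sum_{j}d_j$ and collecting terms. Condition (b) is precisely the statement that the left-hand side vanishes, i.e.\ $\sum_{j=1}^{m-1}\bigl(j-\tfrac m2\bigr)d_j=0$.

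The heart of the argument—and the step I expect to be the main obstacle—is extracting (a) from this single scalar equation. I would pair the index $j$ with $m-j$: the combined contribution of such a pair is $\bigl(\tfrac m2-j\bigr)(d_{m-j}-d_j)$, which is $\le 0$ for $j<m/2$ because $\tfrac m2-j>0$ while convexity gives $d_{m-j}\le d_j$ (and the middle term $j=m/2$, present only when $m$ is even, contributes $0$). Thus the whole sum is a sum of nonpositive terms, recovering the bound $\bar s\le\frac12(s_1+s_m)$ already seen in the proof of Proposition~\ref{convexscores1}; and it can vanish only if every pair vanishes, forcing $d_j=d_{m-j}$ for all $j<m/2$. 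Taking the extreme pair $j=1$ yields $d_1=d_{m-1}$, and the monotone sandwich $d_1\ge d_2\ge\cdots\ge d_{m-1}=d_1$ then collapses to $d_1=\cdots=d_{m-1}$, which is exactly (a). Care must be taken only to phrase the equality analysis cleanly across both parities of $m$, mirroring the even/odd split in Proposition~\ref{convexscores1}; alternatively one can bypass the identity altogether and directly invoke the chain $s_1+s_m\ge s_2+s_{m-1}\ge\cdots$ established there, whose termwise equality under (b) gives $d_i=d_{m-i}$, and hence (a), at once.
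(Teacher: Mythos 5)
Your proof is correct, and all three implications are established: (a) $\Leftrightarrow$ Borda via the affine-transformation argument (matching the paper exactly, including the point that $s_1>s_m$ forces $d>0$), (a) $\Rightarrow$ (b) via the arithmetic-mean observation, and (b) $\Rightarrow$ (a) via your telescoping identity. Your main route for the substantive direction differs in packaging from the paper's: the paper argues directly with the chain $s_1+s_m\ge s_2+s_{m-1}\ge s_3+s_{m-2}\ge\cdots$ already derived in the proof of Proposition~\ref{convexscores1}, asserting that equality of $s_1+s_m$ with $\frac{2}{m}\sum_i s_i$ forces all terms of that chain to coincide, which in turn forces all equalities in \eqref{convexscores}; you instead recast everything in the gap variables $d_j=s_j-s_{j+1}$ and read (b) as the vanishing of $\sum_{j}\bigl(j-\tfrac m2\bigr)d_j$, then pair $j$ with $m-j$. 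The two arguments are mathematically the same pairing symmetry, and indeed the ``alternative'' you sketch in your last sentence is precisely the paper's proof. What your version buys is that it makes fully explicit a step the paper leaves terse: from $d_j=d_{m-j}$ for all $j<m/2$ one still needs the monotone sandwich $d_1\ge d_2\ge\cdots\ge d_{m-1}=d_1$ to conclude that \emph{all} gaps are equal, and you carry this out cleanly, whereas the paper compresses it into ``this is possible only if we had equalities in \eqref{convexscores}.'' The cost is that you re-derive by hand an inequality (that the sum of paired terms is nonpositive, i.e.\ $\bar s\le\frac12(s_1+s_m)$) which the paper simply cites from Proposition~\ref{convexscores1}.
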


\begin{proof}
(a) Let $d$ be the common value of all the differences in \eqref{convexscores}. Then $s_i=(m-i)d+s_m$.
%\[
%s=((m-1)d+s_m,(m-2)d+s_m,\ldots, d+s_m,s_m).
%\]
Subtracting $s_m$ from all scores does not change the rule. Dividing all the scores by $d$ after that does not change it either. But then we will get the canonical Borda score vector with $s_i=m-i$.

(b) The condition \eqref{convexscores} implies
\begin{equation}
\label{inequalities_of_sums}
s_1+s_m\ge s_2+s_{m-1}\ge s_3+s_{m-2}\ge \ldots
\end{equation}
from which $ s_1+s_{m}\ge \frac{2}{m}\sum_{i=1}^{m}s_i$. An equality here is possible only if we had all equalities in \eqref{inequalities_of_sums} and  this is possible only if we had equalities in \eqref{convexscores}. Now the result follows from (a).
\end{proof}

%\begin{lemma}
%\label{45cand1} 
%Suppose $m=k+1+j$ for some $k,j\geq 2$. Let $x$ be a profile where candidate $i$ in location $x^l$ is unpaired. Suppose 
%there are $k$ candidates to the left of $x_i$ and and $j$ candidates to the right of $x_i$. Then $x$ cannot be in NCNE unless $s_{k+1}=s_{j+1}$.
%\end{lemma}

%\begin{proof} Follows immediately from Proposition~\ref{insidetheinterval2}.
%\end{proof}

\iffalse
Recall that by Theorem \ref{CNE}, there can be no CNE for a best-rewarding rule: there is always an incentive to deviate to one side, as capturing the first-place ranking of the voters on one side is more valuable than sharing the total score from all the voters.
%
In the case of NCNE under a rule with convex scores, this phenomenon repeats itself at a local level. From the point of view of a candidate at some occupied position, the issue space can be partitioned into a number of subintervals. Each subinterval corresponds to a subset of voters who all rank the candidate in the same way, and is associated with some subrule of the original rule. Since this subrule is best-rewarding by Proposition \ref{convexscores1}, there is an incentive to capture the maximum possible ranking from some fraction of the voters in the given subinterval rather than share the votes from the subinterval. The combined effect of these local incentives produces an overall incentive to deviate.
\fi

Now we can prove the main theorem of this section.

\begin{thm}
\label{convexscores2} 
Let $s$ be a scoring rule with convex scores and let $1\leq n <m$ be such that $s_n>s_{n+1}=\cdots=s_m$. Then there are no NCNE, unless the subrule $s'=(s_1,\cdots,s_n,s_{n+1})$ is Borda and $n+1\leq \lfloor m/2 \rfloor$ (i.e., more than half the scores are constant).
 \end{thm}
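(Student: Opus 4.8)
The plan is to assume a convex rule $s$ admits an NCNE $x$ and to force its score vector into the stated exceptional shape. Since $s_1>s_2$, Lemma~\ref{generalprops1} (with $k=1$) gives $n_1,n_q\ge 2$, so both end clusters contain at least two candidates. Set $n_*=\min(n_1,n_q)$; as $q\ge 2$ we have $n_1+n_q\le m$, hence $n_*\le\lfloor m/2\rfloor$. By the left--right symmetry of the model I may assume the smaller end cluster sits at $x^1$, so $n_1=n_*\ge 2$, and the whole argument will be run on a single candidate $i$ in this cluster.

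The engine is a convex refinement of Lemma~\ref{45cand2}. For a candidate $i$ in a cluster of size $n_l$, decompose the electorate exactly as in the proof of Lemma~\ref{45cand2} into the blocks of voters who rank the entire $x^l$-cluster in a fixed band of positions $[a+1,a+n_l]$. On such a block $i$ earns the mean $\tfrac{1}{n_l}\sum_{j=a+1}^{a+n_l}s_j$ at $x$, while peeling off to $x^{l-}$ or to $x^{l+}$ replaces this by $s_{a+1}$ on one side of $x^l$ and $s_{a+n_l}$ on the other, so that
\[
v_i(x^{l-},x_{-i})+v_i(x^{l+},x_{-i})=\sum_{\text{blocks}}\bigl(s_{a+1}+s_{a+n_l}\bigr)\ell,
\]
the two halves of the central block recombining into a single length $\ell$. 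Convexity of the scores means that on each block the average of the endpoints dominates the mean, $s_{a+1}+s_{a+n_l}\ge \tfrac{2}{n_l}\sum_{j=a+1}^{a+n_l}s_j$, with equality precisely when those scores are in arithmetic progression. Hence $v_i(x^{l-},x_{-i})+v_i(x^{l+},x_{-i})\ge 2v_i(x)$. Because $x$ is an NCNE neither one-sided deviation can pay, so $v_i(x^{l-},x_{-i})\le v_i(x)$ and $v_i(x^{l+},x_{-i})\le v_i(x)$; added to the inequality this forces equality throughout. Thus $v_i(x^{l-},x_{-i})=v_i(x^{l+},x_{-i})=v_i(x)$ and the scores are arithmetic on every block of positive length. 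Applied to $i$ at $x^1$, the central block (where the cluster is ranked first, $a=0$) shows that $s_1,\dots,s_{n_1}$ is arithmetic.

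It remains to prove $n_1\ge n+1$. Since $v_i(x^{1+},x_{-i})=v_i(x)$ and, by Proposition~\ref{insidetheinterval2}, $v_i(\cdot,x_{-i})$ is linear on $(x^1,x^2)$ with slope $(s_{n_1}-s_{m-n_1+1})/2$, a positive slope would give a profitable move just to the right; hence $s_{n_1}\le s_{m-n_1+1}$. As $n_1=n_*\le\lfloor m/2\rfloor$ we have $2n_1\le m$, so $n_1<m-n_1+1$ and, the scores being nonincreasing, $s_{n_1}=s_{n_1+1}=\dots=s_{m-n_1+1}$; since $2n_1\le m$ this run genuinely contains the consecutive pair $(n_1,n_1+1)$. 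A vanishing difference in \eqref{convexscores} propagates to the end of a convex vector, so $s_{n_1}=\dots=s_m$, and because the constant tail of $s$ begins exactly at index $n+1$ this gives $n_1\ge n+1$. Combining the two conclusions, $s_1,\dots,s_{n+1}$ is arithmetic, i.e. all the inequalities in \eqref{convexscores} up to index $n$ are equalities, so by Proposition~\ref{convexscores2'}(a) the subrule $s'=(s_1,\dots,s_{n+1})$ is Borda; moreover $n+1\le n_1=n_*\le\lfloor m/2\rfloor$. Thus any NCNE forces precisely the exceptional shape, and in every other case none exists. I expect the delicate point to be the convex averaging inequality together with the identification of its equality case as blockwise arithmeticity; the device of running the slope computation on the \emph{smaller} end cluster, which makes $n_*\le\lfloor m/2\rfloor$ and so keeps the middle stretch $s_{n_1},\dots,s_{m-n_1+1}$ nontrivial, is what lets me avoid parity casework and reach the sharp bound $n+1\le\lfloor m/2\rfloor$.
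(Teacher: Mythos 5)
Your proof is correct and follows essentially the same route as the paper's: the same decomposition into blocks of voters ranking the end cluster identically, the same summed one-sided-deviation inequality $v_i(x^{1-},x_{-i})+v_i(x^{1+},x_{-i})\ge 2v_i(x)$ exploiting convexity with its equality case (Proposition \ref{convexscores2'}), and the same slope argument from Proposition \ref{insidetheinterval2} giving $s_{n_1}=s_{m-n_1+1}$ and hence $n_1\ge n+1$. The only difference is bookkeeping: the paper additionally extracts $n_1\le n+1$ from the block subrules to pin down $n_1=n+1$ exactly, whereas you bypass this by noting that a prefix of an arithmetic progression is arithmetic---a harmless streamlining of the same argument.
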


\begin{proof} Let $x$ be a profile. Consider candidate 1 at $x^1$. Without loss of generality, assume $n_1\leq \lfloor m/2 \rfloor$, since at least one of the two end positions has less than half the candidates. Let $I_1=[0,x^1]$ and $I_2=[x^1,(x^1+x^2)/2]$. The rest of the issue space to the right of $(x^1+x^2)/2$ can be partitioned into subintervals 
\[
J_1=\left[\frac{x^1+x^2}{2},\frac{x^1+x^3}{2}\right), \ldots, J_{j}=\left[\frac{x^1+x^{j+1}}{2},\frac{x^1+x^{j+2}}{2}\right),\ldots, J_{q-1}= \left[\frac{x^1+x^q}{2},1\right], 
\]
where voters in each of these intervals rank candidate 1 in the same way. More specifically, candidate 1 shares $k_i$-th through to $(k_i+n_1-1)$-th place in the rankings of all voters in $J_i$, for some $k_i\geq 1$ such that $k_i+n_1-1\leq m$. Then 1's score is
$$ 
v_1(x)=\left(\frac{1}{n_1}\sum_{i=1}^{n_1}s_i\right) (\ell(I_1)+\ell(I_2))+\sum_{j=1}^{q-1}\left(\frac{1}{n_1}\!\!\sum_{i=k_j}^{k_j+n_1-1}\!\!\!\!\! s_i\right)\ell(J_j). 
$$
If candidate 1 moves infinitesimally to the left, then 
$$
v_1(x^{1-},x_{-1})=s_1\ell(I_1)+s_{n_1}\ell(I_2)+\sum_{j=1}^{q-1} s_{k_j+n_1-1}\ell(J_j).
$$
Similarly, if she moves infinitesimally to the right, then 
$$
v_1(x^{1+},x_{-1})=s_1\ell(I_2)+s_{n_1}\ell(I_1)+\sum_{j=1}^{q-1} s_{k_j}\ell(J_j).
$$
Let $x$  be in NCNE. Then $v_1(x^{1-},x_{-1})\leq v_1(x)$ and $v_1(x^{1+},x_{-1})\leq v_1(x)$. This implies that $v_1(x^{1-},x_{-1})+v_1(x^{1+},x_{-1})\leq 2v_1(x)$.
That is, 
\begin{align*} 
(s_1+s_{n_1})(\ell(I_1)+\ell(I_2)&)+\sum_{j=1}^{q-1}(s_{k_j}+s_{k_j+n_1-1})\ell(J_j)\\ &\leq  \left(\frac{2}{n_1}\sum_{i=1}^{n_1}s_i\right) (\ell(I_1)+\ell(I_2))+2\sum_{j=1}^{q-1}\left(\frac{1}{n_1}\!\!\sum_{i=k_j}^{k_j+n_1-1}\!\!\!\!\!s_i\right)\ell(J_j), 
\end{align*}
which implies 
\begin{align}\label{convexscores2.1}
\left(s_1+s_{n_1}-\frac{2}{n_1}\sum_{i=1}^{n_1}s_i\right)&(\ell(I_1)+\ell(I_2))\nonumber \\
&+\sum_{j=1}^{q-1}\left(s_{k_j}+s_{k_j+n_1-1}-\frac{2}{n_1}\!\!\sum_{i=k_j}^{k_j+n_1-1}\!\!\!\!\!s_i\right)\ell(J_j)\leq 0.
\end{align}
%
%By Proposition~\ref{convexscores1}, every $n_1$-candidate subrule $s'$ has $c(s',n_1)\geq 1/2$ or is constant. This is equivalent to saying, $s_{l}+s_{l+n_1-1}\geq \frac{2}{n_1}\sum_{i=l}^{l+n_1-1}s_i$ for all $l\geq 1$ such that $l+n_1-1\leq m$. Thus, each term on the left-hand side of \eqref{convexscores2.1} is nonnegative. If one or more of these terms is positive, then we have a contradiction and hence no NCNE exist. The only other possibility is that all these terms are equal to zero, which implies each of the $n_1$-candidate subrules appearing in the expression is equal to Borda or is constant (in particular, the rule $s'=(s_1,\ldots,s_{n_1})$ must be Borda, since by Lemma \ref{generalprops1} $s_1>s_{n_1}$). \par
%
We know that the convexity of the scores implies $s_{l}+s_{l+n_1-1}\geq \frac{2}{n_1}\sum_{i=l}^{l+n_1-1}s_i$ for all $l\geq 1$ such that $l+n_1-1\leq m$. Thus, each term on the left-hand side of \eqref{convexscores2.1} is nonnegative. If one or more of these terms is positive, then we have a contradiction and hence no NCNE exist. The only other possibility is that all these terms are equal to zero, which by Proposition~\ref{convexscores2'} implies each of the \mbox{$n_1$-candidate} subrules appearing in the expression is equal to Borda or is constant (in particular, the rule $s'=(s_1,\ldots,s_{n_1})$ must be Borda, since $s_1>s_{n_1}$ by Lemma \ref{generalprops1}). In particular we get $n_1\le n+1$. \par

If this is the case then $v_1(x^{1-},x_{-1})+v_1(x^{1+},x_{-1})=2v_1(x)$, so for $x$ to be in NCNE we must have $v_1(x^{1-},x_{-1})=v_1(x^{1+},x_{-1})=v_1(x)$. 
Then, for $x$ to be in NCNE we must have $v_1(t,x_{-1})\leq v_1(x)=v_1(x^{1+},x_{-1})$ for any $t\in (x^1,x^2)$, that is, the score cannot increase as 1 moves to the right  from $x^1$. 
By Proposition~\ref{insidetheinterval2}, the slope of the linear function $v_1(t,x_{-1})$ for $t\in (x^1,x^2)$  is $\frac{1}{2}s_{n_1}-\frac{1}{2}s_{m-n_1+1}$ and since it is nonincreasing we have $\frac{1}{2}s_{n_1}-\frac{1}{2}s_{m-n_1+1}\le  0$. On the other hand, $\frac{1}{2}s_{n_1}-\frac{1}{2}s_{m-n_1+1}\ge  0$ since $n_1<m-n_1+1$. We conclude therefore that $s_{n_1}= s_{m-n_1+1}$.
This means that the scores have stabilised on or earlier than $s_{n_1}$, whence $n_1\ge n+1$. As $n_1\le n+1$, we must now conclude that $n_1=n+1$. 

Hence, there are no NCNE unless the subrule $s'=(s_1,\ldots,s_n,s_{n+1})$ is Borda and $n+1\leq \lfloor m/2 \rfloor$.
\end{proof} 

For the special case where $s$ has convex scores, the subrule $s'=(s_1,\ldots,s_{n+1})$ is Borda and the scores from $s_{n+1}$ through $s_m$ are constant,  Theorem~\ref{convexscores2} says nothing and for good reason since here NCNE can actually exist. This will follow from  Theorem~\ref{multipositional} and Example~\ref{exceptionexamp}.\par\medskip

Rules satisfying the conditions of Theorem~\ref{convexscores2} include Borda (for which nonexistence of NCNE also follows from Theorem \ref{WPnoNCNE2}) as well as the following examples.

\begin{examp}\label{convexscoresexample} 
The following rules have convex scores and, hence, no NCNE.
\begin{enumerate}
\item[(i)] Given $s_1$, define a scoring rule by $s_i=s_1/\alpha^{i-1}$ for $2\leq i\leq m$, where $\alpha >1$. That is, multiply the previous score by the same factor $1/\alpha$ each time. 
\item[(ii)] Given $s_m$, define a rule by $s_{m-i}=s_{m-i+1}+\alpha i$ for $1\leq i\leq m-1$, where $\alpha>0$. That is, add an increasing amount each time.
\item[(iii)]   The rule $s=(1,s_2,0,\ldots,0)$, for any $0<s_2<1/2$. The significance of this is that even a slight deviation from plurality destroys the NCNE which plurality is known to possess. 
\end{enumerate}
\end{examp}

\subsection{Concave scores}
\label{WPnoNCNEsection}

We say that the rule $s=(\row sm)$ is \textit{concave} if
\begin{equation}
\label{concavescores}
 s_1-s_2\leq s_2-s_3\leq \ldots \leq s_{m-1}-s_m. 
 \end{equation}

Most our positive results are, however, applicable to a larger class of rules which we call weakly concave. We say that a scoring rule is {\it weakly concave} if it obeys the following property:
\begin{equation}\label{WPnoNCNE} s_i-s_{i+1}\leq s_{m-i}-s_{m-i+1},\end{equation}
for all $1\leq i\leq \lfloor m/2 \rfloor$. That is, the difference between consecutive scores at the top end must not be larger than the corresponding difference at the bottom end. If we always have an equality in \eqref{WPnoNCNE} we say that the rule is \textit{symmetric}.

\begin{prop}\label{WPnoNCNE1} A weakly concave rule  is either worst-punishing or  intermediate. That is, $c(s,m)\leq 1/2$.\end{prop}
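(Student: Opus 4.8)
The plan is to mirror exactly the argument of Proposition~\ref{convexscores1}, replacing the convexity inequality \eqref{convexscores3} by its reverse. First I would record that, since $s_1 > s_m$, the target inequality $c(s,m)\le 1/2$ is equivalent (via the formula $c(s,m)=(s_1-\bar s)/(s_1-s_m)$) to $s_1+s_m\le 2\bar s$. So it suffices to prove $s_1+s_m\le \frac{2}{m}\sum_{i=1}^m s_i$.

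Next I would unpack the definition of weak concavity. The defining inequality \eqref{WPnoNCNE}, $s_i-s_{i+1}\le s_{m-i}-s_{m-i+1}$, rearranges to $s_i+s_{m-i+1}\le s_{i+1}+s_{m-i}$, so the symmetric pair sums form a nondecreasing chain
$$ s_1+s_m\le s_2+s_{m-1}\le s_3+s_{m-2}\le \cdots, $$
which is the exact reverse of \eqref{inequalities_of_sums}. Since $s_1+s_m$ is now the \emph{smallest} of the pair sums, pairing the scores from the outside in and bounding each pair sum below by $s_1+s_m$ will deliver the desired lower bound on $\bar s$.

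Then I would split into the two parity cases exactly as in Proposition~\ref{convexscores1}. If $m$ is even the chain terminates at $s_{m/2}+s_{m/2+1}$, and summing the $m/2$ pair sums gives $\sum_{i=1}^m s_i \ge \frac{m}{2}(s_1+s_m)$. If $m$ is odd, writing $k=(m-1)/2$ the chain runs up to $s_k+s_{k+2}$ with a leftover middle term $s_{k+1}$; here the instance $i=k$ of \eqref{WPnoNCNE} (for which $m-i=k+1$) reads $s_k-s_{k+1}\le s_{k+1}-s_{k+2}$, i.e. $2s_{k+1}\ge s_k+s_{k+2}\ge s_1+s_m$, so $s_{k+1}\ge \tfrac12(s_1+s_m)$; adding the $k$ pair sums to $s_{k+1}$ again yields $\sum_{i=1}^m s_i\ge (k+\tfrac12)(s_1+s_m)=\frac{m}{2}(s_1+s_m)$. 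In both cases $2\bar s\ge s_1+s_m$, equivalently $c(s,m)\le 1/2$.

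I expect no genuine obstacle here: the only point requiring care is the odd case, where the middle score $s_{k+1}$ must be bounded below by $\tfrac12(s_1+s_m)$ using the endpoint instance of \eqref{WPnoNCNE}. As an alternative that sidesteps the case split entirely, one may pass to the dual rule $\hat s_i=s_1+s_m-s_{m-i+1}$: its consecutive differences are $\hat s_i-\hat s_{i+1}=s_{m-i}-s_{m-i+1}$, so $\hat s$ is convex precisely when $s$ is weakly concave, and a direct computation gives $\bar{\hat s}=s_1+s_m-\bar s$ and hence $c(\hat s,m)=1-c(s,m)$. Applying Proposition~\ref{convexscores1} to $\hat s$ yields $c(\hat s,m)\ge 1/2$, i.e. $c(s,m)\le 1/2$, immediately.
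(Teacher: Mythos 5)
Your main argument (the first three paragraphs) is correct and is essentially the paper's own proof. The paper handles this proposition in one line---weak concavity \eqref{WPnoNCNE} is exactly condition \eqref{convexscores3} with the inequalities reversed, so reversing every inequality in the proof of Proposition \ref{convexscores1} gives $c(s,m)\le 1/2$---and what you have written is precisely that reversed argument carried out in full, with the same even/odd case split and the same handling of the middle score $s_{k+1}$ in the odd case.

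The ``alternative that sidesteps the case split,'' however, contains a genuine error. Your duality computations are fine: indeed $\hat s_i-\hat s_{i+1}=s_{m-i}-s_{m-i+1}$, $\bar{\hat s}=s_1+s_m-\bar s$ and $c(\hat s,m)=1-c(s,m)$. But the claim that $\hat s$ is convex precisely when $s$ is weakly concave is false. Convexity of $\hat s$ requires $\hat s_i-\hat s_{i+1}\ge \hat s_{i+1}-\hat s_{i+2}$ for \emph{every} $i$, which translates into $s_j-s_{j+1}\le s_{j+1}-s_{j+2}$ for every $j$, i.e.\ full concavity \eqref{concavescores} of $s$, a strictly stronger property than weak concavity \eqref{WPnoNCNE}. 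A counterexample: $s=(10,9,9,9,0)$ is weakly concave (the two required inequalities are $1\le 9$ and $0\le 0$), but its dual $\hat s=(10,1,1,1,0)$ is not convex, since $\hat s_3-\hat s_4=0<1=\hat s_4-\hat s_5$. Consequently Proposition \ref{convexscores1}, whose hypothesis is convexity, cannot be applied to $\hat s$. What weak concavity of $s$ does give is that $\hat s$ satisfies \eqref{convexscores3}, and the forward direction of the proof of Proposition \ref{convexscores1} only ever uses \eqref{convexscores3}; but that repair is exactly your main argument in disguise, so the shortcut buys nothing. Since the alternative is offered only as a remark, your proof stands; simply delete or repair the final paragraph.
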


\begin{proof} Note that \eqref{WPnoNCNE} is condition \eqref{convexscores3} with the inequalities reversed. Hence, reversing all the inequalities in Proposition \ref{convexscores1}, we obtain $c(s,m)\leq 1/2$.
\end{proof}

Before we can prove the main result of this section, we will need one more lemma.

\begin{lemma}\label{WPnoNCNElemma} 
If $s$ is a weakly concave rule, then
\begin{equation}
\label{WPnoNCNElemma.1}
 s_j+s_{m-j+1}\geq \frac{1}{j}\left(\sum_{i=1}^js_i+\!\!\!\!\!\sum_{i=m-j+1}^m\!\!\!\!\!s_i\right)
 \end{equation}
for all $1\leq j \leq \lfloor m/2 \rfloor$. Moreover, if  $s$  satisfies 
\begin{equation}
\label{WPnoNCNElemma.2} 
s_k+s_{m-k+1}\geq \frac{1}{k}\left(\sum_{i=1}^ks_i+\!\!\!\!\!\sum_{i=m-k+1}^m\!\!\!\!\!s_i\right)
\end{equation} 
for some $k>\lfloor m/2 \rfloor$, then inequality \eqref{WPnoNCNElemma.1} holds for all $1\leq j\leq k$. 
\end{lemma}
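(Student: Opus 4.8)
The plan is to recast the whole statement in terms of the \emph{paired sums} $a_i := s_i + s_{m-i+1}$. Two elementary observations drive everything. First, the weakly concave inequality \eqref{WPnoNCNE}, $s_i - s_{i+1}\le s_{m-i}-s_{m-i+1}$, rearranges to $s_i+s_{m-i+1}\le s_{i+1}+s_{m-i}$, i.e. $a_i\le a_{i+1}$; thus weak concavity says exactly that $a_1\le a_2\le\cdots\le a_{\lfloor m/2\rfloor}$. Second, reindexing the lower sum by $i\mapsto m-i+1$ turns the right-hand side of \eqref{WPnoNCNElemma.1} into the average $\frac1j\sum_{i=1}^j a_i$, while its left-hand side is $a_j$. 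Hence both assertions of the lemma reduce to the single inequality
$$D_j := j\,a_j-\sum_{i=1}^j a_i=\sum_{i=1}^{j-1}(a_j-a_i)\ \ge\ 0.$$
The first statement is then immediate: for $j\le\lfloor m/2\rfloor$ every index $i\le j$ satisfies $a_i\le a_j$ by the monotonicity above, so each summand $a_j-a_i$ is nonnegative and $D_j\ge 0$.

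For the ``moreover'' part I would exploit a telescoping identity together with the symmetry of $(a_i)$. A one-line computation gives the recurrence
$$D_j-D_{j-1}=(j-1)(a_j-a_{j-1}),\qquad D_1=0,$$
so the increments of $D_j$ have the same sign as those of $a_j$. Now $a_i=a_{m-i+1}$ is palindromic, and combined with $a_1\le\cdots\le a_{\lfloor m/2\rfloor}$ this forces the full sequence $a_1,\ldots,a_m$ to be nondecreasing up to the central index $\lceil m/2\rceil$ and nonincreasing thereafter. Feeding this into the recurrence shows $D_j$ is nondecreasing for $j\le\lceil m/2\rceil$ and nonincreasing for $j\ge\lceil m/2\rceil$, i.e. $(D_j)$ is single-peaked.

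The conclusion then follows from the position of $k$. Since $k>\lfloor m/2\rfloor$ we have $k\ge\lceil m/2\rceil$ (checking both parities of $m$), so $k$ lies on the nonincreasing branch. Consequently, for every $j$ with $\lceil m/2\rceil\le j\le k$ we get $D_j\ge D_k\ge 0$, where $D_k\ge 0$ is precisely hypothesis \eqref{WPnoNCNElemma.2}; and for $1\le j\le\lceil m/2\rceil$ we get $D_j\ge D_1=0$. Combining the two ranges gives $D_j\ge 0$ for all $1\le j\le k$, which is \eqref{WPnoNCNElemma.1}.

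The main point requiring care is the structure of $(a_i)$ beyond the midpoint: weak concavity only controls the top half of the score vector directly, so the monotone-decreasing behaviour on the lower half — and hence the single-peakedness of $(D_j)$ that makes the ``$D_k\ge 0\Rightarrow D_j\ge 0$ for $j\le k$'' step work — has to be recovered entirely from the symmetry $a_i=a_{m-i+1}$, with a little bookkeeping about the parity of $m$ when pinning the peak at $\lceil m/2\rceil$. Everything else is routine rearrangement.
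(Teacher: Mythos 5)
Your proof is correct and is essentially the paper's own argument in tidier notation: the first part is the same chain of pair-sum inequalities $s_1+s_m\le s_2+s_{m-1}\le\cdots$ summed up, and your recurrence $D_j-D_{j-1}=(j-1)(a_j-a_{j-1})$ plus unimodality is exactly the paper's downward induction ($D_k\ge 0\Rightarrow D_{k-1}\ge 0$, driven by the same fact that the pair sums are nonincreasing past the midpoint). One small repair: for odd $m$ the peak cannot be pinned at $\lceil m/2\rceil$ ``entirely from the symmetry'' as you write, since the palindromic identity $a_i=a_{m-i+1}$ says nothing about how the central term $a_{(m+1)/2}$ compares with its neighbours; instead, observe that \eqref{WPnoNCNE} holds for every $i\le\lfloor m/2\rfloor$ and therefore gives the chain $a_1\le\cdots\le a_{\lfloor m/2\rfloor+1}$ --- one index further than the chain you state, and exactly the step needed to place the peak at $\lceil m/2\rceil$ in both parities.
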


\begin{proof} Let $1\leq j \leq \lfloor m/2\rfloor$. Equation \eqref{WPnoNCNE} implies that 
$$ s_1+s_m\leq s_2+s_{m-1}\leq \cdots \leq s_{j}+s_{m-j+1},$$
whence $$ \sum_{i=1}^{j}s_i+\!\!\!\sum_{i=m-j+1}^m\!\!\!\!s_i=\sum_{i=1}^{j}(s_i+s_{m-i+1}) \leq  j(s_{j}+s_{m-j+1}),$$
which, on dividing by $j$, gives \eqref{WPnoNCNElemma.1}. This proves the first part of the lemma.

Now suppose that \eqref{WPnoNCNElemma.2} holds for some $k>\lfloor m/2\rfloor$. The statement will be proved by induction if we can prove that  \eqref{WPnoNCNElemma.1} holds for $j=k-1$. If $j=\lfloor m/2\rfloor$ the statement follows from the first part of the lemma. So assume $k>\lfloor m/2\rfloor+1$.  We have 
%\begin{align*}
 %\left(\sum_{i=1}^ks_i+\!\!\!\sum_{i=m-k+1}^m\!\!\!\!s_i\right)-\left(\sum_{i=1}^{k-1}s_i+\!\!\!\sum_{i=m-k+2}^m\!\!\!\!s_i\right)&=s_k+s_{m-k+1} \\
%&\geq  \frac{1}{k}\left(\sum_{i=1}^ks_i+\!\!\!\sum_{i=m-k+1}^m\!\!\!\!s_i\right).
%\end{align*}
\begin{equation*}
\left(\sum_{i=1}^ks_i+\!\!\!\!\!\!\sum_{i=m-k+1}^m\!\!\!\!s_i\right)-\left(\sum_{i=1}^{k-1}s_i+\!\!\!\!\!\!\sum_{i=m-k+2}^m\!\!\!\!s_i\right)=s_k+s_{m-k+1} 
\geq  \frac{1}{k}\left(\sum_{i=1}^ks_i+\!\!\!\!\!\!\sum_{i=m-k+1}^m\!\!\!\!s_i\right).
\end{equation*}
This rearranges to give
$$ \frac{1}{k}\left(\sum_{i=1}^ks_i+\!\!\!\sum_{i=m-k+1}^m\!\!\!\!s_i\right)\geq \frac{1}{k-1}\left(\sum_{i=1}^{k-1}s_i+\!\!\!\sum_{i=m-k+2}^m\!\!\!\!s_i\right).$$
Since $k>\lfloor m/2 \rfloor +1$, we have $m-k+1\leq \lfloor m/2\rfloor$ and hence by \eqref{WPnoNCNE} we conclude  $s_{m-k+1}+s_k\leq s_{m-k+2}+s_{k-1}$. Putting things together, we get
$$ s_{m-k+2}+s_{k-1}\geq \frac{1}{k-1}\left(\sum_{i=1}^{k-1}s_i+\!\!\!\sum_{i=m-k+2}^m\!\!\!\!s_i\right).$$
Thus, equation \eqref{WPnoNCNElemma.2} holds for $j=k-1$, which proves the induction step.
\end{proof}

Next, we show a weakly concave rule  has no NCNE in which each of the end positions is occupied by less than half the candidates. If \eqref{WPnoNCNElemma.2} also holds for $k=m-3$, which is condition \eqref{WPnoNCNE2.1}  below, %(see the end of this section for a remark about this condition), 
then we can rule out NCNE altogether. %The logic behind the result is that, since the scores are changing more rapidly at the end of the scoring vector, the gain to be had from rising near the bottom of a voter's ranking outweighs the loss brought about by a drop near the top. An end candidate, then, is always able to move inwards in such a way that the number of voters who move her up in the lower part of their ranking is greater than the number who move her down in the top part of their ranking. 

\begin{thm}\label{WPnoNCNE2} 
Any weakly concave scoring rule $s$  has no NCNE in which $\max(n_1,n_q) \leq \lfloor m/2 \rfloor$. If, in addition, $s$ satisfies 
\begin{equation}\label{WPnoNCNE2.1} 
s_4+s_{m-3}\geq \frac{1}{m-3}\left(\sum_{i=1}^{m-3}s_i+\sum_{i=4}^ms_i\right),
\end{equation}  then no NCNE exist.
 \end{thm}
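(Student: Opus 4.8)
The plan is to prove the two claims separately, both by exhibiting a profitable deviation for a suitably chosen candidate, using the symmetric ``split the two infinitesimal moves'' technique already employed in Lemma~\ref{45cand2} and Theorem~\ref{convexscores2}. For the first statement, consider an NCNE $x$ in which both end positions satisfy $n_1,n_q\le\lfloor m/2\rfloor$. I would focus on candidate~1 at the left end $x^1$ (by symmetry the argument at $x^q$ is identical), partition the issue space into the two intervals flanking $x^1$ together with the intervals $J_j$ on which candidate~1 is ranked uniformly in some band $k_j,\dots,k_j+n_1-1$, and write down $v_1(x)$, $v_1(x^{1-},x_{-1})$ and $v_1(x^{1+},x_{-1})$ exactly as in the proof of Theorem~\ref{convexscores2}. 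Adding the two one-sided deviations gives the inequality that, in NCNE, each coefficient $\bigl(s_{k_j}+s_{k_j+n_1-1}-\frac{2}{n_1}\sum_{i=k_j}^{k_j+n_1-1}s_i\bigr)$ and the end-interval coefficient must balance to zero. The crucial difference from the convex case is the direction of the inequality: here I would invoke Lemma~\ref{WPnoNCNElemma}, whose inequality \eqref{WPnoNCNElemma.1} (valid for all $1\le j\le\lfloor m/2\rfloor$, hence in particular for $j=n_1$) forces $s_1+s_{n_1}\le\frac{2}{n_1}\sum_{i=1}^{n_1}s_i$, i.e.\ the end-interval coefficient is now \emph{nonpositive}. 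Combined with $v_1(x^{1-},x_{-1})+v_1(x^{1+},x_{-1})\le 2v_1(x)$ this should pin the leading term, and a slope computation via Proposition~\ref{insidetheinterval2} then yields a strict profitable move unless the scores have already stabilised, contradicting $n_1\le\lfloor m/2\rfloor$.

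Let me be more precise about which single inequality does the work. Rather than summing both moves, I suspect the cleanest route mirrors Lemma~\ref{45cand3}: I would compare $v_1(x^{1+},x_{-1})$ with $v_1(t,x_{-1})$ for $t\in(x^1,x^2)$ using the slope $\frac12(s_{n_1}-s_{m-n_1+1})$ from Proposition~\ref{insidetheinterval2}, and simultaneously compare $v_1(x^{1-},x_{-1})$ (a move into the empty left end $[0,x^1)$, available by Lemma~\ref{noextremepositions}) with $v_1(x)$. The first-part inequality of Lemma~\ref{WPnoNCNElemma} applied at $j=n_1\le\lfloor m/2\rfloor$ is exactly what makes the ``average band'' contributions dominate, so that the only way both one-sided conditions hold with equality is if every band coefficient vanishes; weak concavity with $s_1>s_{m}$ then forces $s_{n_1}=s_{m-n_1+1}$ for \emph{every} band, which in turn says the scores are already flat from position $n_1$ onward — but $n_1\le\lfloor m/2\rfloor$ and the symmetric conclusion $n_q\le\lfloor m/2\rfloor$ cannot both be reconciled with $s_1>s_m$ and $n_1+n_q\le m$. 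I expect this bookkeeping — tracking exactly which $s_{k_j}=s_{k_j+n_1-1}$ equalities are forced and converting them into a statement about where the scores stabilise — to be the main obstacle, precisely as it was in Theorem~\ref{convexscores2}.

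For the second statement, the additional hypothesis \eqref{WPnoNCNE2.1} is \eqref{WPnoNCNElemma.2} with $k=m-3$, so by the induction in Lemma~\ref{WPnoNCNElemma} the key inequality \eqref{WPnoNCNElemma.1} now holds for \emph{all} $1\le j\le m-3$, not merely up to $\lfloor m/2\rfloor$. The point is that the first statement only fails to apply when one end, say $x^1$, carries $n_1>\lfloor m/2\rfloor$ candidates; but by Corollary~\ref{generalprops2} (with $k$ the length of the initial constant run) and the general structure of extreme positions, the remaining cases have $n_1$ large and correspondingly $n_q$ small, with $n_q$ bounded above by $m-n_1$. The plan is to run the same deviation argument at the \emph{other} end $x^q$, where now $n_q$ may exceed $\lfloor m/2\rfloor$, but the strengthened range $j\le m-3$ guarantees inequality \eqref{WPnoNCNElemma.1} still holds at $j=n_q$; the only values of $n_q$ that could escape are those very close to $m$, and these are excluded because $n_1\ge 2$ (indeed $n_1>k\ge1$ by Lemma~\ref{generalprops1}) together with $q\ge2$ forces $n_q\le m-2$, and the three remaining borderline types are handled by \eqref{WPnoNCNE2.1} reaching down to $m-3$. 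Thus every candidate at an end position admits the same balancing argument, and no NCNE survives.

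The main obstacle, as in the convex case, is purely combinatorial rather than analytic: once the inequalities of Lemma~\ref{WPnoNCNElemma} are in hand, each individual deviation computation is routine, but one must carefully verify that the values of $j=n_1$ and $j=n_q$ actually arising in an NCNE fall within the range for which \eqref{WPnoNCNElemma.1} has been established — $\lfloor m/2\rfloor$ for part one, and $m-3$ for part two — and that the leftover extreme types (those with one cluster holding all but two or three candidates) are genuinely covered. I would therefore organise the proof as: (i) the end-deviation identity and the resulting ``sum of two moves'' inequality; (ii) application of Lemma~\ref{WPnoNCNElemma} to sign the coefficients; (iii) the slope argument of Proposition~\ref{insidetheinterval2} to extract a strict inequality and hence a contradiction; and (iv) the case analysis on $\max(n_1,n_q)$ that separates the two parts of the theorem.
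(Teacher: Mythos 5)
Your proposal does not reproduce the paper's argument, and the route you sketch has a genuine gap at its core. The paper's proof is not a ``sum of the two one-sided moves'' argument: it considers a \emph{single} infinitesimal move of an end candidate toward the centre, observes that the entire loss is confined to $I_1=[0,x^1]$ (coefficient $\frac{1}{n_1}\sum_{i=1}^{n_1}s_i-s_{n_1}$ per unit length) while the gain is at least the gain from the far interval $I_2=[(x^1+x^q)/2,1]$ (coefficient $s_{m-n_1+1}-\frac{1}{n_1}\sum_{i=m-n_1+1}^{m}s_i$), and invokes Lemma~\ref{WPnoNCNElemma} at $j=n_1$ for exactly one purpose: to show the gain coefficient dominates the loss coefficient. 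Unprofitability then forces the \emph{geometric} inequality $\ell(I_1)\geq\ell(I_2)$, i.e.\ $x^1\geq 1-(x^1+x^q)/2$; the mirror argument at $x^q$ gives $1-x^q\geq (x^1+x^q)/2$, and adding the two yields $x^1\geq x^q$, a contradiction. The contradiction lives entirely in the interaction of the two ends' position constraints; no contradiction about the score vector is derived, and none can be derived from one end alone. Indeed, Example~\ref{8-4NCNE} refutes your intermediate claim directly: there the right cluster has $n_q=4\leq\lfloor 12/2\rfloor=6$, all deviations by its members are unprofitable, and yet the scores $s=(4,4,4,3,3,3,2,1,1,0,0,0)$ are certainly not ``flat from position $n_q$ onward.'' So any argument of your proposed shape --- local conditions at one end of size at most $\lfloor m/2\rfloor$ implying score flatness --- is false, not merely hard to complete.

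Two specific steps are also individually wrong. First, Lemma~\ref{WPnoNCNElemma} does \emph{not} give $s_1+s_{n_1}\leq\frac{2}{n_1}\sum_{i=1}^{n_1}s_i$: for the weakly concave rule $s=(2,1,1,1,1,0)$ with $n_1=3$, inequality \eqref{WPnoNCNElemma.1} holds (with equality), yet $s_1+s_3=3>\frac{8}{3}=\frac{2}{3}(s_1+s_2+s_3)$. The lemma couples the top band with the bottom band \emph{jointly}; it constrains neither band separately, nor any middle band. Second, and fatally for the sum-of-two-moves strategy, under weak concavity the band coefficients $s_{k_j}+s_{k_j+n_1-1}-\frac{2}{n_1}\sum_{i=k_j}^{k_j+n_1-1}s_i$ have no definite sign: in the rule above with bands of length $3$, the band starting at $k=5$ gives $-\frac{1}{3}$ while the band starting at $k=7$ gives $+\frac{1}{3}$. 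Hence $v_1(x^{1-},x_{-1})+v_1(x^{1+},x_{-1})\leq 2v_1(x)$ can hold with terms of both signs cancelling, and nothing forces every coefficient to vanish; that technique is intrinsically a convexity argument (it works in Theorem~\ref{convexscores2} precisely because convexity makes every coefficient nonnegative) and does not dualise here. Finally, in your part (two), the case $n_q=m-2$ is genuinely left uncovered: extending \eqref{WPnoNCNElemma.1} to $j\leq m-3$ does not reach $j=m-2$; the paper closes that case by noting $n_1=2$ forces $s_2=s_{m-1}$ via Lemma~\ref{45cand3}, and that such rules satisfy the needed inequality at $j=m-2$ as well.
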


\begin{proof} 
Suppose $\max(n_1,n_q) \leq \lfloor m/2 \rfloor$. Consider candidate 1 at position $x^1$, which is occupied by $n_1$ candidates. Consider intervals $I_1=[0,x^1]$ and $I_2=[(x^1+x^q)/2,1]$. If 1 makes an infinitesimal move to the right of $x^1$, then in the rankings of voters in $I_1$ she falls behind the other $n_1-1$ candidates originally at $x^1$. On the other hand, 1 rises ahead of these $n_1-1$ candidates in the rankings of all other voters.
Then the score candidate 1 loses by making this move, $s_{lost}$, is
$$ s_{lost}=\left(\frac{1}{n_1}\sum_{i=1}^{n_1}s_i-s_{n_1}\right)\ell(I_1).$$
On the other hand, 1's gain from this move, $s_{gained}$, is \textit{at least} the gain from $I_2$:
\begin{equation}\label{WPnoNCNE2.2} s_{gained}\geq \left(s_{m-n_1+1}-\frac{1}{n_1}\!\sum_{i=m-n_1+1}^m\!\!\!\!\!s_i\right)\ell(I_2) \geq \left(\frac{1}{n_1}\sum_{i=1}^{n_1}s_i-s_{n_1}\right)\ell(I_2),\end{equation}
where we have used \eqref{WPnoNCNElemma.1}. For this profile to be an NCNE, we need this move not be beneficial for candidate 1. That is, we need $s_{lost}\geq s_{gained}$, or
$$ \left(\frac{1}{n_1}\sum_{i=1}^{n_1}s_i-s_{n_1}\right)\ell(I_1) \geq  \left(\frac{1}{n_1}\sum_{i=1}^{n_1}s_i-s_{n_1}\right)\ell(I_2). $$
Since by Lemma \ref{generalprops1} we know that $s_1>s_{n_1}$, and hence the common multiple on both sides of the inequality is nonzero, this implies
\begin{equation}\label{leftside}
 \ell(I_1)\geq \ell(I_2) \mbox{\quad or \quad}  x^1\geq 1-\frac{x^1+x^q}{2}.
 \end{equation}
Since $n_q$ is also assumed to be not greater than $\lfloor m/2 \rfloor$, similar considerations with respect to candidate $q$ give that $\ell([x^q,1])\geq \ell([0,(x^1+x^q)/2]) $. That is,
\begin{equation}\label{rightside} 
1-x^q\geq \frac{x^1+x^q}{2}.
\end{equation}
Together, \eqref{leftside} and \eqref{rightside} imply that $x^1\geq x^q$, which is impossible for an NCNE. Hence, there are no NCNE in which $\max(n_1,n_q)\leq \lfloor m/2\rfloor$.

Now, if \eqref{WPnoNCNE2.1} holds as well, then by \eqref{WPnoNCNElemma.2} of Lemma \ref{WPnoNCNElemma} we conclude that 
\begin{equation}\label{WPnoNCNE2.3} s_{m-n_1+1}+s_{n_1}\geq \frac{1}{n_1}\left(\sum_{i=1}^{n_1}s_i+\!\!\!\sum_{i=m-n_1+1}^m\!\!\!\!\!s_i\right)\end{equation} for all $2\leq n_1 \leq m-3$, which are all the possible values for $n_1$, with the exception of the case  $n_1=m-2$. However, if $n_1=m-2$ then $n_q=2$, in which case Lemma \ref{45cand3} implies that $s_2=s_{m-1}$. But for any rule of this form, \eqref{WPnoNCNE2.3} is satisfied for $n_1=m-2$, also.

Hence, inequalities \eqref{leftside} and \eqref{rightside} hold even if $n_1> \lfloor m/2\rfloor$, and similarly for $n_q> \lfloor m/2\rfloor$. So there can be no NCNE at all. 
\end{proof}

The following rules will appear important later.

\begin{cor}
\label{ale2b}
Any scoring rule $s=(a,b,\ldots,b,0)$ for $a\le 2b$ does not have an NCNE.
\end{cor}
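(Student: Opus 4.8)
The plan is to show that every rule of this shape falls within the scope of Theorem~\ref{WPnoNCNE2}, so that the corollary is immediate: I will verify that $s=(a,b,\ldots,b,0)$ is weakly concave and that it also satisfies the extra hypothesis~\eqref{WPnoNCNE2.1}. Throughout write $s_1=a$, $s_2=\cdots=s_{m-1}=b$ and $s_m=0$. Since $s$ is a scoring rule we have $a\ge b\ge 0$ and $a=s_1>s_m=0$, and the hypothesis $a\le 2b$ then forces $b>0$, so the vector is genuinely of the stated form (the boundary case $a=b$ is antiplurality, and is handled uniformly by the argument below).

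First I would check weak concavity, i.e.\ inequality~\eqref{WPnoNCNE}: $s_i-s_{i+1}\le s_{m-i}-s_{m-i+1}$ for $1\le i\le\lfloor m/2\rfloor$. For $i=1$ this reads $a-b\le b$, which is exactly the hypothesis $a\le 2b$. For every $i$ with $2\le i\le\lfloor m/2\rfloor$ all four indices $i,i+1,m-i,m-i+1$ lie in $\{2,\ldots,m-1\}$ (here one uses $\lfloor m/2\rfloor\le m-2$, valid for $m\ge4$), so both differences vanish and the inequality is the triviality $0\le 0$. Hence the rule is weakly concave precisely when $a\le 2b$.

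Next I would verify~\eqref{WPnoNCNE2.1}. For $m\ge5$ the indices $4$ and $m-3$ are interior, so the left-hand side is $s_4+s_{m-3}=2b$, while $\sum_{i=1}^{m-3}s_i=a+(m-4)b$ and $\sum_{i=4}^{m}s_i=(m-4)b$; thus the right-hand side equals $\big(a+2(m-4)b\big)/(m-3)$, and~\eqref{WPnoNCNE2.1} rearranges to $2b(m-3)\ge a+2(m-4)b$, i.e.\ once more $a\le 2b$. For $m=4$ the condition degenerates to $s_4+s_1\ge s_1+s_4$, which holds trivially. With weak concavity and~\eqref{WPnoNCNE2.1} both established, Theorem~\ref{WPnoNCNE2} yields that $s$ admits no NCNE.

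I do not expect a real obstacle here: the whole argument is a direct verification of the two hypotheses of Theorem~\ref{WPnoNCNE2}. The only place demanding care is the index bookkeeping—confirming which entries equal $b$ at the boundary case $i=1$ of~\eqref{WPnoNCNE} and in the small cases $m\in\{4,5\}$ of~\eqref{WPnoNCNE2.1}—and observing that both hypotheses collapse to the single inequality $a\le 2b$, which is what makes the constant $2$ in the statement sharp.
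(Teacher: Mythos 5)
Your proposal is correct and follows exactly the paper's route: the paper's own proof is the one-line assertion that these rules are weakly concave and satisfy condition~\eqref{WPnoNCNE2.1}, so that Theorem~\ref{WPnoNCNE2} applies. Your contribution is simply to spell out the index bookkeeping showing that both hypotheses reduce to the single inequality $a\le 2b$ (with the $m=4$ case of~\eqref{WPnoNCNE2.1} degenerating to a triviality), which is a faithful, fully detailed version of the same argument.
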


\begin{proof}
These rules are weakly concave and condition~\eqref{WPnoNCNE2.1} is satisfied.
\end{proof}

Let us now have a look at some more examples. 

\begin{examp}\label{WPnoNCNEexample} 
When $m=8$, it turns out that the additional condition \eqref{WPnoNCNE2.1} in Theorem \ref{WPnoNCNE2} follows from  \eqref{WPnoNCNE}. Indeed,  it is equivalent to
$$ 
s_4+s_5\geq \frac{1}{3}(s_1+s_2+s_3+s_6+s_7+s_8).
$$
But by \eqref{WPnoNCNE} we have $s_1+s_8\leq s_2+s_7 \leq s_3+s_6\leq s_4+s_5$, hence \eqref{WPnoNCNE2.1} always holds if  \eqref{WPnoNCNE} holds. 

By a similar argument,  condition \eqref{WPnoNCNE2.1} is again redundant when $m\leq 8$. If $m=9$, though, there are rules that satisfy \eqref{WPnoNCNE} but not \eqref{WPnoNCNE2.1}. Consider the rule $s=(7,6,6,6,6,2,1,1,0)$. We have $s_4+s_6=8<8\frac{1}{6}=\frac{1}{6}(\sum_{i=1}^6s_i+\sum_{i=4}^9s_i)$. If an NCNE exists for this rule, it must have more than four candidates at one of the end positions. However, computationally we find this rule has no NCNE.
\end{examp}

 An interesting special case is rules with symmetric scores.

\begin{cor}\label{WPnoNCNEsymmetric} Any symmetric scoring rule $s$  has no NCNE.\end{cor}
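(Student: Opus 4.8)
The plan is to show that any symmetric scoring rule falls under the umbrella of Theorem~\ref{WPnoNCNE2}, so that the corollary follows immediately as a special case. Recall that a rule is symmetric if equality always holds in \eqref{WPnoNCNE}, that is, $s_i-s_{i+1}=s_{m-i}-s_{m-i+1}$ for all $1\leq i\leq \lfloor m/2\rfloor$. In particular every symmetric rule is weakly concave (equality is a special instance of the inequality $\leq$), so the first conclusion of Theorem~\ref{WPnoNCNE2} already rules out NCNE with $\max(n_1,n_q)\leq \lfloor m/2\rfloor$. The only remaining task is to dispose of the case where one of the end positions carries more than half the candidates, and for this it suffices to verify the additional hypothesis \eqref{WPnoNCNE2.1}.

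The key observation is that symmetry forces the pairwise sums $s_i+s_{m-i+1}$ to be \emph{constant} in $i$, not merely monotone. Indeed, telescoping the equalities $s_i-s_{i+1}=s_{m-i}-s_{m-i+1}$ gives $s_i+s_{m-i+1}=s_{i+1}+s_{m-i}$ for each $i$, so all the pairwise sums $s_1+s_m=s_2+s_{m-1}=\cdots$ coincide with a common value, call it $c$. The first step is therefore to establish this constancy. Once this is in hand, the required inequality \eqref{WPnoNCNE2.1} reduces to checking that $s_4+s_{m-3}$ is at least the average, over the $2(m-3)$ indices involved, of the corresponding scores; but since every pair $s_i+s_{m-i+1}$ equals the same constant $c$, both the left-hand side and the averaged right-hand side equal $c$, so the inequality holds with equality. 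Hence \eqref{WPnoNCNE2.1} is satisfied (in fact as an equality) for every symmetric rule.

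Combining the two observations, a symmetric rule is weakly concave and satisfies \eqref{WPnoNCNE2.1}, so the second, stronger conclusion of Theorem~\ref{WPnoNCNE2} applies and we conclude there are no NCNE whatsoever. I would present the argument in exactly this order: first record that symmetry implies weak concavity; then prove the constant-pairwise-sum property; then read off \eqref{WPnoNCNE2.1}; then invoke Theorem~\ref{WPnoNCNE2}.

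The only point requiring a little care---and the place I expect the mildest friction---is checking that \eqref{WPnoNCNE2.1} really does follow from constancy of the pairwise sums when $m$ is small enough that the indices $4$ and $m-3$ misbehave (for instance whether $m-3$ can fall at or below the halfway point, or whether the two sums in \eqref{WPnoNCNE2.1} overlap). For $m\geq 4$ the relevant pairing $s_4+s_{m-3}$ is of the form $s_j+s_{m-j+1}$ with $j=4$, which is exactly one of the constant pairs, and the summed right-hand side decomposes into exactly $m-3$ such pairs; so the bookkeeping works cleanly for all $m$ in range. An alternative, even cleaner route would be to observe that Lemma~\ref{WPnoNCNElemma} already yields \eqref{WPnoNCNElemma.1} for all $1\leq j\leq\lfloor m/2\rfloor$ with equality in the symmetric case, and then verify that the constant-pairwise-sum property propagates the inequality up to $j=m-3$ directly via \eqref{WPnoNCNElemma.2}; either way the substantive content is just the constancy of $s_i+s_{m-i+1}$.
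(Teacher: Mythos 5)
Your proposal is correct and follows essentially the same route as the paper: observe that symmetry gives weak concavity, use the constancy of the pairwise sums $s_i+s_{m-i+1}$ to verify condition \eqref{WPnoNCNE2.1} (with equality), and then invoke the second conclusion of Theorem~\ref{WPnoNCNE2}. The paper's proof performs the identical computation, just written for an arbitrary value of $n_1$ rather than specifically for $k=m-3$.
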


\begin{proof} Clearly \eqref{WPnoNCNE} is satisfied. Condition \eqref{WPnoNCNE2.1} is also satisfied since for any valid value of $n_1$ we have
\begin{align*}\frac{1}{n_1}\!\sum_{i=m-n_1+1}^{m}\!\!\!\!\!s_i+\frac{1}{n_1}\sum_{i=1}^{n_1}s_i=\frac{1}{n_1}\sum_{i=1}^{n_1}(s_i+s_{m-i+1})&=\frac{1}{n_1}\sum_{i=1}^{n_1}(s_{n_1}+s_{m-n_1+1})\\
&=s_{n_1}+s_{m-n_1+1}.\qedhere \end{align*}
\end{proof}

Some examples of rules with symmetric scores are single-positive and single-negative voting, given by $s=(2,1,\ldots,1,0)$, and the rule $s=(4,3,2,\ldots,2,1,0)$. \par

Finally, we will  give an example showing that condition \eqref{WPnoNCNE2.1} in Theorem \ref{WPnoNCNE2} is necessary. 

\begin{examp}
\label{8-4NCNE}
 For $m=12$  the scoring rule $s=(4,4,4,3,3,3,2,1,1,0,0,0)$ satisfies weak concavity, yet does allow NCNE. In particular, the profile 
$$
((x^1,n_1),(x^2,n_2))=\left(\left(\frac{13}{28},8\right),\left(\frac{41}{84},4\right)\right)
$$ 
with eight candidates at position $x^1=\frac{13}{28}$ and four at position $x^2=\frac{41}{84}$ is an NCNE.
\end{examp}

\begin{proof}
It is clear this rule satisfies weak concavity. We check that it does not satisfy the additional condition:
%
%\footnote{This is the new weaker additional condition I mentioned in the email. We can check it  does not satisfy the old version either:
%$$s_3+s_{10}=4< \frac{21}{5}= \frac{1}{10}\left(\sum_{i=1}^{10}s_i+\sum_{i=3}^{12}s_i\right).$$}:
$$s_4+s_{9}=4< \frac{38}{9}= \frac{1}{9}\left(\sum_{i=1}^{9}s_i+\sum_{i=4}^{12}s_i\right).$$
So we cannot rule out highly asymmetric equilibria.

Now, we check the given profile is an NCNE.
Consider first candidate 1, located at $x^1$.
Her score in the given profile is 
$$
v_1(x)=\frac{1}{8}\left[24 \left(\frac{x^1+x^2}{2}\right)+10\left(1-\frac{x^1+x^2}{2}\right)\right]=\frac{25}{12}.
$$
We must consider all possible moves by candidate 1.
Moving slightly to the left is not profitable since
$$v_1(x^{1-},x_{-1})=4x^1+\left(\frac{x^2-x^1}{2}\right)=\frac{157}{84}<\frac{25}{12}=v_1(x).$$
All moves into the middle interval are dominated by the move to $x^{1+}$, but this is not profitable since
$$v_1(x^{1+},x_{-1})=4\left(\frac{x^2-x^1}{2}\right)+x^1+3\left(1-\frac{x^1+x^2}{2}\right)=\frac{25}{12}= v_1(x).$$
A move to $x^{2+}$ is not profitable as
$$
v_1(x^{2+},x_{-1})=4(1-x^2)+3\left(\frac{x^2-x^1}{2}\right)=\frac{25}{12}= v_1(x).
$$
Finally, a move to $x^2$ is not profitable either, since
$$
v_1(x^2,x_{-1})=\frac{1}{5}\left[18\left(1-\frac{x^1+x^2}{2}\right)+2\left(\frac{x^1+x^2}{2}\right)\right]=\frac{218}{105}<v_1(x).
$$
So candidates at $x^1$ cannot improve their scores.

Now, check for candidate 9, at position $x^2$. We have
$$v_9(x)=\frac{1}{4}\left[15\left(1-\frac{x^1+x^2}{2}\right)+\left(\frac{x^1+x^2}{2}\right)\right]=\frac{25}{12}.$$
The  move to $x^{2+}$ gives 
$$v_9(x^{2+},x_{-9})=4(1-x^2)+3\left(\frac{x^2-x^1}{2}\right)=\frac{25}{12}= v_9(x).$$
All moves into the middle interval are dominated by the move to $x^{1+}$, which is not profitable since
$$v_9(x^{1+},x_{-9})=4\left(\frac{x^2-x^1}{2}\right)+x^1+3\left(1-\frac{x^1+x^2}{2}\right)=\frac{25}{12}=v_9(x).$$
The move to $x^{1-}$ gives
$$v_9(x^{1-},x_{-9})=4x^1+\left(\frac{x^2-x^1}{2}\right)=\frac{157}{84}<v_9(x).$$
Finally, the move to $x^1$ gives
$$
v_9(x^1,x_{-9})=\frac{1}{9}\left[25\left(\frac{x^1+x^2}{2}\right)+13\left(1-\frac{x^1+x^2}{2}\right)\right]=\frac{131}{63}<v_9(x).
$$
There are no other moves to consider, so the given profile is an NCNE.
\end{proof}

%We say that the rule $s$ has \textit{concave} scores if it satisfies
%$$
%s_{i-1}-s_{i}\leq s_{i}-s_{i+1} 
%$$
%for all $2\leq i\leq m-1$. Of course, any concave rule is weakly concave. 
Concavity is stronger than weak concavity, however it does not  imply \eqref{WPnoNCNE2.1} either. A counterexample is $s=(4,4,4,4,4,4,4,4,4,4,4,4,3,2,1,0)$ when $m=16$. It has concave scores but does not satisfy \eqref{WPnoNCNE2.1}. Of course, it has no NCNE by Corollary~\ref{generalprops2}. \par\medskip

This leads us to a question which we leave open in this paper: does there exist a concave rule that has an NCNE?\par\medskip

%\begin{op}
%Does there exist a concave rule that has an NCNE?
%\end{op}

A few words about a dichotomy between CNE and NCNE---when $m=4$, a given rule cannot have both NCNE and CNE.

\begin{cor}\label{dichotomy} 
If $m=4$ and $s=(s_1,s_2,s_3,s_4)$ is a scoring rule such that $c(s,m)\leq 1/2$, then there are no NCNE. 
\end{cor}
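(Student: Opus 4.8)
The plan is to reduce the statement to the first part of Theorem~\ref{WPnoNCNE2} by showing that, for $m=4$, the hypothesis $c(s,4)\le 1/2$ is nothing other than weak concavity.

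First I would unpack $c(s,4)\le 1/2$. Since $s_1>s_4$, the inequality $c(s,m)=\frac{s_1-\bar s}{s_1-s_m}\le \frac12$ is equivalent to $s_1+s_4\le 2\bar s$. With $m=4$ one has $2\bar s=\tfrac12(s_1+s_2+s_3+s_4)$, so this becomes $s_1+s_4\le s_2+s_3$, i.e.\ $s_1-s_2\le s_3-s_4$. But this last inequality is precisely condition~\eqref{WPnoNCNE} for $m=4$: the case $i=1$ gives $s_1-s_2\le s_3-s_4$, while the case $i=2$ reads $s_2-s_3\le s_2-s_3$ and is vacuous. Hence for $m=4$ the conditions $c(s,4)\le 1/2$ and weak concavity coincide.

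Next I would pin down the only admissible profile type. Because $s$ is a scoring rule we have $s_1>s_m$, so in Lemma~\ref{generalprops1} we may take $k\ge 1$, giving $\min(n_1,n_q)\ge 2$ in any NCNE. With $m=4$ and $q\ge 2$, the constraint $n_1+\cdots+n_q=4$ together with $n_1,n_q\ge 2$ forces $q=2$ and $n_1=n_2=2$. In particular every NCNE has $\max(n_1,n_q)=2=\lfloor 4/2\rfloor$. I would then invoke the first (unconditional) clause of Theorem~\ref{WPnoNCNE2}: a weakly concave rule admits no NCNE with $\max(n_1,n_q)\le\lfloor m/2\rfloor$. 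Since every possible NCNE here satisfies $\max(n_1,n_q)=2\le\lfloor m/2\rfloor$, there are none. Note that the extra hypothesis~\eqref{WPnoNCNE2.1} never enters, precisely because the forced value of $\max(n_1,n_q)$ already lies at the threshold $\lfloor m/2\rfloor$.

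The one point that really needs care---and the crux of the argument---is the first step: verifying that the \emph{global} centrism condition $c(s,4)\le 1/2$ is equivalent, in the four-candidate case, to the \emph{term-by-term} inequality defining weak concavity. Everything else is a direct appeal to results already established.
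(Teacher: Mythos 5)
Your proof is correct and follows essentially the same route as the paper: both hinge on observing that for $m=4$ the condition $c(s,4)\leq 1/2$ is equivalent to the weak concavity inequality $s_1-s_2\leq s_3-s_4$, and then appealing to Theorem~\ref{WPnoNCNE2}. The only cosmetic difference is that you invoke the unconditional first clause of that theorem after forcing the type $(2,2)$ via Lemma~\ref{generalprops1}, whereas the paper invokes the full statement by noting that condition~\eqref{WPnoNCNE2.1} is automatically satisfied when $m\leq 8$---and the paper itself records your type-$(2,2)$ observation parenthetically, so the two arguments coincide in substance.
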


\begin{proof}
Let $m=4$ and suppose $s=(s_1,s_2,s_3,s_4)$ is a worst-punishing rule. That is, $$c(s,4)=\frac{s_1-\frac{1}{4}(s_1+s_2+s_3+s_4)}{s_1-s_4}\leq \frac{1}{2}, $$
which is equivalent to $s_1-s_2\leq s_3-s_4 .$
That is, condition \eqref{WPnoNCNE} is equivalent to $c(s,m)\leq 1/2$. Condition \eqref{WPnoNCNE2.1}, as we commented in Example \ref{WPnoNCNEexample}, is redundant for $m\leq 8$ (indeed, if $m=4$, the only possible type of an NCNE is (2,2), so we cannot have any asymmetry), so NCNE cannot exist. 
%The last statement follows by Theorem \ref{CNE}.
\end{proof}

For five candidates and more, however, $c(s,m)\leq 1/2$ does not imply condition \eqref{WPnoNCNE}---there exist rules that are worst-punishing and do not satisfy \eqref{WPnoNCNE}. Nevertheless, as we will see later, in Theorem \ref{5cand2},  the dichotomy  for five candidates still holds and no rule can have both CNE and NCNE. However
 this dichotomy need not hold in general---indeed when there are six candidates it fails: this will be illustrated in Example~\ref{bipositionalexample}.
%For example, if $s=(5,4,4,0,0)$ then $c(s,5)=12/25<1/2$ but $s_1-s_2=1 \nleq 0=s_4-s_5$. 

\subsection{Highly best-rewarding rules}
\label{highlyBRsection}

We will now show that for many rules, if the value of $c(s,m)$ is high enough, then no NCNE are possible. 

\begin{thm}
\label{highlyBR} 
If $m$ is even and $s$ satisfies both 
\begin{equation}
\label{highlybestr1}
c(s,m)>1-\frac{1}{m-2} \mbox{\quad and \quad} s_{m/2}\ne s_{m/2+1},
\end{equation}
or, if $m$ is odd and $s$ satisfies both 
\begin{equation}
\label{highlybestr2}
c(s,m)>1-\frac{1}{m-1} \mbox{\quad and \quad} s_{(m-1)/2}\ne s_{(m+3)/2},
\end{equation}
then $s$ allows no NCNE.
 \end{thm}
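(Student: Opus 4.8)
The plan is to squeeze the number $q$ of occupied positions between a lower bound coming from $c(s,m)$ being large (Corollary~\ref{lowerboundq}) and an upper bound coming from the parity/pairing structure of equilibria, thereby pinning down the equilibrium type almost completely, and then to contradict Lemma~\ref{45cand3}.

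First I would bound $q$ from below. In the even case, $c(s,m)>1-\frac{1}{m-2}$ gives $\frac{1}{2(1-c(s,m))}>\frac{m-2}{2}$; since $\frac{m-2}{2}$ is a positive integer for even $m$, Corollary~\ref{lowerboundq} yields $q\geq \frac m2$. In the odd case, $c(s,m)>1-\frac{1}{m-1}$ gives $\frac{1}{2(1-c(s,m))}>\frac{m-1}{2}$, again an integer, so that $q\geq \frac{m+1}{2}$.

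Next I would use the hypotheses on the middle scores together with the remark following Lemma~\ref{45cand3} (a consequence of Propositions~\ref{insidetheinterval1} and~\ref{insidetheinterval2}): a solitary interior candidate at $x^l$ with $L$ candidates to its left has equal one-sided slopes $(s_{L+1}-s_{m-L})/2$ by Proposition~\ref{insidetheinterval2}, so equilibrium forces $s_{L+1}=s_{m-L}$, which by monotonicity equates every score between those indices. When $m$ is even and $s_{m/2}\neq s_{m/2+1}$ this is impossible for any split, so there are no unpaired candidates and every $n_i\geq 2$; hence $m=\sum_i n_i\geq 2q\geq m$, forcing $q=\frac m2$ and $n_i=2$ for all $i$. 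When $m$ is odd and $s_{(m-1)/2}\neq s_{(m+3)/2}$, the only admissible solitary candidate is a single median one (with equal numbers on both sides); a count with no unpaired candidate would give $m\geq 2q\geq m+1$, impossible, so there is exactly one, and $m\geq 1+2(q-1)=2q-1$ gives $q\leq\frac{m+1}{2}$. Combined with the lower bounds, $q=\frac m2$ (even) or $q=\frac{m+1}{2}$ (odd), and in either case the extreme positions carry exactly two candidates, so $n_1=2$.

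Finally, $n_1=2$ triggers Lemma~\ref{45cand3}, forcing $s_2=s_{m-1}$ and hence $s_2=s_3=\cdots=s_{m-1}$. For even $m\geq 4$ the indices $m/2$ and $m/2+1$ lie in $\{2,\ldots,m-1\}$, so $s_{m/2}=s_{m/2+1}$; for odd $m\geq 5$ the indices $(m-1)/2$ and $(m+3)/2$ lie in $\{2,\ldots,m-1\}$, so $s_{(m-1)/2}=s_{(m+3)/2}$. Either conclusion contradicts the corresponding hypothesis, so no NCNE can exist. The main obstacle is the careful bookkeeping in the ``no solitary candidate'' step --- verifying that the one-sided slopes really coincide and that the forcibly-equal index range genuinely contains the two middle indices appearing in the hypotheses. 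I would also note the degenerate case $s_1=s_2$: there Lemma~\ref{generalprops1} already forbids $n_1=2$, so the contradiction is immediate, and only the case $s_1>s_2$ actually needs Lemma~\ref{45cand3}.
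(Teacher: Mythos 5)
Your proposal is correct and follows essentially the same route as the paper's own proof: the lower bound $q\geq m/2$ (even) or $q\geq (m+1)/2$ (odd) from Corollary~\ref{lowerboundq}, the exclusion of unpaired candidates (respectively, all but a median one) via the slope argument of Proposition~\ref{insidetheinterval2}, the resulting count forcing exactly two candidates at each extreme position, and the final contradiction with Lemma~\ref{45cand3} since $s_2\geq s_{m/2}>s_{m/2+1}\geq s_{m-1}$ (and analogously in the odd case). Your extra remarks --- spelling out the slope bookkeeping for a solitary candidate and noting that the degenerate case $s_1=s_2$ is already killed by Lemma~\ref{generalprops1} --- are sound refinements of details the paper leaves implicit.
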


\begin{proof} Suppose first that $m$ is even and $s$ satisfies \eqref{highlybestr1}. Rearranging, we get
$$
 \frac{1}{2(1-c(s,m))}>\frac{m}{2}-1.
 $$
By Corollary~\ref{lowerboundq}, then, we have $q\geq \lceil \frac{1}{2(1-c(s,m))}\rceil \geq m/2$. Also, since $s_{m/2}\ne s_{m/2+1}$, we know by Proposition~\ref{insidetheinterval2} that in an NCNE there are no unpaired candidates, and hence there must be at least two candidates at each occupied position. But there are at least $m/2$ positions, hence there must be two candidates at each one, including the end positions. But also $s_2\geq s_{m/2}\ne s_{m/2+1}\geq s_{m-1}$, which violates Lemma \ref{45cand3}. So we cannot have NCNE.

Now suppose $m$ is odd. Similarly to the above we get $$\frac{1}{2(1-c(s,m))}>\frac{m-1}{2},$$ hence $q\geq (m+1)/2$. There must be at least one position with only one candidate and in fact there can be only one such position since, by Proposition~\ref{insidetheinterval2}, the only candidate that can be unpaired is the median candidate. This leaves at least two candidates at each of the $q-1\geq (m-1)/2$ other occupied positions. Again, the only possibility is that there are two candidates at each of these remaining positions, but this contradicts Lemma \ref{45cand3}.
\end{proof}

Let us look at a concrete examples of rules that satisfy Theorem \ref{highlyBR}. Clearly all such rules are best-rewarding (recall that we assume $m\geq 4$), hence these rules allow no Nash equilibria of any kind.

\begin{examp} When $m=6$, the rule is $ s=(s_1,s_2,s_3,s_4,s_5,0)$ and the conditions of Theorem~\ref{highlyBR}  are equivalent to $s_3\ne s_4$ and $s_1>2(s_2+s_3+s_4+s_5)$. The rules  $s=(5,1,1,0,0,0)$ and $s=(7,2,1,0,0,0)$ are of this kind.
%
%Similarly, when $m=5$, the rule $s=(s_1,s_2,0,0,0)$ satisfies \eqref{highlybestr2} whenever $s_1>4s_2>0$. 
\end{examp}

Note that the rules in the example above may have convex scores, so we see that this class of rules overlaps with the class of rules appearing in Theorem~\ref{convexscores2}. Clearly neither class contains the other, though. 
If the second condition in \eqref{highlybestr1} or \eqref{highlybestr2} is not satisfied, then there may be NCNE, even if the rule has a very high value of $c(s,m)$. Plurality, for example, has the maximal value of $c(s,m)$ but still possesses NCNE.

\section{Rules allowing NCNE}
\label{ruleswithNCNE}

In this section we will turn our attention to rules for which, unlike the rules considered up to now, NCNE do exist.
We first look at a class of rules $s=(a,b,\ldots,b,0)$ for which we find experimentally as many NCNE as for plurality and with similar properties. Second, we look at the class of best-rewarding rules for which we can find NCNE in which candidates cluster at positions spread across the issue space. Then, we characterise symmetric bipositional  NCNE when $m$ is even.

\subsection{Rules of the type $(a,b,\ldots,b,0)$}
\label{rulesabbb0}

We have seen in Corollary~\ref{ale2b} that for $a\le 2b$ this rule does not have NCNE. The situation changes radically if we assume that $a>2b$.  This is not surprising due to the following

\begin{prop}
\label{propabbb}
The $m$-candidate rule $s=(a,b,\ldots,b,0)$  is best rewarding if and only if $a>2b$ and worst punishing or intermediate when $a\le 2b$.
\end{prop}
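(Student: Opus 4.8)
The plan is to compute $c(s,m)$ directly for the rule $s=(a,b,\ldots,b,0)$ and compare it to $1/2$, since by definition $s$ is best-rewarding precisely when $c(s,m)>1/2$ and worst-punishing or intermediate when $c(s,m)\le 1/2$. Recall from Theorem~\ref{CNE} that $c(s,m)=\frac{s_1-\bar s}{s_1-s_m}$, so the first step is simply to identify the ingredients: here $s_1=a$, $s_m=0$, and there are $m-2$ middle coordinates each equal to $b$. Thus $\sum_{i=1}^m s_i = a+(m-2)b+0 = a+(m-2)b$, giving $\bar s = \frac{a+(m-2)b}{m}$.

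Next I would substitute into the formula for $c(s,m)$. We have $s_1-\bar s = a-\frac{a+(m-2)b}{m}=\frac{ma-a-(m-2)b}{m}=\frac{(m-1)a-(m-2)b}{m}$, and $s_1-s_m=a-0=a$. Hence
\begin{equation*}
c(s,m)=\frac{(m-1)a-(m-2)b}{ma}.
\end{equation*}
The claim is that $c(s,m)>1/2$ if and only if $a>2b$. I would establish this by a straightforward inequality manipulation: $c(s,m)>\frac12$ is equivalent to $2\bigl((m-1)a-(m-2)b\bigr)>ma$, i.e. $(2m-2)a-(2m-4)b>ma$, i.e. $(m-2)a>(2m-4)b=2(m-2)b$. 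Since $m\ge 4$ (as assumed throughout the paper) we have $m-2>0$, so dividing by $m-2$ yields exactly $a>2b$. The same chain of equivalences, read with $\le$ in place of $>$, gives $c(s,m)\le 1/2 \iff a\le 2b$, which covers the worst-punishing or intermediate case.

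There is essentially no hard step here; the only thing to be careful about is ensuring that the division by $m-2$ preserves the direction of the inequality, which is guaranteed by $m\ge 4$, and that $s$ is genuinely a scoring rule, i.e. $s_1>s_m$, which holds since $a>0$ (we need $a>b\ge 0$ for the score vector to be nonincreasing and nonconstant). I would conclude by noting that the borderline $a=2b$ gives $c(s,m)=1/2$ exactly, placing it in the intermediate class, consistent with the statement's grouping of the $a\le 2b$ case. This completes the proof.
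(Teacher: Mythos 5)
Your proof is correct and follows essentially the same route as the paper: compute $\bar{s}=\frac{a+(m-2)b}{m}$, plug into $c(s,m)=\frac{s_1-\bar s}{s_1-s_m}$, and reduce the comparison with $1/2$ to $a>2b$ versus $a\le 2b$. The paper writes $c(s,m)=\frac{m-1}{m}-\frac{m-2}{m}\cdot\frac{b}{a}$ rather than your single fraction $\frac{(m-1)a-(m-2)b}{ma}$, but the algebra is identical, and your explicit remarks about $m-2>0$ and the borderline intermediate case $a=2b$ are fine.
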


\begin{proof}
We have 
\[
c(s,m)=\frac{a-\frac{a+(m-2)b}{m}}{a}=\frac{m-1}{m}-\frac{m-2}{m}\cdot \frac{b}{a}.
\]
Then $c(s,m)\le \frac{1}{2}$ is equivalent to $(m-2)\le 2(m-2)\frac{b}{a}$, from which the statement follows.
\end{proof}

 Computational experiments show that  when $a>2b$ the rule $s=(a,b,\ldots,b,0)$ has multiple NCNE similar to those discovered for plurality by Eaton and Lipsey \cite{eatonlipsey} and Denzau et al. \cite{denzaukatsslutsky}.  For example, the rule $s=(3,1,1,1,1,1,1,0)$ has NCNE with $q=4,5,6$ and $(n_1,\ldots, n_q)$ being one of the following:
\[
(2,2,2,2),\ (2,2,1,1,2),\ (2,1,2,1,2),\ (2,1,1,2,2),\ (2,1,1,1,1,2).
\]
We notice that at any given position there are no more than two candidates. 

\begin{thm}
\label{nomorethantwo}
If $s=(\row sm)=(a,b,\ldots,b,0)$, then in any NCNE  at any given position there are no more than two candidates. 
\end{thm}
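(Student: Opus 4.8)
The plan is to argue by contradiction: suppose there is an NCNE in which some position $x^l$ is occupied by $n_l \geq 3$ candidates, and show that one of these candidates can profitably deviate. The key structural feature of the rule $s=(a,b,\ldots,b,0)$ is that the scores are ``flat'' in the middle: only the first and last scores differ from the common value $b$. This means that when a candidate is tied with others at a shared position, her averaged score depends only on whether the block of tied positions includes rank $1$ (contributing $a$) or rank $m$ (contributing $0$); all interior ranks contribute $b$. I would exploit this flatness to compute the one-sided limits $v_i(x^{l-},x_{-i})$ and $v_i(x^{l+},x_{-i})$ explicitly in terms of the measures of the relevant intervals.

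First I would set up notation for a candidate $i$ at the shared position $x^l$ with $n_l \geq 3$. For each subinterval of voters ranking $i$ identically, the averaged contribution to $v_i(x)$ is $b$ \emph{unless} the block of $n_l$ tied ranks reaches rank $1$ (only possible for voters closest to $x^l$, i.e.\ in the immediate interval $I_l$, and only when $x^l=x^1$) or reaches rank $m$ (only when $x^l=x^q$). The crucial observation is that for a candidate in the middle of a crowded position, almost all the averaged contributions equal $b$, so her total score is very close to $b$, whereas by sliding infinitesimally to one side she can ``unstack'' herself and capture strictly better ranks from the voters on the favourable side. Concretely, I would compare $v_i(x^{l-},x_{-i})$ and $v_i(x^{l+},x_{-i})$ against $v_i(x)$ and show that their sum exceeds $2v_i(x)$ whenever $n_l \geq 3$, forcing at least one of the one-sided deviations to be strictly profitable, contradicting Lemma~\ref{45cand2}-type reasoning (which I would need to adapt, since that lemma assumed $n_l=2$).

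The main technical step is the sign analysis of $v_i(x^{l-},x_{-i}) + v_i(x^{l+},x_{-i}) - 2v_i(x)$. When moving left by $\epsilon$, candidate $i$ gains one rank from every voter to her right and loses one rank to the voters to her left (the other $n_l-1$ tied candidates); symmetrically for moving right. Because the score vector is constant at $b$ in the interior, a one-rank shift among interior ranks contributes zero net change, so the only nonzero contributions come from subintervals where the shift moves a rank across the $s_1$-to-$s_2$ gap (size $a-b$) or across the $s_{m-1}$-to-$s_m$ gap (size $b-0=b$). I expect that the asymmetry created by $n_l\geq 3$ (so that $i$ has at least two tied neighbours) makes the averaged score $\frac{1}{n_l}\sum s_i$ at $x^l$ strictly smaller than what she gets by stepping off to the side nearest the extreme rank she can improve, and I would make this precise by checking the two cases $x^l$ interior versus $x^l$ at an end.

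The hardest part will be handling the end positions $x^1$ and $x^q$ cleanly, since Lemma~\ref{generalprops1} only guarantees $n_1,n_q \geq 2$ for this rule (as $s_1>s_2$ forces $k=1$), and these positions are exactly where the nonflat scores $a$ and $0$ enter. At an end position occupied by $n_l \geq 3$ candidates, I would need to verify that the candidate still has a profitable one-sided move despite already collecting the favourable extreme score on one side; here the flatness of the interior scores should again rescue the argument, because the interior block of ranks contributes only multiples of $b$ and the candidate can convert a shared $(a + (n_l-1)b)/n_l$ into a full $a$ on a positive-measure interval by unstacking. Once the end cases and the interior case are both shown to admit strictly profitable deviations for $n_l \geq 3$, the contradiction is complete and every occupied position carries at most two candidates.
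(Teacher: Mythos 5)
Your treatment of interior positions is correct and is essentially the paper's own argument there: for a candidate at an interior $x^l$, rank $m$ never occurs (some extreme position is always strictly farther), so every voter block contributes only $a$'s and $b$'s, and adding the two no-profitable-deviation inequalities gives
\[
v_i(x^{l-},x_{-i})+v_i(x^{l+},x_{-i})-2v_i(x)=\frac{(n_l-2)(a-b)}{n_l}\,\ell(I_l)>0
\]
whenever $n_l\geq 3$ and $a>b$, which is the desired contradiction (the paper phrases this as: the length-$n_l$ subrule $(a,b,\ldots,b)$ is best-rewarding).

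At the end positions, however---which you yourself flag as the hardest part---there is a genuine gap: the summation strategy you propose provably does not suffice there, and the ``flatness'' of the interior scores cannot rescue it. Writing $\alpha=x^1$, $\beta=(x^2-x^1)/2$ and $\gamma=1-(x^1+x^q)/2$, a direct computation gives
\[
v_1(x^{1-},x_{-1})+v_1(x^{1+},x_{-1})-2v_1(x)=\frac{n_1-2}{n_1}\Bigl[(a-b)(\alpha+\beta)-b\gamma\Bigr],
\]
and the bracket has no definite sign: the candidates at $x^1$ are ranked \emph{last} by the measure-$\gamma$ block of far-away voters, so stepping left converts the candidate's shared payment $(n_1-1)b/n_1$ from those voters into $0$, producing the loss term $-b\gamma$ that your proposal overlooks. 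Only for plurality ($b=0$) does the summed inequality force a contradiction at the ends. This is exactly why the paper's Lemma~\ref{n1=2} argues differently: it keeps the two one-sided inequalities separate, uses the move-right inequality to bound the nuisance term, $b\gamma\leq (a-b)\bigl(\alpha-(n_1-1)\beta\bigr)$, and substitutes this bound into the move-left inequality to obtain $0\leq\beta(2-n_1)$, whence $n_1\leq 2$. A further, smaller point: the paper first disposes of the regime $a\leq 2b$ via Corollary~\ref{ale2b} (such rules are weakly concave and have no NCNE at all); in particular when $a=b$ your interior computation also degenerates ($a-b=0$), so a purely local deviation argument of the kind you outline cannot cover that case either.
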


\begin{proof}
By Corollary~\ref{ale2b} we have to consider only the case when $a>2b$. This case will be considered in the following two lemmas.
\end{proof}
 
 \begin{lemma} If $s=(\row sm)=(a,b,\ldots,b,0)$, where $a>2b$, then in NCNE $n_i\leq 2$ for all $2\leq i \leq q-1$.\end{lemma}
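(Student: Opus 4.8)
The plan is to show that if an interior position $x^l$ (with $2\le l\le q-1$) carried $n_l\ge 3$ candidates, then any candidate $i$ stationed there would have a strictly profitable infinitesimal deviation, contradicting equilibrium. The structural feature I would exploit is specific to $s=(a,b,\ldots,b,0)$: an interior candidate can never be ranked last, because the farthest candidate from any voter sits at $x^1$ or $x^q$; and she is ranked first only by the voters for whom $x^l$ is the nearest occupied position. Writing $I_l=[(x^{l-1}+x^l)/2,(x^l+x^{l+1})/2]$ for the home interval of $x^l$ (which has $\ell(I_l)>0$ since the positions are distinct), candidate $i$ therefore collects the averaged top score $\frac{1}{n_l}(a+(n_l-1)b)$ from each voter in $I_l$ and exactly $b$ from every voter outside it — outside $I_l$ she is tied with her co-locators for a block of places lying entirely in $\{2,\ldots,m-1\}$, all of which pay $b$. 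This pins down $v_i(x)$ exactly.

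Next I would compute the two one-sided limits by splitting $I_l=I_l^L\cup I_l^R$ at $x^l$. Moving $i$ just to the left makes her the unique nearest candidate for the voters of $I_l^L$ (who now pay $a$) and demotes her to $n_l$-th place for the voters of $I_l^R$ (who now pay $s_{n_l}=b$, valid since $2\le n_l\le m-1$ in the range $n_l\ge 3$ we are refuting), while nothing changes outside $I_l$ because an infinitesimal move keeps her out of first and last place everywhere. Thus $v_i(x^{l-},x_{-i})=a\,\ell(I_l^L)+b\,\ell(I_l^R)+b(1-\ell(I_l))$, and symmetrically $v_i(x^{l+},x_{-i})=b\,\ell(I_l^L)+a\,\ell(I_l^R)+b(1-\ell(I_l))$. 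Adding these and comparing with $2v_i(x)$, in the spirit of Lemma~\ref{45cand2}, the interval lengths combine cleanly and I would obtain
\[
v_i(x^{l-},x_{-i})+v_i(x^{l+},x_{-i})-2v_i(x)=\frac{(n_l-2)(a-b)}{n_l}\,\ell(I_l).
\]

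Since $a>2b\ge b$ forces $a-b>0$ and $\ell(I_l)>0$, the right-hand side is strictly positive whenever $n_l\ge 3$, so at least one of $v_i(x^{l-},x_{-i})$, $v_i(x^{l+},x_{-i})$ strictly exceeds $v_i(x)$. As in the earlier arguments, I would then upgrade this infinitesimal gain to a genuine deviation: by Proposition~\ref{insidetheinterval2} the score is linear on the adjacent open interval with the relevant one-sided limit as endpoint value, so a sufficiently small actual move of $i$ into $(x^{l-1},x^l)$ or $(x^l,x^{l+1})$ raises her score, contradicting NCNE. Hence $n_l\le 2$ for every interior position.

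The only genuinely delicate point — more a matter of care than difficulty — is the score bookkeeping underlying the first paragraph: that an interior candidate never earns the $0$ (never ranked last) and is ranked first only inside $I_l$, and that her tied block outside $I_l$ lies within $\{2,\ldots,m-1\}$ where every place pays $b$. Once these facts are verified (using only that $x^1<x^l<x^q$ and that candidates at $x^1,x^q$ remain the extremal ones under an infinitesimal move), the algebra collapses to the displayed identity and the conclusion is immediate.
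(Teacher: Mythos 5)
Your proof is correct and follows essentially the same route as the paper's: both arguments use that an interior candidate is never ranked last and earns $b$ from every voter outside her home interval, then pair the leftward and rightward infinitesimal deviations and compare $v_i(x^{l-},x_{-i})+v_i(x^{l+},x_{-i})$ against $2v_i(x)$. Your explicit identity with coefficient $\frac{(n_l-2)(a-b)}{n_l}$ is precisely the paper's statement that the length-$n_l$ subrule $(a,b,\ldots,b)$ is best-rewarding when $n_l>2$, so the two proofs coincide up to phrasing.
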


\begin{proof} If $n_i>2$, where $2\leq i\leq q-1$, then candidate $k$, located at $x^i$ is not ranked last by any voter. Moreover, she is not ranked last by any voter even on deviating to $x^{i+}$ or $x^{i-}$. So the only change in her score on making these moves is from voters in the immediate subintervals $I_1=[(x^{i-1}+x^i)/2,x^{i}]$ and $I_2=[x^{i},(x^{i}+x^{i+1})/2]$, where  voters change candidate $k$ from first equal to first, and from first equal to $n_i$th, respectively. 
In NCNE we must have $$v_k(x^{i-},x_{-k})-v_k(x)=s_1 \ell(I_1)+s_2\ell( I_2)-\left(\frac{1}{n_i}\sum_{j=1}^{n_i}s_j\right)(\ell(I_1)+\ell(I_2))\leq 0.$$
Also $$v_k(x^{i+},x_{-k})-v_k(x)=s_1\ell(I_2)+s_2\ell(I_1)-\left(\frac{1}{n_i}\sum_{j=1}^{n_i}s_j\right)(\ell(I_1)+\ell(I_2))\leq 0.$$
Adding together these two inequalities we get
$s_1+s_{n_i} \leq \frac{2}{n_i}\sum_{j=1}^{n_i}s_j,$
but the subrule $s'=(\row s{n_i})=(a,b,\ldots,b)$ of length $n_i$ is best-rewarding whenever $n_i>2$. That is, $s_1+s_{n_i}> \frac{2}{n_i}\sum_{j=1}^{n_i}s_j$.
\end{proof}

\begin{lemma}
\label{n1=2} 
If $s=(\row sm)=(a,b,\ldots,b,0)$, where $a>2b$, then in NCNE  $n_1=n_q=2$.
\end{lemma}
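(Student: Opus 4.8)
The plan is to show that the number of candidates at each end position is exactly two, given $a > 2b$. By Lemma~\ref{generalprops1}, since $s_1 = a > b = s_2$, we have $k=1$ and hence $\min(n_1,n_q) \geq 2$, so it remains only to rule out $n_1 \geq 3$ (the case $n_q \geq 3$ being symmetric). First I would assume, for contradiction, that $n_1 \geq 3$ and consider candidate $1$ at the leftmost position $x^1$. The key observation is that when $n_1 \geq 3$, candidate $1$ is never ranked last by any voter, since the worst position she can occupy is being tied among the $n_1$ candidates at $x^1$, giving her at worst the $n_1$th slot with score $s_{n_1} = b > 0 = s_m$. This means that, unlike in Lemma~\ref{generalprops1}, a rightward move does not automatically gain the full first-place reward from the left interval over the last-place penalty.

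The natural approach is to combine the infinitesimal left-move and right-move computations, exactly as in the proof of the preceding lemma for interior positions, but adapted to an end position. For candidate $1$ at $x^1$, I would split the score change on moving to $x^{1-}$ versus $x^{1+}$. Moving to $x^{1+}$: in the left interval $I_1 = [0,(x^1+x^2)/2]$ she rises to sole first place (gaining $s_1 - \frac{1}{n_1}\sum_{j=1}^{n_1}s_j$ per unit length from those voters, relative to sharing), while in the remaining intervals to the right she falls one tie-position. Moving to $x^{1-}$: in $I_1$ she stays first, and in the intervals to the right she rises. Adding the two necessary NCNE inequalities $v_1(x^{1+},x_{-1}) \leq v_1(x)$ and $v_1(x^{1-},x_{-1}) \leq v_1(x)$ and summing contributions interval by interval, the analogue of the interior argument should yield a condition of the form $s_1 + s_{n_1} \leq \frac{2}{n_1}\sum_{j=1}^{n_1} s_j$, i.e. that the length-$n_1$ prefix subrule $s' = (a,b,\ldots,b)$ is worst-punishing or intermediate. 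But $s' = (a,b,\ldots,b)$ with $a > 2b$ is best-rewarding whenever $n_1 \geq 2$ (indeed $c(s',n_1) = \frac{a - \frac{a+(n_1-1)b}{n_1}}{a-b}$, and $a>2b$ makes this exceed $1/2$), giving the strict reverse inequality and a contradiction.

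The subtlety I expect to be the main obstacle is that at an \emph{end} position the far-right interval behaves differently from interior intervals: the rightmost voters rank candidate $1$ last, contributing $s_m = 0$. When candidate $1$ moves to $x^{1+}$ she falls to $n_1+1$th in some intervals, but the very last interval already gives her $s_m$, so there is no further to fall there; conversely moving to $x^{1-}$ she rises in that interval from $s_m$ to $s_{m-1} = b$. I must check that when I add the left-move and right-move inequalities, these asymmetric contributions from the extreme right interval do not spoil the clean cancellation that produced $s_1 + s_{n_1} \leq \frac{2}{n_1}\sum s_j$. The safest route is to verify that for \emph{every} subinterval $J$ on which candidate $1$ is tied for places $k_j$ through $k_j + n_1 - 1$, the sum of her gains under the two moves equals twice her current contribution from $J$ (the pairing $s_{k_j} + s_{k_j + n_1 - 1}$ versus $\frac{2}{n_1}\sum_{i=k_j}^{k_j+n_1-1} s_i$), exactly as in Lemma~\ref{45cand2} and Theorem~\ref{convexscores2}; the prefix interval $I_1$ contributes the decisive strict term because there the relevant subrule is the best-rewarding $(a,b,\ldots,b)$. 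Once the interval-by-interval identity is established, the contradiction follows immediately, and by the left-right symmetry of the issue space the same argument gives $n_q = 2$.
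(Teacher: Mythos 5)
Your reduction via Lemma~\ref{generalprops1} to ruling out $n_1\geq 3$ is fine, but the core step of your plan fails, and it fails exactly at the point you flagged as the ``subtlety''. The interval-by-interval identity you propose to verify --- that on every subinterval the sum of candidate 1's contributions at $x^{1-}$ and $x^{1+}$ equals twice her contribution at $x^1$ --- is \emph{false} for the far-right interval $[(x^1+x^q)/2,1]$, and your description of that interval has the directions of movement backwards. Voters there rank candidate 1 last-equal, so her current contribution per unit length is $\frac{(n_1-1)b}{n_1}$. Moving to $x^{1+}$ she \emph{rises} above her $n_1-1$ former colleagues (she is now closer to these voters) and gets $b$; moving to $x^{1-}$ she \emph{falls} to strictly last and gets $0$. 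The two post-move contributions thus sum to $b+0=b$, whereas twice the current contribution is $\frac{2(n_1-1)b}{n_1}>b$ whenever $n_1\geq 3$ and $b>0$. So this interval enters the summed inequality with the wrong sign and destroys the cancellation (it survives only when $b=0$, i.e., for plurality). Concretely, in the notation of the paper's proof ($\alpha=x^1$, $\beta=(x^2-x^1)/2$, $\gamma=1-(x^1+x^q)/2$), adding the two NCNE inequalities $v_1(x^{1\pm},x_{-1})\leq v_1(x)$ yields only
\[
\frac{n_1-2}{n_1}\Bigl[(a-b)(\alpha+\beta)-b\gamma\Bigr]\leq 0,
\]
that is, $(a-b)(\alpha+\beta)\leq b\gamma$ when $n_1\geq 3$. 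This is a satisfiable geometric constraint (take $\gamma$ large), not a contradiction, so the conclusion $s_1+s_{n_1}\leq\frac{2}{n_1}\sum_{j=1}^{n_1}s_j$ you are aiming for simply does not follow.

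This is precisely why the paper's proof does not add the two inequalities with equal weights. It writes both out explicitly and combines them \emph{unequally}: the right-move inequality is rearranged into the bound $s_2\gamma\leq(s_1-s_2)(\alpha+\beta-n_1\beta)$, multiplied by $(1-1/n_1)$, and substituted into the left-move inequality; this eliminates $\gamma$ and $\alpha$ simultaneously and leaves $0\leq\beta(2-n_1)$, which with $\beta>0$ forces $n_1\leq 2$. (The equal-weight sum does work for \emph{interior} positions --- that is the preceding lemma in Subsection~\ref{rulesabbb0} --- because an interior candidate is never ranked last before or after an infinitesimal move; end positions are genuinely different, which is the whole difficulty of Lemma~\ref{n1=2}.) A secondary slip: for the prefix rule one has $c((a,b,\ldots,b),n_1)=(n_1-1)/n_1$, which exceeds $1/2$ for every $n_1\geq 3$ as soon as $a>b$; the hypothesis $a>2b$ is not what makes the prefix best-rewarding, so an argument resting only on that fact would not be using $a>2b$ at all --- a further hint that the tail of the rule (the final score $0$ and the resulting asymmetry at the right end) must enter the argument, as it does in the paper's proof.
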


\begin{proof} Let us introduce the following notation:
\begin{itemize}
\item $\alpha=x^1$ -- the proportion of voters to the left of candidate 1.
\item $\beta=(x^2-x^1)/2$ -- the proportion of voters in half the interval between candidates 1 and 2.
\item $\gamma= 1-(x^1+x^q)/2$ -- the proportion of voters for whom 1 is ranked last equal.
\end{itemize}Note that
$$
v_1(x)=\frac{s_1}{n_1}(\alpha+\beta)+s_2-\frac{s_2}{n_1}\left(\alpha+\beta+\gamma)\right)=\frac{(s_1-s_2)}{n_1}(\alpha+\beta)+s_2\left(1-\frac{\gamma}{n_1}\right).
$$
Consider if 1 moves to $x^{1-}$. Then  
$$
v_1(x^{1-},x_{-1})=s_1\alpha+s_2(1-\alpha-\gamma)=(s_1-s_2)\alpha+s_2(1-\gamma).
$$
If 1 moves to $x^{1+}$ then 
$$
v_1(x^{1+},x_{-1})=s_1\beta+s_2(1-\beta)=(s_1-s_2)\beta+s_2.
$$
For NCNE we require that these moves not be beneficial to candidate 1.
That is, $v_1(x^{1-},x_{-1})\leq v_1(x)$ which implies
\[
 (s_1-s_2)\alpha+s_2(1-\gamma) \leq \frac{(s_1-s_2)}{n_1}(\alpha+\beta)+s_2\left(1-\frac{\gamma}{n_1}\right)
 \]
 or
 \begin{equation}
\label{n1=2.1}
(s_1-s_2)\left(1-\frac{1}{n_1}\right)\alpha \leq \frac{(s_1-s_2)}{n_1}\beta+s_2\left(1-\frac{1}{n_1}\right)\gamma.
\end{equation}
Similarly, for the other move we have $v_1(x^{1+},x_{-1})\leq v_1(x)$ which gives us
\[
 (s_1-s_2)\left(1-\frac{1}{n_1}\right)\beta\leq \frac{(s_1-s_2)}{n_1}\alpha-s_2\frac{\gamma}{n_1}
 \]
 or
 \begin{equation}
 \label{n1=2.2}
 (s_1-s_2)\left[\left(1-\frac{1}{n_1}\right)\beta-\frac{\alpha}{n_1}\right]\leq -s_2\frac{\gamma}{n_1} \leq0.
 \end{equation}
Then \eqref{n1=2.2} implies that $\beta(n_1-1)\leq \alpha$ since we know that $s_1>s_2$. 
In addition, rearranging \eqref{n1=2.2} gives
\[ 
s_2\gamma \leq (s_1-s_2)(\alpha+\beta-n_1\beta) 
\]
and multiplying through by the positive number $1-1/n_1$
\begin{equation}
\label{n1=2.3} 
s_2\left(1-\frac{1}{n_1}\right)\gamma \leq \left(1-\frac{1}{n_1}\right)(s_1-s_2)(\alpha+\beta-n_1\beta).\end{equation}
Substituting \eqref{n1=2.3} into \eqref{n1=2.1} and dividing through by $s_1-s_2>0$ we get 
\begin{align} \left(1-\frac{1}{n_1}\right)\alpha&\leq \frac{\beta}{n_1}+\left(1-\frac{1}{n_1}\right)(\alpha+\beta-n_1\beta)\nonumber\\
\label{n1=2.4} 0&\leq \beta(2-n_1) .\end{align}  
Since $\beta>0$, equation \eqref{n1=2.4} requires that $n_1\leq 2$. So $n_1=2$. 
A similar argument gives $n_q=2$.
\end{proof}

\subsection{Multipositional clustered NCNE}

We introduce some additional notation. Let
\begin{enumerate}
\item[(i)] $I_1=[0,(x^1+x^2)/2]$,
\item[(ii)] $I_i=[(x^{i-1}+x^{i})/2,(x^{i}+x^{i+1})/2]$ for $2\leq i \leq q-1$,
\item[(iii)]  $I_q=[(x^{q-1}+x^{q})/2,1]$,
\end{enumerate}
 be the ``full-electorates" corresponding each occupied position. For each $i\in [q]$ let $I_i^L=\{y\in I_i: y\leq x^i\}$ and $I_i^R=\{y\in I_i: y\geq x^i\}$ be the left and right ``half-electorates" whose union is the full-electorate $I_i$, that is $I_i=I_i^L\cup I_i^R$.  We note that  $\ell(I_i^R)=\ell(I_{i+1}^L)$ for $i\in [q-1]$. \par\medskip

The following theorem provides a method of constructing rules for which multipositional NCNE exist.

\begin{thm}
\label{multipositional} 
Let $m=qr$ be a composite number with $q\geq 2$.  Consider an $m$-candidate scoring rule 
$s=(s_1,\ldots,s_{r-1},0,\underbrace{0,\ldots,0}_{r}, \cdots ,\underbrace{0,\ldots,0}_{r})$. 
Then the profile given by $x=((x^1,r),\ldots,(x^q,r))$ is in NCNE if and only if  the following two conditions hold:
\begin{enumerate}
\item[(a)] $\max_{i \in [q]}\max\{\ell(I_i^L),\ell(I_i^R)\} \leq (1-c(s',r))\min_{i \in [q]}\{\ell(I_i)\}$,
\item[(b)] $\max_{i \in [q]}\{\ell(I_i)\}\leq \left(1+\frac{1}{r}\right)\min_{i \in [q]}\{\ell(I_i)\}$,
\end{enumerate}
where $s'=(s_1,\ldots,s_{r-1},0)$.
\end{thm}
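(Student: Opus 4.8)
The plan is to exploit the rigid structure forced by having exactly $r$ candidates at each cluster together with $s_r=\cdots=s_m=0$. Write $\Sigma=\sum_{i=1}^{r-1}s_i$, so that $\bar{s'}=\Sigma/r$ and hence $1-c(s',r)=\Sigma/(rs_1)$; this identity is what will convert condition (a) into a statement about scores. First I would observe that in the proposed profile every candidate draws positive score \emph{only} from the full-electorate of her own cluster: for a voter in $I_j$ the $r$ candidates sitting at $x^j$ occupy exactly the top $r$ ranking slots, so a candidate at any other cluster is ranked $(r+1)$st or worse and receives $0$. By the fair-lottery convention a candidate at $x^l$ therefore earns the expected amount $\tfrac1r(s_1+\cdots+s_r)=\Sigma/r$ from each voter in $I_l$ and nothing elsewhere, giving the baseline score $v_1(x)=\tfrac{\Sigma}{r}\ell(I_l)$.

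The core reduction is to show that a deviating (now lone) candidate can score in only two ways, so that her best response is attained at finitely many explicit points. By Propositions \ref{insidetheinterval1} and \ref{insidetheinterval2}, $v_1(\cdot,x_{-1})$ is piecewise linear with breaks only at the occupied positions, so its maximum over $[0,1]$ is taken at some $x^{j-}$, $x^{j+}$ or $x^j$ (moves into $[0,x^1)$ and $(x^q,1]$ are dominated via the positive and negative slopes of Proposition \ref{insidetheinterval1}). The structural claim I would establish is that, because the candidates come in blocks of size $r$ (or $r-1$ at the depleted home cluster) and every score from the $r$th place on is $0$, a lone candidate is either the unique nearest candidate to a voter---scoring $s_1$---or is ranked $r$th or lower and scores $0$; she can never profitably occupy an intermediate place. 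I would then evaluate the three candidate optima: an edge move to $x^{j\pm}$ captures first place on exactly the half-electorate on that side, giving $v_1(x^{j+},x_{-1})=s_1\ell(I_j^R)$ and $v_1(x^{j-},x_{-1})=s_1\ell(I_j^L)$ (the same formula whether or not $j$ is her own cluster); while moving onto a cluster $x^j$ with $j\ne l$ makes her the $(r+1)$st member there, so she shares the top $r+1$ slots and gets $v_1(x^j,x_{-1})=\tfrac{\Sigma}{r+1}\ell(I_j)$.

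Finally I would assemble the equilibrium inequalities. No edge deviation is profitable for any candidate iff $s_1\ell(I_j^{L})\le\tfrac{\Sigma}{r}\ell(I_l)$ and $s_1\ell(I_j^{R})\le\tfrac{\Sigma}{r}\ell(I_l)$ for all $j,l$; taking the worst case and dividing by $s_1$ yields exactly condition (a), using $1-c(s',r)=\Sigma/(rs_1)$. No cluster-joining deviation is profitable iff $\tfrac{\Sigma}{r+1}\ell(I_j)\le\tfrac{\Sigma}{r}\ell(I_l)$ for all $j\ne l$; cancelling the positive factor $\Sigma$ and taking the worst case gives $\max_i\ell(I_i)\le(1+\tfrac1r)\min_i\ell(I_i)$, which is condition (b). Since these two families of moves exhaust the finitely many critical positions, the conjunction of (a) and (b) is both necessary and sufficient for the profile to be an NCNE.

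The main obstacle I anticipate is the structural claim of the second paragraph: one must argue carefully that a lone deviant can never extract positive score from an intermediate ranking position---that the blocks of $r$ coincident candidates always push her either into first place or past the $r$th place---and that interior-interval positions never beat the cluster edges (indeed the interval value is the constant $s_1\cdot\tfrac{x^{a+1}-x^a}{2}=s_1\ell(I_a^R)$, matching the edge limit, as Proposition \ref{insidetheinterval2} predicts a zero slope there). Handling the depleted home cluster (with $r-1$ candidates) uniformly with the intact clusters, and checking the boundary intervals $[0,x^1)$ and $(x^q,1]$, are the places where the bookkeeping must be carried out with care.
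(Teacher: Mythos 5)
Your proposal is correct and follows essentially the same route as the paper's own proof: both rest on the observation that, since clusters have size $r$ (or $r-1$ after the deviant leaves) and $s_r=\cdots=s_m=0$, a lone deviant earns $s_1$ precisely from the voters ranking her first and zero otherwise, so the only deviations to check are captures of half-electorates (giving condition (a) via $1-c(s',r)=\Sigma/(rs_1)$) and joining another cluster (giving condition (b)). Your extra step of invoking Propositions \ref{insidetheinterval1} and \ref{insidetheinterval2} to locate the optima at $x^{j-}$, $x^j$, $x^{j+}$ is just a slightly more formal packaging of the same computation the paper performs directly.
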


The idea is that for this kind of scoring rule, each occupied position is ``isolated" from the rest of the issue space, since a candidate at this position receives nothing from voters who rank her $r$th or worse. So the candidates have to compete ``locally''. Note that condition (a) can only be satisfied if $c(s',r)\leq 1/2$, since it implies 
\begin{align*}
\max_{i \in [q]}\max\{\ell(I_i^L),\ell(I_i^R)\} &\leq (1-c(s',r))\min_{i \in [q]}\{\ell(I_i)\} \\
&\leq 2(1-c(s',r))\max_{i \in [q]}\max\{\ell(I_i^L),\ell(I_i^R)\},
\end{align*}
from which  $c(s',r)\leq 1/2$ follows. That is, though the scoring rule is best-rewarding, the subrule $s'$, for $x$ to be in NCNE, must be worst-punishing or intermediate. Hence, comparing this with Theorem \ref{CNE}, we see that locally each occupied position behaves with respect to the rule $s'$ in a similar way to a CNE on the whole issue space.

\begin{proof}
Consider candidate $i$ at position $x^k$. Since all of $i$'s score is garnered from the immediate full-electorate $I_k$, $i$'s score is
$$v_i(x)=\left(\frac{1}{r}\sum_{j=1}^rs_j\right)\ell(I_k).$$
Suppose that $i$ moves to some position $t$ between two occupied positions or between an occupied position and the boundary of the issue space. In the latter case, $i$ is now ranked first by, at best, all voters in the intervals $I_1^L$ or $I_q^R$. In the former case, when $x^{l}<t<x^{l+1}$ for some $l$, candidate $i$ is ranked first by voters in the interval $[(x^{l}+t)/2,(t+x^{l+1})/2]$, which is equal in length to $\ell(I_{l}^R)=\ell(I_{l+1}^L)$. From the rest of the issue space, $i$ is ranked at best $r$th, so receives nothing.
In each case, $i$'s score is now 
$$
v_i(t,x_{-i})=s_1\ell(J),
$$ 
for the half-electorate $J$ that $i$ moves into. For NCNE we need this move not be beneficial, that is, $v_i(t,x_{-i})\leq v_i(x)$. Thus, for NCNE we must have 
$$s_1\ell(J)\leq \left(\frac{1}{r}\sum_{j=1}^rs_j\right)\ell(I_k),$$
which occurs if and only if
$$ 
\ell(J)\leq %\left(\frac{1}{s_1r}\sum_{j=1}^rs_j\right)\ell(I_k) =
(1-c(s',r))\ell(I_k).
$$
This must hold when $i$ moves into any of the half-electorates, and for any candidate at any initial position. This yields the necessity of condition (a).
%$$\max_{i \in [q]}\{\ell(I_i^L),\ell(I_i^R)\} \leq (1-c(s',r))\min_{i \in [q]}\{\ell(I_i)\}.$$

Also, there is a possibility that $i$ moves to some position $x^l$ that is already occupied. In this case her score becomes
$$
v_i(x^l,x_{-i})=\left(\frac{1}{r+1}\sum_{j=1}^{r+1}s_j\right)\ell(I_l)=\left(\frac{1}{r+1}\sum_{j=1}^{r}s_j\right)\ell(I_l),
$$
which again must not exceed $v_i(x)$. Hence, we must have
$$  
\left(\frac{1}{r+1}\sum_{j=1}^{r}s_j\right)\ell(I_l) \leq \left(\frac{1}{r}\sum_{j=1}^rs_j\right)\ell(I_k) 
$$
or
$$ 
\ell(I_l)\leq \left(1+\frac{1}{r}\right)\ell(I_k).
$$
Again, this must hold for any pair of full-electorates, which implies
%$$
%\max_{i \in [q]}\{\ell(I_i)\}\leq \left(1+\frac{1}{r}\right)\min_{i \in [q]}\{\ell(I_i)\},
%$$
that condition (b) is necessary.

There are no other possible moves, so (a) and (b) are sufficient for NCNE.
\end{proof}

We note that the degree to which the positions can be nonsymmetric depends on how small $c(s',r)$ is. If $c(s',r)=1/2$, for example, then by condition (a) of Theorem~\ref{multipositional} we must have that all the electorates are the same size and the occupied positions are at the halfway point of each one. If the profile is symmetric, with the candidates positioned so as to divide the issue space into equally sized full-electorates, Theorem~\ref{multipositional} simplifies.

\begin{cor}\label{multipositionalsymmetric} Let there be $m=qr$ candidates, $q\geq 2$, and consider the scoring rule $s=(s_1,\ldots,s_{r-1},0,\underbrace{0,\ldots,0}_{r}, \cdots ,\underbrace{0,\ldots,0}_{r})$. Then the profile given by $x=((x^1,r),\ldots,(x^q,r))$ such that $\ell(I_i)=1/q$ for all $i\in [q]$, is in NCNE if and only if $c(s',r)\leq 1/2$, where $s'=(s_1,\ldots,s_{r-1},0)$.\end{cor}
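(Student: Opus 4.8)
The plan is to derive the corollary directly from Theorem~\ref{multipositional} by evaluating its two conditions on the fully symmetric profile. First I would record the geometry of that profile: equally sized full-electorates $\ell(I_i)=1/q$ together with the central placement of each cluster forces every position to sit at the midpoint $x^i=(2i-1)/(2q)$ of its electorate, so that each half-electorate has length $\ell(I_i^L)=\ell(I_i^R)=1/(2q)$. I would verify the two boundary electorates $I_1$ and $I_q$ separately, since their outer halves $[0,x^1]$ and $[x^q,1]$ are cut off by the endpoints $0$ and $1$ rather than by neighbouring midpoints; a direct computation shows $x^1=1/(2q)$ and $1-x^q=1/(2q)$, so these boundary halves also have length $1/(2q)$, exactly like the interior ones.

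Next I would dispose of condition (b). Since all full-electorates have the common length $1/q$, we have $\max_{i\in[q]}\ell(I_i)=\min_{i\in[q]}\ell(I_i)=1/q$, so condition (b) reads $1/q\le(1+\tfrac1r)(1/q)$, which holds for every $r\ge 1$. Thus (b) is vacuously satisfied for this profile, and the entire content of the characterisation is carried by condition (a).

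Finally I would evaluate condition (a). Here $\max_{i\in[q]}\max\{\ell(I_i^L),\ell(I_i^R)\}=1/(2q)$ while $\min_{i\in[q]}\ell(I_i)=1/q$, so (a) becomes $1/(2q)\le(1-c(s',r))/q$, which is equivalent to $c(s',r)\le 1/2$. Since Theorem~\ref{multipositional} asserts that the given profile is in NCNE if and only if both (a) and (b) hold, and (b) always holds here, the profile is in NCNE if and only if $c(s',r)\le 1/2$, as claimed.

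There is no serious obstacle, as the corollary is a clean specialisation of Theorem~\ref{multipositional}. The only point requiring care is the bookkeeping for the half-electorate lengths of the symmetric profile—in particular confirming that the boundary electorates behave exactly like the interior ones—and noticing that condition (b) collapses to a triviality as soon as the full-electorates are equal, so that the whole equivalence is governed by condition (a) alone.
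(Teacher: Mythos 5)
Your proposal is correct under the intended reading and takes essentially the same route as the paper's own (two-line) proof: specialise Theorem~\ref{multipositional}, observe that condition (b) is automatic when all $\ell(I_i)$ are equal, and reduce condition (a) to $1/(2q)\le (1-c(s',r))/q$, i.e.\ $c(s',r)\le 1/2$.

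One subsidiary claim in your first paragraph is not right, though, and it is worth flagging because it is exactly the point where a literal reading of the corollary is delicate. The condition $\ell(I_i)=1/q$ for all $i$ does \emph{not} force $x^i=(2i-1)/(2q)$: the electorate boundaries are the midpoints $(x^i+x^{i+1})/2$, so equal lengths only give the $q-1$ equations $x^i+x^{i+1}=2i/q$, leaving a one-parameter family of profiles. For instance, with $q=2$ the profile $x^1=0.3$, $x^2=0.7$ has $\ell(I_1)=\ell(I_2)=1/2$, yet its largest half-electorate has length $0.3$, so condition (a) of Theorem~\ref{multipositional} demands $c(s',r)\le 0.4$ and the ``if'' direction of the corollary would fail for, say, a Borda subrule with $c(s',r)=1/2$. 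Thus the central placement cannot be ``derived''; it must be taken as part of the definition of the profile, which is what the paper intends (the sentence preceding the corollary specifies a \emph{symmetric} profile) and what the paper's proof also implicitly assumes when it writes condition (a) as $1/(2q)\le(1-c(s',r))/q$. With that placement assumed rather than deduced, your verification of (a) and (b) is exactly the paper's argument, and your observation that the boundary electorates behave like the interior ones is then correct.
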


\begin{proof} Since each $I_i$ is the same length, condition (b) of Theorem \ref{multipositional} is satisfied. Condition (a) reduces to $1/2q \leq (1-c(s',r))/q$, whence the requirement that $c(s',r)\leq 1/2$.\end{proof}

Now let us look at some examples.

\begin{examp}  Consider %$k$-approval rule, given by $s=(1,\ldots,1,0,\ldots,0)$, where each of the first $k$ scores is a $1$. 
%We know by Corollary~\ref{generalprops2} that this rule supports no NCNE whenever $k\geq \lfloor m/2 \rfloor$. 
%Now, given $k$, consider 
$r$-candidate $k$-approval rule $s'=(\underbrace{1,\ldots,1,0,\ldots,0}_{r})$ with $r\geq k+1$. The condition $c(s',r)\leq 1/2$ holds if and only if $r\leq 2k$, suppose this is true. By appending zeros to the end of $s'$, we can extend $s'$ to $k$-approval with $m=qr$ candidates for any $q\geq 2$. Then Theorem \ref{multipositional} implies there exist NCNE in which $r$ candidates position themselves at each of the $q$ distinct locations. 

As a special case, consider 1-approval, which is just plurality: $s=(1,0,\ldots,0)$. For any even $m$, if we set $r=2$ then we obtain  $s'=(1,0)$ with $c(s',2)=1/2$. So the profile where two candidates locate at each position so as to divide the space into equally sized intervals is an NCNE, and it is the only one in which there are two candidates at each position. We cannot have $r>2$, as then we would have $c(s',r)>1/2$. So plurality has no equilibria in which more than two candidates locate at each position, in agreement with the well-known results of Eaton and Lipsey \cite{eatonlipsey} and Denzau et al. \cite{denzaukatsslutsky}. %On the other hand, Theorem~\ref{multipositional} fails to account for NCNE in which the number of candidates at each position is not the same. Such equilibria are known to exist for plurality, again by the two papers cited above.
\end{examp}

\begin{examp}\label{exceptionexamp} 
Let $s'=(r-1,r-2,\ldots,2,1,0)$, that is, $s'$ is Borda. Let $s$, of length $m=qr$, $q\geq 2$, be the rule resulting from appending $(q-1)r$ zeros to~$s'$. Then $c(s',r)=1/2$, so there exists an NCNE in which $r$ candidates position themselves at the $q$ halfway points of $q$ equally sized full-electorates that partition the issue space. 
\end{examp}

Recall that Theorem~\ref{convexscores2} stated that a rule with convex scores has no NCNE, unless the nonconstant part of the scoring rule is exactly Borda and is shorter than the constant part.  The rule $s$ in Example~\ref{exceptionexamp} is precisely such a rule. Hence, the exception in Theorem~\ref{convexscores2} does indeed need to be made.

 \begin{examp} For a given scoring rule, NCNE with different partitions of the candidates can exist simultaneously. Consider 3-approval, $s=(1,1,1,0,\ldots,0)$, with $m=20$. Then if $r=4$ we have $c((1,1,1,0),4)=1/4<1/2$, so there are NCNE with five distinct positions occupied by four candidates apiece. At the same time, if $r=5$ we have $c((1,1,1,0,0),5)=2/5<1/2$, so there are also NCNE with four distinct positions occupied by five candidates apiece. 
\end{examp}

\subsection{Symmetric bipositional NCNE}

A CNE is the simplest kind of Nash equilibrium that may exist. We now turn our attention to what would be the next simplest kind -- an NCNE in which there are only two occupied positions. Here we  restrict ourselves to the case where $m$ is even and the equilibrium positions are symmetric. We saw in Example~\ref{8-4NCNE} that bipositional equilibria are not necessary symmetric. At the end of this section we will give an example of a nonsymmetric bipositional equilibria for $m=7$. Later it will become clear that this is the smallest value of $m$ for which nonsymmetric bipositional equilibria exist.

\begin{thm}\label{bipositional} Suppose $m$ is even. Then the profile $x=((x^1,m/2),(x^2,m/2))$, with $0<x^1<1/2$ and $x^2=1-x^1$, is in NCNE if and only if both
\begin{equation}
\label{bipositional1} 
\frac{s_{m/2}+s_{m/2+1}}{2}<  \bar{s} %\frac{2}{m}\sum_{i=1}^ms_i
\end{equation} 
and
\begin{equation}
\label{bipositional2} 
\frac{s_1+s_{m/2}-2 \bar{s}}{2(s_1-s_{m/2+1})}\leq x^1 \leq \frac{2 \bar{s}-s_m-s_{m/2}}{2(s_1-s_{m/2})}.
\end{equation}
If in addition  $c(s,m)\leq 1/2$, then  the profile $x$ is in NCNE whenever
\begin{equation}
\label{bipositionalcorollary1} 
\frac{s_1+s_{m/2}-2\bar{s}}{2(s_1-s_{m/2+1})}\leq x^1 <\frac{1}{2} .
\end{equation} 
Moreover, \eqref{bipositionalcorollary1} can always be satisfied.
\end{thm}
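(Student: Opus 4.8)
The plan is to exploit the left--right symmetry of the profile and reduce everything to checking the deviations of a single candidate sitting at $x^1$. First I would record the baseline score. Since $x^2=1-x^1$, the midpoint of the two clusters is exactly $1/2$, so a candidate at $x^1$ is tied in the top half of the ballot for the voters in $[0,1/2]$ and in the bottom half for those in $[1/2,1]$, giving $v_1(x)=\tfrac1m\sum_{i=1}^{m/2}s_i+\tfrac1m\sum_{i=m/2+1}^{m}s_i=\bar s$ (equivalently, by symmetry each of the $m$ candidates takes an equal share of the total $m\bar s$ points). Because the configuration is invariant under $y\mapsto 1-y$, which interchanges the clusters, it suffices to show no candidate at $x^1$ has a profitable deviation; the $x^2$ case is then identical. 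I would also invoke Lemma~\ref{generalprops1} to note that, since $n_1=m/2$, the scores cannot have stabilised by position $m/2$, so $s_1>s_{m/2}\ge s_{m/2+1}$ and the denominators appearing in \eqref{bipositional2} are positive.

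Next I would enumerate the deviation targets and use Propositions~\ref{insidetheinterval1} and~\ref{insidetheinterval2} to locate where the deviation score peaks on each interval. On $(0,x^1)$ the score is linear with positive slope $(s_1-s_m)/2$, so its supremum is at $x^{1-}$; on $(x^1,x^2)$ Proposition~\ref{insidetheinterval2} gives slope $(s_{m/2}-s_{m/2+1})/2\ge 0$, so the supremum is at $x^{2-}$; and on $(x^2,1)$ the slope is $-(s_1-s_m)/2<0$, with supremum at $x^{2+}$. Computing the one-sided limits yields
\begin{align*}
v_1(x^{1-},x_{-1})&=s_1x^1+s_{m/2}\Big(\tfrac12-x^1\Big)+\tfrac12 s_m,\\
v_1(x^{2-},x_{-1})&=\tfrac12 s_{m/2}+s_1\Big(\tfrac12-x^1\Big)+s_{m/2+1}x^1,\\
v_1(x^{2+},x_{-1})&=\tfrac12 s_m+s_{m/2+1}\Big(\tfrac12-x^1\Big)+s_1x^1 .
\end{align*}
Imposing $v_1(x^{1-},x_{-1})\le\bar s$ rearranges to the upper bound of \eqref{bipositional2}, and $v_1(x^{2-},x_{-1})\le\bar s$ rearranges to the lower bound of \eqref{bipositional2}. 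The remaining one-sided moves are then dispatched as dominated: $v_1(x^{1+},x_{-1})\le v_1(x^{2-},x_{-1})$ by the sign of the slope on $(x^1,x^2)$, while the direct comparison $v_1(x^{2+},x_{-1})-v_1(x^{1-},x_{-1})=(s_{m/2+1}-s_{m/2})(\tfrac12-x^1)\le 0$ shows the right end-interval adds nothing. Hence the only genuinely new deviation is moving onto the occupied position $x^2$, making that cluster size $m/2+1$; averaging over the two halves gives $v_1(x^2,x_{-1})=\frac{m\bar s+s_{m/2}+s_{m/2+1}}{m+2}$, which is unprofitable iff $s_{m/2}+s_{m/2+1}\le 2\bar s$.

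The hard part will be reconciling this weak inequality with the strict condition \eqref{bipositional1}, and this is the conceptual crux. Here I would observe that the lower bound $L:=\frac{s_1+s_{m/2}-2\bar s}{2(s_1-s_{m/2+1})}$ satisfies $L<\tfrac12$ \emph{if and only if} $s_{m/2}+s_{m/2+1}<2\bar s$ (clearing the positive denominator). Thus the standing hypothesis $x^1<\tfrac12$, combined with the lower-bound constraint $x^1\ge L$, forces $L<\tfrac12$, i.e.\ \eqref{bipositional1}; conversely \eqref{bipositional1} both makes the join-cluster move strictly unprofitable and leaves room for $x^1\in[L,U]\cap(0,\tfrac12)$. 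Assembling the three binding constraints (the two bounds of \eqref{bipositional2} plus the join-cluster condition, the last being implied by \eqref{bipositional1}) and the dominated cases gives the stated equivalence.

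Finally, for the addendum I would use the clean identity, straight from the definition of $c(s,m)$, that $U\ge\tfrac12\iff 2\bar s\ge s_1+s_m\iff c(s,m)\le\tfrac12$. Consequently, when $c(s,m)\le\tfrac12$ the upper-bound constraint $x^1\le U$ is automatically implied by $x^1<\tfrac12$, so \eqref{bipositional2} collapses to \eqref{bipositionalcorollary1}; and since \eqref{bipositional1} gives $L<\tfrac12\le U$, the interval $[L,\tfrac12)$ is nonempty, so \eqref{bipositionalcorollary1} can always be satisfied.
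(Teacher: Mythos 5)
Your proposal is correct and follows essentially the same route as the paper's own proof: reduce by symmetry to deviations of a single candidate at $x^1$, compute the one-sided limits at $x^{1-}$, $x^{2-}$, $x^{2+}$ and the join-cluster score at $x^2$, translate unprofitability into the two bounds of \eqref{bipositional2} plus $s_{m/2}+s_{m/2+1}\leq 2\bar{s}$, recover the strict inequality \eqref{bipositional1} from $x^1<1/2$ combined with the lower bound, and obtain the addendum from $c(s,m)\leq 1/2 \iff s_1+s_m\leq 2\bar{s}$. The only cosmetic difference is that you source $s_1>s_{m/2}\geq s_{m/2+1}$ from Lemma \ref{generalprops1}, which presupposes the profile is an NCNE and so only covers the necessity direction, whereas the paper derives $s_1>s_{m/2}$ directly from \eqref{bipositional1}, which is what the sufficiency direction requires (though this is trivially patched, since \eqref{bipositional2} is only meaningful when its denominators are nonzero).
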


\begin{proof} By the symmetry of the positions, $(x^1+x^2)/2=1/2$ and $(x^2-x^1)/2=1/2-x^1$. At $x$, all candidates receive $1/m$-th of the points, so $v_i(x)=\bar{s}$ for all $i=1,\ldots,m$. Note that it is necessary that $s_1>s_{m/2}$, since otherwise there would need to be more than $m/2$ candidates at each position by Lemma \ref{generalprops1}. 

By symmetry, for NCNE it is enough to require candidate 1 not be able to deviate profitably, and there are only three moves to consider: a move to $x^1-\epsilon$, which is always better than a move to $x^2+\epsilon$, since 1 is ranked one place higher for half of voter in the middle interval; a move to $x^2-\epsilon$, which is the best move out of any into the middle interval since the slope of $v_1(t,x_{-1})$ in that interval is nonnegative by Proposition~\ref{insidetheinterval2}; and, finally, a move to $x^2$.

For the first one we have
$$ 
v_1(x^{1-},x_{-1})=s_1x^1+s_{m/2}\left(\frac{1}{2}-x^1\right)+\frac{1}{2}s_m.
$$
For NCNE, it must be that $v_1(x^{1-},x_{-1})\leq v_1(x)$, which yields the requirement
\begin{equation}
\label{bipositional1.1} 
x^1\leq  \frac{2\bar{s}-s_m-s_{m/2}}{2(s_1-s_{m/2})}.
 \end{equation}
For the second move we have
$$
v_1(x^{2-},x_{-1})=s_1\left(\frac{1}{2}-x^1\right)+\frac{1}{2}s_{m/2}+s_{m/2+1}x^1.
$$
The fact that $v_1(x^{2-},x_{-1})\leq v_1(x)$ yields 
\begin{equation}
\label{bipositional1.2} 
x^1\geq \frac{s_1+s_{m/2}-2\bar{s}}{2(s_1-s_{m/2+1})}.
\end{equation}
Finally,
$$
v_1(x^2,x_{-1})=\frac{1}{m/2+1}\left(\sum_{i=1}^{m/2+1}\!\!\! s_i+\sum_{i=m/2}^m\!\!\!s_i\right)\frac{1}{2}=\frac{1}{m+2}\left(m\bar{s}+s_{m/2}+s_{m/2+1}\right).
$$
Since in an NCNE $v_1(x^2,x_{-1})\leq v_1(x)$, it must be that $s_{m/2}+s_{m/2+1}\leq 2\bar{s}.$ 

There are no other moves to consider, so if the position $x^1$ is valid, that is, satisfies \eqref{bipositional1.1}, \eqref{bipositional1.2} and is in the range $0<x^1<1/2$, then we have an NCNE.
The condition $x^1<1/2$ combined with \eqref{bipositional1.2} implies the strict inequality in \eqref{bipositional1}. The condition $x^1>0$ means that we need the right-hand side of \eqref{bipositional1.1} to be strictly greater than zero, which is always true.

Finally, note that the requirement that $s_1>s_{m/2}$ is implied by \eqref{bipositional1}, since if $s_{m/2}=s_1$ then we have
$$
 s_{m/2}+s_{m/2+1}=s_1+s_{m/2+1}<2\bar{s}=s_1+\frac{2}{m}\!\sum_{i=m/2+1}^m\!\!\!\!\!s_i\leq s_1+s_{m/2+1},
$$
a contradiction.

To prove the second statement, suppose a scoring rule satisfies both $c(s,m)\leq 1/2$ and \eqref{bipositional1}. Since $c(s,m)\leq 1/2$ is equivalent to $s_1+s_m\leq 2\bar{s}$, the right-hand side of \eqref{bipositional2} always satisfies
\begin{equation}
\label{bipositionalcorollary.1}
\frac{1}{2}\leq \frac{2\bar{s}-s_m-s_{m/2}}{2(s_1-s_{m/2})}.
\end{equation}
Similarly, the left-hand side satisfies 
\begin{equation}
\label{bipositionalcorollary.2}
\frac{s_1+s_{m/2}-2\bar{s}}{2(s_1-s_{m/2+1})} <\frac{1}{2}.
\end{equation}
Putting together \eqref{bipositionalcorollary.1} and \eqref{bipositionalcorollary.2}, we see that it will always be possible to find valid values of $x^1$ in the desired range.
 \end{proof}
 
 \begin{examp} Again consider $k$-approval with $k<m/2$. Clearly, \eqref{bipositional1} is satisfied. Then we have symmetric bipositional NCNE whenever $1/2-k/m\leq x^1\leq k/m$, which is valid whenever $k\geq m/4$. Thus, as $k$ decreases and the rule becomes more best-rewarding, the more extreme positions are possible, until we reach the point where a bipositional equilibrium is no longer viable. NCNE with more than two positions then become possible, as can be seen by Theorem \ref{multipositional} and as one would expect by Corollary \ref{lowerboundq}.\end{examp}
 
  Theorem \ref{bipositional} allows us to conclude that bipositional NCNE may exist for both best-rewarding and worst-punishing rules, as we will see in the examples below. 
  
\begin{examp}
\label{bipositionalexample} 
Let $m=6$. Consider the following rules:
\begin{enumerate}
\item[(i)] $s=(2,2,1,1,1,0)$. We have $s_3+s_4=2<7/3=\frac{2}{6}\sum_{i=1}^6s_i$, so \eqref{bipositional1} is satisfied. Equation \eqref{bipositional2} reduces to $1/3\leq x^1 \leq 2/3$, so the profile $x=((x^1,3),(1-x^1,3))$ is in NCNE for any $1/3\leq x^1<1/2$. Note that $c(s,6)=5/12<1/2$, so this rule is worst-punishing.
\item[(ii)] $s=(10,10,4,3,3,0)$. We have $s_3+s_4=7<10=\frac{2}{6}\sum_{i=1}^6s_i$, so \eqref{bipositional1} is satisfied. By equation \eqref{bipositional2}, NCNE occurs whenever $2/7\leq x^1<1/2$. This rule is intermediate since $c(s,6)=1/2$.
\item[(iii)] $s=(4,3,1,1,0,0)$. We have $s_3+s_4=2<3=\frac{2}{6}\sum_{i=1}^6s_i$, so \eqref{bipositional1} is satisfied and we have NCNE when $x^1=1/3$. So this rule allows only one symmetric bipositional NCNE. Here we have $c(s,6)=5/8>1/2$, so $s$ is best-rewarding.
\end{enumerate}
The first two rules also allow CNE, so we see that CNE and NCNE can coexist for the same rule. The third rule, on the other hand, has no CNE.
\end{examp}

Finally, we give an example of a bipositional equilibrium in which the number of candidates is different at the two positions, though the positions themselves turn out to be symmetrically located.

\begin{examp}
\label{nonsymbipositionalexample} 
Let $m=7$. Consider the rule $s=(10, 10, 4, 3, 3, 1, 0)$. Then the profile $((x^1,4), (x^2,3))$  with $x^1=1/3$ and $x^2=2/3$, is an NCNE. We omit the details.
\end{examp}

\section{The four-, five- and six-candidate cases}
\label{specialcases}

In the special cases of $m=4$ and $m=5$  we can provide a complete characterisation of the rules allowing NCNE.  For $m=6$ we can identify all types of possible equilibria.

\begin{thm}
\label{4cand} 
Given $m=4$ and scoring rule $s=(s_1,s_2,s_3,s_4)$, NCNE exist if and only if both the following conditions are satisfied:
\begin{enumerate}
\item[(a)] $c(s,4)>1/2$,
\item[(b)] $s_1>s_2=s_3$.
\end{enumerate}
Moreover, the NCNE is unique and symmetric, with equilibrium profile $x=((x^1,2),(x^2,2))$, where
\begin{equation}
\label{candidatepositions} 
x^1=\frac{1}{4}\left(\frac{s_1-s_4}{s_1-s_2}\right) \mbox{\quad and \quad} x^2=1-x^1.
\end{equation} 
\end{thm}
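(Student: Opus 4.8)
The plan is to first pin down the only admissible NCNE type, then extract the cluster geometry from the ``balance'' constraints forced on paired end-clusters, and finally check that no profitable deviation remains. Since in any NCNE each extreme position carries at least two candidates (Cox's observation, sharpened in Lemma~\ref{generalprops1}) and $m=4$, the only possible type is $(2,2)$: two occupied positions $x^1<x^2$ with two candidates apiece. Necessity of the stated conditions is then almost immediate from earlier results. If $s_1=s_2$, then $s_1=\cdots=s_k$ with $k\ge 2=\lfloor m/2\rfloor$, so Corollary~\ref{generalprops2} forbids NCNE; hence $s_1>s_2$. Since $n_1=n_q=2$, Lemma~\ref{45cand3} forces $s_2=s_{m-1}=s_3$, giving (b). Finally, Corollary~\ref{dichotomy} shows that $c(s,4)\le 1/2$ admits no NCNE, so (a) is necessary as well.

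For the converse I would assume (a) and (b), so that $s=(s_1,s_2,s_2,s_4)$ with $s_1>s_2\ge s_4$ and $s_1+s_4>2s_2$. The crux is to determine the cluster locations. Because each end cluster has exactly two candidates, Lemma~\ref{45cand2} forces, in any NCNE, both equalities $v_1(x^{1-},x_{-1})=v_1(x)$ and $v_q(x^{2+},x_{-q})=v_q(x)$. Writing out these two scores explicitly (using $s_2=s_3$) produces two linear equations in $x^1$ and $x^2$. Subtracting them, I expect the common factor $(s_1-s_4)/2>0$ to divide out and leave $x^1+x^2=1$, forcing the symmetry $x^2=1-x^1$; substituting back then gives the single value $x^1=\tfrac14\bigl(\tfrac{s_1-s_4}{s_1-s_2}\bigr)$, which yields uniqueness and matches \eqref{candidatepositions}. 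One checks $x^1>0$ trivially, while $x^1<1/2$ rearranges to $s_1+s_4>2s_2$, i.e.\ to exactly condition (a); this is precisely where the strictness of (a) (rather than $c(s,4)=1/2$) is used to keep the two positions distinct.

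It remains to verify sufficiency at this profile. By Propositions~\ref{insidetheinterval1} and~\ref{insidetheinterval2}, the deviation score of the candidate at $x^1$ is increasing on $(0,x^1)$, flat on $(x^1,x^2)$ (the interior slope is $(s_2-s_3)/2=0$), and decreasing on $(x^2,1)$; the balance equalities above make every one-sided limit at an occupied position equal to $v_1(x)$, so no move into an interval helps. The only remaining move is to the opposite cluster, and a short computation reduces $v_1(x^2,x_{-1})\le v_1(x)$ once more to $2s_2\le s_1+s_4$, again condition (a). I expect the main obstacle to be the middle step: showing that the two linear balance equations are not merely necessary but actually pin down both the symmetry $x^2=1-x^1$ and a unique $x^1\in(0,1/2)$ under (a); the concluding sufficiency check is routine once the piecewise-linear slope structure is in hand. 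Alternatively, once symmetry is established one may simply invoke Theorem~\ref{bipositional}, whose admissible interval \eqref{bipositional2} collapses to the single point above precisely because $s_2=s_3$.
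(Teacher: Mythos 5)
Your proposal is correct and follows essentially the same route as the paper: Lemma~\ref{generalprops1} and Lemma~\ref{45cand3} pin down the type $(2,2)$ and condition (b), Lemma~\ref{45cand2} supplies the balance equalities from which symmetry and formula \eqref{candidatepositions} follow (your subtraction of the two end-cluster equalities does work out, leaving $\tfrac{s_1-s_4}{2}(x^1+x^2-1)=0$, a valid variant of the paper's equality-plus-inequality argument), and closing sufficiency via the collapsed interval of Theorem~\ref{bipositional} is exactly the paper's final step. The only cosmetic differences are your appeal to Corollary~\ref{dichotomy} for the necessity of (a)---the paper instead reads it off from the validity requirement $x^1<1/2$---and your optional direct verification of sufficiency via the slope structure, both of which are sound.
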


\begin{proof} By Lemma~\ref{generalprops1}, an NCNE with $m=4$ must have exactly two distinct positions, $x^1<x^2$, with $n_1=n_2=2$. Hence, by Lemma \ref{45cand3}, it is  necessary that $s_2=s_3$. By Lemma \ref{generalprops1}, we also need $s_1>s_2$. Hence, (b) is necessary.

By Lemma \ref{noextremepositions}, we have $0<x^1$ and $x^2< 1$. By Lemma \ref{45cand2},  in NCNE we have
$v_1(x^{2+},x_{-1})\leq v_1(x^{1-},x_{-1})=v_1(x)$. That is,
\begin{align*} 
v_1(x^{2+},x_{-1})&=s_1(1-x^2)+s_2\left(\frac{x^2-x^1}{2}\right)+s_4\left(\frac{x^1+x^2}{2}\right)\\ 
            &\leq s_1x^1+s_2\left(\frac{x^2-x^1}{2}\right)+s_4\left(1-\frac{x^1+x^2}{2}\right) =v_1(x^{1-},x_{-1}), 
\end{align*}
which implies $s_1(1-x^1-x^2)\leq s_4(1-x^1-x^2)$. This is only possible if $x^1+x^2\geq 1$. Considering the symmetric moves by candidate 4 gives $(1-x^1)+(1-x^2)\geq 1$ or $x^1+x^2\leq 1$. Hence, $x^2=1-x^1$. 

Then, since $v_1(x^{1-},x_{-1})=v_1(x)$, we have
$$ s_1x^1+s_2\left(\frac{x^2-x^1}{2}\right)+\frac{1}{2}s_4 = \frac{1}{4}s_1+\frac{1}{2}s_2+\frac{1}{4}s_4,$$
from which, after substituting for $x^2$, equation \eqref{candidatepositions} follows. For this to be a valid position, we need $x^1<1/2$, from which it follows that $2s_2<s_1+s_4$. This is equivalent to $c(s,4)>1/2$, so (a) is necessary.

For sufficiency, notice that a rule satisfying conditions (a) and (b) also satisfies the conditions of Theorem \ref{bipositional}, from which we conclude that the profile given by \eqref{candidatepositions} is actually in NCNE. \end{proof}

For the five-candidate case, there are two ways to partition the candidates that might result in an NCNE. As it turns out, one of them is not possible.

\begin{lemma}\label{5cand1} For $m=5$, there are no NCNE of the form $x=((x^1,2),(x^2,3))$.\end{lemma}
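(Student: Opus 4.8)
The plan is to suppose such a profile $x=((x^1,2),(x^2,3))$ is an NCNE and force a contradiction. First I would pin down the rule. Since the left cluster carries only $n_1=2$ candidates, Lemma~\ref{generalprops1} gives $s_1>s_2$ and Lemma~\ref{45cand3} gives $s_2=s_{m-1}=s_4$; as the scores are nonincreasing this forces $s_2=s_3=s_4$. So the only candidates for a counterexample have the form $s=(p,c,c,c,r)$ with $p>c\ge r\ge0$ (note $p>r$). By Lemma~\ref{noextremepositions} we have $0<x^1<x^2<1$, and I would abbreviate $a=(x^1+x^2)/2$ (the mass left of the midpoint), $b=1-a$, and $\delta=(x^2-x^1)/2>0$, so that $x^1=a-\delta$ and $1-x^2=b-\delta$.

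I would then record the equilibrium scores, obtained by splitting $[0,1]$ at the midpoint: $v_{x^1}=\tfrac12(p+c)a+\tfrac12(c+r)b$ and $v_{x^2}=\tfrac13(2c+r)a+\tfrac13(p+2c)b$. The crux is to show the configuration must be symmetric, i.e. $a=1/2$, which I would extract from two discontinuity moves evaluated via Propositions~\ref{insidetheinterval1} and~\ref{insidetheinterval2}. Moving a left candidate to $x^{2+}$ gives score $ra+pb+\delta(c-p)$; combining the NCNE requirement that this not exceed $v_{x^1}$ with the identity $v_1(x^{1+})=v_1(x)$ supplied by Lemma~\ref{45cand2} (applicable since $n_1=2$), namely $c+\delta(p-c)=v_{x^1}$, the inequality simplifies to $(p-r)(1-2a)\le0$, hence $a\ge1/2$. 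Symmetrically, moving one of the three right candidates onto the left cluster gives score $\tfrac13(p+2c)a+\tfrac13(2c+r)b$, and the condition that this not exceed $v_{x^2}$ collapses to $(a-b)(p-r)\le0$, hence $a\le1/2$. Together these give $a=b=1/2$, so the two positions are symmetric about $1/2$.

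Finally I would exploit the unequal cluster sizes at this symmetric configuration. With $a=1/2$, the identity $c+\delta(p-c)=v_{x^1}$ yields $\delta=(p+r-2c)/\bigl(4(p-c)\bigr)$, so $\delta>0$ forces $p+r>2c$. On the other hand, a right candidate moving infinitesimally into the gap $(x^1,x^2)$ — where by Proposition~\ref{insidetheinterval2} her score is flat of slope $\tfrac12(s_3-s_3)=0$ — attains score $c+\delta(p-c)$, which by the identity above equals $v_{x^1}$; NCNE requires this not exceed $v_{x^2}$, and with $a=1/2$ that inequality reduces to $p+r\le2c$. The two conclusions $p+r>2c$ and $p+r\le2c$ contradict one another, so no such NCNE exists.

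The main obstacle, and the step needing the most care, is the bookkeeping of how the moving candidate is ranked across the subintervals immediately after each discontinuous move — in particular getting the shared ranks right at $x^{2+}$ and inside the gap, where she overtakes (or slips behind) her former cluster-mates. Everything beyond that is linear algebra in $a$, $\delta$ and the scores, and once symmetry $a=1/2$ is in hand the contradiction is automatic: it is exactly the mismatch $2\ne3$ in cluster sizes that makes the gap-move bound $p+r\le2c$ clash with the positivity $\delta>0$.
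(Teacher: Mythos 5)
Your proof is correct, but it takes a genuinely different route from the paper's. Both arguments begin the same way, using Lemma~\ref{generalprops1} and Lemma~\ref{45cand3} to pin the rule down to the form $(p,c,c,c,r)$ with $p>c\ge r$. But at that point the paper simply normalises $s_5=0$, observes that the rule is then one of the family $(a,b,\ldots,b,0)$ treated in Subsection~\ref{rulesabbb0}, and invokes that machinery (Corollary~\ref{ale2b} for $a\le 2b$, Lemma~\ref{n1=2} for $a>2b$) to conclude that any NCNE for such a rule has exactly two candidates at each extreme position, which immediately rules out the cluster of three. You instead give a self-contained five-candidate computation that never touches Section~\ref{ruleswithNCNE}: you first force symmetry $a=1/2$ from two deviations (a left candidate to $x^{2+}$, combined with the Lemma~\ref{45cand2} identity $v_{x^1}=c+\delta(p-c)$, gives $a\ge 1/2$; a right candidate relocating onto the left cluster gives $a\le 1/2$), and then obtain a clean contradiction between $\delta>0$, which forces $p+r>2c$, and the gap deviation, which forces $p+r\le 2c$. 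I checked your three deviation scores ($ra+pb+\delta(c-p)$ for the move to $x^{2+}$, $\tfrac13(p+2c)a+\tfrac13(2c+r)b$ for joining the left cluster, and the constant $c+\delta(p-c)$ in the gap) and the resulting algebra; all are right, and each inequality you use is a legitimate necessary condition for equilibrium via Propositions~\ref{insidetheinterval1} and~\ref{insidetheinterval2}. The trade-off: the paper's proof is shorter and recycles results that hold for all $m$ (so it also explains why the analogous partitions fail for $m=6$ in Theorem~\ref{6cand2}), while yours is elementary, independent of the $(a,b,\ldots,b,0)$ analysis, and yields extra structural information---any putative $(2,3)$-equilibrium would have to be symmetric about $1/2$, and it dies precisely because of the $2\ne 3$ cluster-size mismatch.
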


\begin{proof} First note that, without loss of generality, we can assume $s_5=0$, since it is easy to see that subtracting $s_5$ from each score does not change the rule.  
By Lemma \ref{45cand3} we have $s_2=s_3=s_4$ and by Lemma \ref{generalprops1} we have $s_1>s_2$. Hence our rule is one of those studied in Subsection~\ref{rulesabbb0}. But then by Lemma~\ref{n1=2} we cannot have three candidates at position $x^2$.
\end{proof}

%In the next proof, for notational simplicity, we will write $x^{(1)}$ instead of $x'$, $x^{(2)}$ instead of $x''$, etcetera.

\begin{thm}\label{5cand2} Given $m=5$ and scoring rule $s=(s_1,s_2,s_3,s_4,s_5)$, NCNE exist if and only if both the following conditions are satisfied:
\begin{enumerate}
\item[(a)] $c(s,5)>1/2$,
\item[(b)] $s_1>s_2=s_3=s_4$.
\end{enumerate}
Moreover, the NCNE is unique and symmetric, with equilibrium profile $x=((x^1,2),(1/2,1),(x^3,2))$, where
\begin{equation}\label{5candpositions} x^1=\frac{1}{6}\left(\frac{s_1+s_2}{s_1-s_2}\right) \mbox{\quad and \quad} x^3=1-x^1.\end{equation}
\end{thm}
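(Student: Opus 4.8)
The plan is to follow the same two-part structure used in the proof of Theorem~\ref{4cand}: first establish necessity of conditions (a) and (b) using the general lemmas, then pin down the unique symmetric profile and verify sufficiency. By the remarks preceding Lemma~\ref{5cand1}, for $m=5$ the only candidate partitions that could yield an NCNE are the types $(2,1,2)$ and $(2,3)$, and Lemma~\ref{5cand1} has just eliminated the latter. So any NCNE must have type $(2,1,2)$, i.e.\ two candidates at the extreme left position $x^1$, a single candidate at a middle position $x^2$, and two candidates at the right position $x^3$. Since both end positions have $n_1=n_q=2$, Lemma~\ref{45cand3} forces $s_2=s_{m-1}=s_4$, and combined with $s_2\ge s_3\ge s_4$ this gives $s_2=s_3=s_4$; Lemma~\ref{generalprops1} gives $s_1>s_2$. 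This establishes (b). As in Lemma~\ref{5cand1} I would normalise $s_5=0$ at the outset.

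Next I would locate the positions. Lemma~\ref{noextremepositions} gives $0<x^1$ and $x^3<1$. The unpaired middle candidate must sit at the midpoint $x^2=1/2$: since $s_2=s_3=s_4$ the middle candidate's score as a function of her position is, by Propositions~\ref{insidetheinterval1} and~\ref{insidetheinterval2}, piecewise linear with slopes $(s_1-s_5)/2>0$ on $(0,x^1)$, then $(s_2-s_4)/2=0$ on $(x^1,x^3)$, then $-(s_1-s_5)/2<0$ on $(x^3,1)$, so she is indifferent across the whole middle interval but strictly prefers not to move outward; for the two outer clusters not to be able to profit I expect the configuration to be forced symmetric. To get symmetry rigorously I would apply Lemma~\ref{45cand2} to candidate~1 (who is paired, $n_1=2$), giving $v_1(x^{3+},x_{-1})\le v_1(x^{1-},x_{-1})=v_1(x)$, and compare the two one-sided scores exactly as in Theorem~\ref{4cand}; the $s_1$ versus $s_5$ terms will again collapse to an inequality forcing $x^1+x^3\ge 1$, and the mirror-image move by candidate~5 forces $x^1+x^3\le 1$, whence $x^3=1-x^1$ and $x^2=1/2$ is consistent. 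Then setting $v_1(x^{1-},x_{-1})=v_1(x)=\bar s$ and solving the resulting linear equation in $x^1$ (using $s_2=s_3=s_4$, $s_5=0$) should yield the stated value $x^1=\frac{1}{6}\cdot\frac{s_1+s_2}{s_1-s_2}$; the requirement $x^1<1/2$ then rearranges to $2s_2<s_1+s_5$, which is exactly $c(s,5)>1/2$, giving (a).

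For sufficiency I would verify directly that under (a) and (b) the profile $x=((x^1,2),(1/2,1),(x^3,2))$ with the positions in \eqref{5candpositions} is an NCNE, by checking every candidate has no profitable deviation. For the paired end candidates the relevant moves are to $x^{1-}$, into the middle interval (dominated by the move to $x^{2}$ or to the far cluster, by the slope computation from Proposition~\ref{insidetheinterval2}), to $x^2=1/2$ (joining the lone candidate), and to the opposite cluster; the chosen $x^1$ makes $v_1(x^{1-},x_{-1})=v_1(x)$ by construction, and the remaining moves I expect to be no better. For the middle candidate, the slope analysis above already shows indifference on $(x^1,x^3)$ and a strict loss from moving outward, and a move into an existing cluster must be checked not to help. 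The main obstacle I anticipate is the bookkeeping for the lone middle candidate's deviations to the occupied end positions: joining a cluster of two changes the sharing denominator and the share of the extreme intervals, so I would need to confirm that the averaged score she would obtain there does not exceed $\bar s$; once $s_2=s_3=s_4$ and $s_5=0$ are imposed this reduces to a short inequality that should follow from $s_1>s_2$ and $x^1<1/2$.
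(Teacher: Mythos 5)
Your overall skeleton (normalise $s_5=0$, reduce to type $(2,1,2)$ via Lemma~\ref{5cand1}, get (b) from Lemmas~\ref{45cand3} and~\ref{generalprops1}, force $x^3=1-x^1$ by comparing candidate 1's moves to $x^{1-}$ and $x^{3+}$ via Lemma~\ref{45cand2}, then verify sufficiency) matches the paper's, but two steps in your necessity argument are genuine gaps. First, you never prove $x^2=1/2$. The middle candidate's own indifference across $(x^1,x^3)$ cannot pin her position down, and you concede this by saying only that $x^2=1/2$ ``is consistent''; since the theorem asserts uniqueness, you must show the midpoint is forced. The paper does this using the \emph{end} candidates: for $t\in(x^1,x^2)$ and $t'\in(x^2,x^3)$, Lemma~\ref{45cand2} and Proposition~\ref{insidetheinterval2} give $v_1(t,x_{-1})=v_1(x)$, so NCNE requires $v_1(t',x_{-1})\le v_1(t,x_{-1})$, i.e.\ $s_1\tfrac{x^3-x^2}{2}+s_2\bigl(1-\tfrac{x^3-x^2}{2}\bigr)\le s_1\tfrac{x^2-x^1}{2}+s_2\bigl(1-\tfrac{x^2-x^1}{2}\bigr)$, which, since $s_1>s_2$, forces $x^3-x^2\le x^2-x^1$; the mirror argument for candidate 5 gives the reverse inequality, hence $x^2=\tfrac{1}{2}(x^1+x^3)=1/2$.

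Second, your equation for $x^1$ is false: in a type-$(2,1,2)$ profile the five candidates do \emph{not} share the total score equally, so $v_1(x)\ne\bar s$ in general. (Under plurality the middle candidate gets $1/3$ while each end candidate gets $1/6$; in general $v_1(x)=\tfrac{1}{6}(s_1+4s_2)$ at the equilibrium, whereas $\bar s=\tfrac{1}{5}(s_1+3s_2)$, and these agree only when $s_1=2s_2$.) Solving your equation $v_1(x^{1-},x_{-1})=\bar s$ yields $x^1=\frac{2s_1+s_2}{10(s_1-s_2)}$, not the stated value --- for plurality it gives $1/5$ instead of the correct $1/6$. Equal scores for all candidates is a feature of the fully symmetric configurations in Theorems~\ref{4cand} and~\ref{bipositional} and does not transfer here. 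The correct equation, used in the paper, is $v_1(x^{1-},x_{-1})=v_1(x^{1+},x_{-1})$, both being equal to $v_1(x)$ by Lemma~\ref{45cand2}; after substituting $x^2=1/2$ and $x^3=1-x^1$ this gives \eqref{5candpositions}. (Your consistency check that $x^1<1/2$ is equivalent to $s_1>2s_2$, i.e.\ condition (a), happens to come out the same under either formula, which may have masked the error.) Your sufficiency outline is fine in spirit, but the deviations you defer --- the end candidates' moves to $x^2$ and to $x^3$, and the middle candidate's moves to $x^{1-}$ and to $x^1$ --- must be checked explicitly, as the paper does.
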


\begin{proof} As in Lemma \ref{5cand1}, we lose no generality in assuming $s_5=0$. 
By Lemma \ref{5cand1} and \ref{generalprops1}, the profile must be of the form $x=((x^1,2),(x^2,1),(x^3,2))$. Also, $s_2=s_3=s_4$ by Lemma \ref{45cand3} and $s_1>s_2$ by Lemma \ref{generalprops1}, so condition (b) is necessary. By Lemma \ref{noextremepositions}, the end points of the issue space are not occupied.

As in the proof of Theorem~\ref{4cand}, considering moves by candidate 1 to $x^{1-}$ and $x^{3+}$, together with Lemma \ref{45cand2}, gives $x^1 \geq 1-x^3$. Similar considerations for candidate 5 give $x^1\leq 1-x^3$, hence $x^1=1-x^3$.

Let  $t\in(x^1,x^2)$ and $t'\in (x^2,x^3)$  (all positions in these intervals yield the same score by Proposition~\ref{insidetheinterval2}). Again by Lemma~\ref{45cand2}, we need
\begin{align*} 
v_1(t',x_{-1})&=s_1\left(\frac{x^3-x^2}{2}\right)+s_2\left(1-\frac{x^3-x^2}{2}\right)\\
&\leq v_1(x)=v_1(t,x_{-1})=s_1\left(\frac{x^2-x^1}{2}\right)+s_2\left(1-\frac{x^2-x^1}{2}\right),
\end{align*}
which implies $x^3-x^2\leq x^2-x^1$ and hence $\frac{1}{2}(x^1+x^3)\leq x^2$. The same considerations with respect to candidate~$5$ give that $x^2\leq \frac{1}{2}(x^1+x^3)$. So we have equality and, consequently, $x^2=1/2$.

We know that $v_1(x^{1-},x_{-1})=v_1(x)=v_1(x^{1+},x_{-1})$. This yields
$$ 
s_1x^1\!+s_2\!\left(\frac{x^3-x^1}{2}\right)\!=s_1\!\left(\frac{x^2-x^1}{2}\right)\!+s_2\!\left(1-\frac{x^2-x^1}{2}\right),
$$
from which, after substituting $x^2=1/2$ and $x^3=1-x^1$, equation \eqref{5candpositions} follows. For this to be a valid position, we need $x^1<x^2=1/2$. This gives $s_1>2s_2$, which is equivalent to $c(s,5)>1/2$, so condition (a) is necessary.

Now sufficiency. Suppose (a) and (b) are satisfied. Note that neither candidate~$1$ nor candidate~$5$ can move into $[0,x^1)$, $(x^1,1/2)$, $(1/2,x^3)$ or $(x^3,1]$ beneficially. Again, this follows by symmetry and by Lemma \ref{45cand2}. We check the remaining possibilities.
As
$$ 
v_1(x^2,x_{-1})=\frac{1}{6}s_1+\frac{2}{3}s_2 = v_1(x),
$$
the move by candidate~1 to $x^2$ is not beneficial. Also %If $x^{(5)}$ is where 1 moves to $x^3$ then this move is not beneficial since
$$
v_1(x^3,x_{-1})=\frac{1}{6}(s_1-s_2)x^1+\frac{1}{12}s_1+\frac{3}{4}s_2=\frac{1}{9}s_1+\frac{7}{9}s_2\leq \frac{1}{6}s_1+\frac{2}{3}s_2=v_1(x),
$$
so there is no reason for candidate~1 to move to $x^3$. By symmetry, then, no moves by candidate~5 are beneficial.

Finally, consider moves by candidate~3. Any move inside the  interval $(x^1,x^3)$ does not change her score. A move to a position infinitesimally to the left of $x^1$ gives the requirement
$$
v_3(x^{1-},x_{-3})=\frac{1}{6}s_1+\frac{2}{3}s_2\leq \frac{1}{3}(s_1+s_2)=v_3(x),
$$
which is satisfied. Finally, if candidate~3 moves to $x^1$ her score is
$$v_3(x^1,x_{-3})=\frac{1}{6}s_1+\frac{2}{3}s_2<v_3(x).$$
By symmetry, then, no moves are beneficial for this candidate.
There are no more moves to consider, hence this is an NCNE. \end{proof}

Thus, in both the four- and five-candidate cases, we see that NCNE exist only for a subset of best-rewarding rules -- those for which all scores except first and last are equally valuable. In both cases, the amount of dispersion observed in the candidates' positions depends on the difference between $s_1$ and $s_2$ and is maximal when $s_2=0$, that is, when the rule is plurality.

As $s_2$ grows towards $s_1/2$, the positions of candidates become less extreme, converging at the median voter position when $s_2=s_1/2$. As $s_2$ increases beyond this point, by Theorem \ref{CNE} we know that infinitely  many CNE are possible in an interval that becomes increasingly wide. Hence, there is a
bifurcation point that divides CNE from NCNE when $c(s,4)=1/2$ or $c(s,5)=1/2$. As we move away from this point, more extreme positions are possible -- on one side they take the form of CNE, and on the other side they are NCNE.\footnote{It is curious to note that this equilibrium behaviour shows certain similarities to some of the equilibria numerically calculated by De Palma et al. \cite{depalmahongthisse} using a probabilistic model. When the level of uncertainty is low (which corresponds roughly to when $c(s,m)$ is large), they observe NCNE where the candidates are configured as in our NCNE. As the level of uncertainty increases from zero (the value of $c(s,m)$ decreases), they also observe the candidates' positions becoming less extreme. Beyond a certain point, only convergent equilibria are observed. We note, however, that in addition to these, they also observe other kinds of equilibria that do not arise in our model.}

Since for $m>5$ the equilibria are no longer unique even for plurality \cite{eatonlipsey}, it makes sense to describe only their types.

\begin{thm}\label{6cand2} Given $m=6$ and scoring rule $s=(s_1,s_2,s_3,s_4,s_5,s_6)$.  Then there are four possible types of equilibria split in two groups:
\[
\{(2,2,2),\ (2,1,1,2)\}\ \text{and}\ \{ (3,3), (6)\}.
\]
The equilibria of the first group occur for rules $s$ that satisfy
\begin{enumerate}
\item[(a)] $c(s,6)>1/2$,
\item[(b)] $s_1>s_2=s_3=s_4=s_5$.
\end{enumerate}
The equilibria within each group can coexist. No equilibrium of the first group can coexist with an equilibrium of the second group.
\end{thm}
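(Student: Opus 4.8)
The plan is to reduce, via the end-position bounds and pure combinatorics, the six-candidate problem to a short fixed list of types, then sort that list into the two groups with the structural lemmas and read off the characterisation and the (non-)coexistence claims. Since $m=6$, Corollary~\ref{generalprops2} excludes any rule with $s_1=s_2=s_3$, so in an NCNE either $s_1>s_2$, giving $\min(n_1,n_q)\ge 2$ by Lemma~\ref{generalprops1}, or $s_1=s_2>s_3$, giving $\min(n_1,n_q)\ge 3$ and hence type $(3,3)$. Writing a type as a composition of $6$ with end blocks of size $\ge 2$ and interior blocks of size $\ge 1$, and identifying a type with its left--right reflection, the first case leaves only $(3,3)$, $(2,2,2)$, $(2,1,1,2)$, $(2,4)$ and $(2,1,3)$, while the convergent type $(6)$ is the only remaining NE.

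Next I would split this list by the size of the end blocks. An end block of exactly two candidates triggers Lemma~\ref{45cand3}, forcing $s_2=s_5$ and hence $s_2=s_3=s_4=s_5$; normalising $s_6=0$ the rule becomes $s=(a,b,\ldots,b,0)$. This covers every listed type except $(3,3)$, and for such rules Theorem~\ref{nomorethantwo} caps each occupied position at two candidates, eliminating $(2,4)$ and $(2,1,3)$ and leaving exactly the first group $\{(2,2,2),(2,1,1,2)\}$; in particular $(2,4)$ and $(2,1,3)$ are realizable for no rule at all, so precisely four types survive. To characterise the first group I would then invoke Proposition~\ref{propabbb} and Corollary~\ref{ale2b}: a rule $(a,b,\ldots,b,0)$ can carry an NCNE only if $a>2b$, which is exactly $c(s,6)>1/2$ (condition~(a)), while condition~(b) is the form of the rule together with $s_1>s_2$ supplied by Lemma~\ref{generalprops1}.

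For sufficiency I would exhibit the equilibria directly, placing the three (respectively four) blocks symmetrically and checking via Lemma~\ref{n1=2} and Proposition~\ref{insidetheinterval2} that no deviation pays; plurality already realizes both first-group types simultaneously, giving within-group coexistence. In the second group, $(6)$ is a CNE characterised by Theorem~\ref{CNE} and thus requiring $c(s,6)\le 1/2$, whereas $(3,3)$ is bipositional, its symmetric instances covered by Theorem~\ref{bipositional}, with Example~\ref{bipositionalexample} exhibiting rules where a $(3,3)$ equilibrium and a CNE coexist. For the disjointness, if a rule satisfies~(a) and~(b) then $c(s,6)>1/2$ forbids a CNE by Theorem~\ref{CNE} and the form $(a,b,\ldots,b,0)$ forbids $(3,3)$ by Theorem~\ref{nomorethantwo}; conversely any rule carrying a second-group equilibrium either has a CNE (so $c(s,6)\le 1/2$, failing~(a)) or a $(3,3)$ equilibrium (so it is not of the form in~(b), since such rules admit at most two candidates per position), and in either case it cannot carry a first-group equilibrium.

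The enumeration and the disjointness fall straight out of the cited lemmas and are the clean parts; the real work is the sufficiency/existence of the first-group equilibria for the entire family $(a,b,\ldots,b,0)$ with $a>2b$, since verifying that the symmetric $(2,2,2)$ and $(2,1,1,2)$ placements withstand every deviation for every such rule, rather than merely for plurality, requires the detailed score bookkeeping developed in Subsection~\ref{rulesabbb0}.
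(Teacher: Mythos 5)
Your argument is, in substance, the paper's own proof: the same reduction to compositions of $6$ with end blocks of size at least two (up to reflection), the same elimination of $(2,4)$ and $(2,1,3)$ by combining Lemma~\ref{45cand3}, Lemma~\ref{generalprops1} and Theorem~\ref{nomorethantwo}, the same route to conditions (a) and (b) through Corollary~\ref{ale2b} and Proposition~\ref{propabbb}, the same disjointness mechanism (condition (a) kills CNE via Theorem~\ref{CNE}, the form $(a,b,\ldots,b,0)$ kills $(3,3)$ via Theorem~\ref{nomorethantwo}), and the same witnesses for coexistence (plurality for the first group; Example~\ref{bipositionalexample}(i) for the second). In two places you are usefully more explicit than the paper: the preliminary case split via Corollary~\ref{generalprops2} (handling $s_1=s_2>s_3$, which forces $(3,3)$ directly), and the derivation of condition (a), which the paper leaves implicit in the phrase ``our rule is one of those studied in Subsection~\ref{rulesabbb0}''.

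One correction to your final paragraph, which misidentifies ``the real work''. The theorem claims only that first-group equilibria occur \emph{only} for rules satisfying (a) and (b), plus the stated existence/coexistence facts, for which citing plurality (Eaton--Lipsey) and Example~\ref{bipositionalexample}(i) suffices; it does not claim that \emph{every} rule $(a,b,\ldots,b,0)$ with $a>2b$ possesses $(2,2,2)$ and $(2,1,1,2)$ equilibria. Nor does Subsection~\ref{rulesabbb0} prove this: there, existence for the whole family is reported only as a computational finding, and the results actually proved (Theorem~\ref{nomorethantwo}, Lemma~\ref{n1=2}) are \emph{necessary} conditions on NCNE, not existence proofs. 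So deferring general sufficiency to ``the detailed score bookkeeping developed in Subsection~\ref{rulesabbb0}'' points at bookkeeping that is not there; if you wanted that stronger statement you would have to carry out the verification yourself (Lemma~\ref{45cand2} pins down the outer pairs' positions, and then every deviation, including moves onto occupied positions, must be checked for arbitrary $a>2b$). Note also that Lemma~\ref{n1=2} is itself a necessity result and cannot serve as a tool for verifying that a proposed profile is an equilibrium. As the theorem is actually stated, your proof is complete without this extra step.
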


\begin{proof} We have to show that equilibria of types
\begin{equation}
\label{falseeq}
(2,4),\ (2,1,3),\ (3,1,2),\ (4,2)
\end{equation}
do not exist. What they all have in common is that they have two candidates at one of the extreme positions. Suppose such an equilibrium exists.
As before we note that, without loss of generality, we can assume $s_6=0$.  
By Lemma \ref{45cand3} we have $s_2=s_3=s_4=s_5$ and by Lemma \ref{generalprops1} we have $s_1>s_2$. Hence, our rule is one of those studied in Subsection~\ref{rulesabbb0}. But then by Theorem~\ref{nomorethantwo} we cannot have three or more candidates at any given position. This rules out all equilibria \eqref{falseeq} and shows that equilibria of the first group are incompatible with equilibria of the second. 

Example~\ref{bipositionalexample}(i) demonstrates that equilibria of the third type exist and may coexist with CNE as the rule in this case can be worst punishing. Equilibria of the second type are shown to exist for plurality in \cite{eatonlipsey}.
\end{proof}

\section{Computational Results}
\label{comres}

We have developed an algorithm to determine whether an NCNE exists for
a given $m$-candidate scoring rule. The algorithm works by generating a
list of all possible clusterings of the candidates. For each clustering it
produces a linear program (LP). The variables of the LP are the political positions of
the clusters. The LP has the basic constraints to ensure that all of the
political positions are in order and between $0$ and $1$. Note that a
standard LP does not have strict inequalities, but we can maximise the
minimum distance between the positions so that we will find a solution
where clusters have distinct political positions, if such a solution
exists.

The non-trivial constraints are those used to ensure that the
candidates cannot improve their score by switching to a different
position. We expand the reals with political positions $x^-$ and $x^+$
that occur immediately before and after each variable $x$. For each
pair of variables $x$, $y$ we have three constraints requiring that
the score a candidate would gain from moving from $y$ to $x^+$, $x$,
or $x^-$ must be at most zero. We know that if a candidate can improve their
score, they can improve it by moving to one of these positions. While  $x^-$ (or $x^+$) is not a real number, we know that if a
candidate can improve their score by moving to $x^-$ they can also
improve their score by moving to $x-\epsilon$ for a sufficiently small
real number $\epsilon$.

This algorithm is not polynomial. Although LP solving can be
polynomial, the number of clusters considered is not polynomial. In spite of this, for the small sample problems considered in this paper the performance of the algorithm is close to instantaneous. The algorithm only demonstrates whether
a particular scoring rule gives an NCNE. We intend to adapt the
algorithm to use a Quadratic Constraint solver in place of the LP to
allow us to find whether a class of rules has an NCNE. The class of concave rules will be the prime target.

\section{Related Literature}
\label{relatedliterature}

The distinguishing features of this paper are the use of scoring rules and the focus on multicandidate nonconvergent equilibria. We now briefly describe the related literature with respect to these aspects.

In light of the probabilistic interpretation of the scoring rule mentioned in the introduction, we feel the need to emphasise the differences between our approach and the probabilistic voting literature, which is a huge subject in its own right. For an introduction to the field, see Coughlin \cite{coughlin} or, for a survey, Duggan \cite{duggan}. To the best of our knowledge, these works usually involve some distance dependent function to give the probabilities as, for example, in multinomial logit choice models. These functions lead to an expected vote share that depends on distance and does not have discontinuities when candidates' positions coincide. A scoring rule, on the other hand, is somewhat simpler: it behaves like a step function that only depends on ordinal information. Whether the second ranked candidate is just beyond the first ranked candidate or on the other side of the issue space has no bearing on the probability of the voter voting for the more distant candidate. Moreover, the discontinuities associated with the deterministic model persist in our model. Other models include the stochastic model of Anderson et al. \cite{andersonkatsthisse} and Enelow et al. \cite{linenelowdorrusen}. The former again involves a function depending on distance, and the latter involves a finite number of voters in a multidimensional space. De Palma et al. \cite{depalmahongthisse} look at a model incorporating uncertainty and numerically calculate equilibria for up to six candidates, comparing them with the deterministic model. Interestingly, some of the observed equilibria for four and five candidates show similarities to those we find in our model (see the footnote following Theorem \ref{5cand2}).

Scoring rules and similar voting systems have appeared in spatial models before.  However, apart from Cox \cite{cox1}, all the work has been done in somewhat different contexts.  Myerson \cite{myerson1} looks at the incentives inherent in different scoring rules and the political implications for such matters as corruption, barriers to entry and strategic voting. His model consists of a simpler issue space, with candidates deciding between two policy positions -- ``yes" or ``no". In \cite{myerson3}, Myerson compares various scoring rules with respect to the campaign promises they encourage candidates to make and, in \cite{myerson2}, he investigates scoring rules from the voter's perspective in Poisson voting games. Laslier and Maniquet \cite{lasliermaniquet} look at multicandidate elections under approval voting when the voters are strategic. Myerson and Weber \cite{myerson4} introduce the concept of a ``voting equilibrium", where voters take into account not only their personal preferences but also whether contenders are serious, and compare plurality and approval voting in a three-candidate positioning game similar to ours.  We focus only on candidate strategies.

As for the multicandidate aspect,  this paper is most closely related to Denzau et al. \cite{denzaukatsslutsky} and, before them, Eaton and Lipsey \cite{eatonlipsey}. The latter consider plurality rule, and the former extend these results to ``generalised rank functions''. That is, the candidates' objectives depend to some degree both on market share and rank (say, the number of candidates with a larger market share). For a review emphasising multicandidate competition, see Shepsle \cite{shepsle}.

Many other more realistic refinements of the Hotelling model have been constructed, incorporating uncertainty, incomplete information, incumbency and underdog effects, and so on. However, the price of added realism is that they are more complicated, and considering more than two candidates is often intractable. For a survey focusing on variations of the two candidate case, see Duggan \cite{duggan} or Osborne \cite{osborne2}.

As for the economic interpretation, our results are related only to those models in which price competition and transport costs do not come into play, such as in, again, Eaton and Lipsey \cite{eatonlipsey} and Denzau et al. \cite{denzaukatsslutsky}. The most basic model incorporating price competition is a game of two interdependent stages, the location selecting stage and the price setting stage. The details can be found in many standard economics textbooks, such as Vega-Redondo \cite[pp.~171--176]{vegaredondo}. 

\section{Conclusion}
\label{conclusion}

In this paper, we have investigated how the particular scoring rule in use influences the candidates' position-taking behaviour. We have looked at the equilibrium properties of a number of different classes of scoring rules. 
We were able to identify several broad classes of scoring rules disallowing NCNE completely. For other large classes, we found that NCNE can exist and we calculated a number of them. %We provided several characterisations for the special cases of elections with small numbers of candidates and for certain symmetric configurations of the candidates' ideological positions.

As Cox \cite{cox1} and Myerson \cite{myerson1} found previously, the parameter $c(s,m)$ plays a prominent role in determining what kind of equilibria are possible---as $c(s,m)$ increases, the amount of dispersion tends to increase also---though usually the value of $c(s,m)$ is not the only factor of importance. The manner in which the scores in the score vector are decreasing is pivotal---we saw, for example, that all rules with entirely convex scores and many with entirely concave scores fail to possess NCNE.  The conditions under which NCNE do exist can at times be quite stringent, as is made clear by the four- and five-candidate cases. %We have also seen that the existence NCNE can be fragile---in Example~\ref{convexscoresexample}, it was observed that a slight deviation from plurality rule can destroy the NCNE that plurality rule allows. 

In his investigation of CNE, Cox \cite{cox1} found that the value $c(s,m)=1/2$ appears as a cut-off point between existence and nonexistence of CNE: they exist if and only if $c(s,m)\leq 1/2$, i.e., when the rule is worst-punishing. When there are four or five candidates, a similar phenomenon occurs with respect to NCNE: only when $c(s,m)>1/2$ may NCNE exist, though, as mentioned above, it is not guaranteed. Thus, the two kinds of equilibrium are mutually exclusive. When there are six or more candidates, however, this transition from CNE to NCNE is no longer clear-cut. There exist worst-punishing and intermediate rules that allow both types of equilibria, as seen in Theorem~\ref{bipositional}.
 
A number of questions remain open. Though we have investigated a wide variety of scoring rules, these are by no means all of them. The equilibrium behaviour of many rules---most notably concave ones---remains unknown. On the other hand, many of the rules most frequently appearing in the literature---plurality, Borda, $k$-approval and so on---fall nicely into the cases we have considered.  
Another point of interest: most NCNE discovered in this paper by theoretical considerations have been ones in which the same number of candidates locate at each position.  The mechanics behind less regular or asymmetric equilibria obtained as a result of computational experiments remains unclear. It is interesting that a class of weakly concave rules can have only highly asymmetric equilibria. Hence a concave rule may have only asymmetric equilibria or none at all. At this point it is unknown whether any concave rules actually permit such asymmetric NCNE.%mentioned at the end of Section~\ref{WPnoNCNEsection}.

Of course, there are a number of simplifications in our framework in comparison to Cox's one. The main one is the assumption that the voters are uniformly distributed along the issue space, though this was partially justified in the first footnote of Section \ref{themodel}. Cox's \cite{cox1} characterisation of CNE holds for an arbitrary nonatomic distribution of voter ideal points. The existence of NCNE, however, is much more vulnerable to changes in the distribution. In a similar framework, Osborne \cite{osborne1} finds that, for plurality rule, the uniform distribution is a special case---``almost all" other distributions exclude the possibility of NCNE. It would be interesting to
see whether our results can be adapted to other distributions.
Another limiting assumption is the unidimensionality of the issue space. Cox \cite{cox1} provides a version of Theorem~\ref{CNE} for a multidimensional space. Whether our results on NCNE can be extended to this situation is also not known.

\end{document}